\newcommand{\rr}{\mathbb{R}}
\newcommand{\rrplus}{\rr^+}
\newcommand{\ep}{\epsilon}
\newcommand{\ra}{\rightarrow}
\newcommand{\floor}[1]{\left\lfloor #1 \right\rfloor}
\newcommand{\ceil}[1]{\left\lceil #1 \right\rceil}
\newcommand{\expect}[1]{\mathbb{E}\left[ #1 \right]}
\renewcommand{\bold}[1]{\textbf{#1}}
\newcommand{\parabold}[1]{\noindent\bold{#1}}
\newcommand{\hide}[1]{}
\newcommand{\interior}{\mathsf{int}}
\newcommand{\bbone}{\mathbf{1}}
\newcommand{\tG}{\mathsf{G}}
\newcommand{\tGinv}{\mathsf{G}^{-1}}
\newcommand{\hC}{\overline{C}}
\DeclareMathOperator*{\argmin}{arg\,min}
\DeclareMathOperator*{\argmax}{arg\,max}
\newcommand{\difft}[1]{\frac{\mathsf{d} #1}{\mathsf{d} t}}
\newenvironment{pf}{\begin{proof}[\emph{\textbf{Proof: }}]}{\end{proof}}
\newenvironment{pfof}[1]{\begin{proof}[\emph{\textbf{Proof of #1: }}]}{\end{proof}}
\newtheorem{theorem}{Theorem}
\newtheorem{lemma}[theorem]{Lemma}
\newtheorem{proposition}[theorem]{Proposition}
\newtheorem{definition}[theorem]{Definition}
\newcommand{\calC}{\mathcal{C}}
\newcommand{\calD}{\mathcal{D}}
\newcommand{\calE}{\mathcal{E}}
\newcommand{\calO}{\mathcal{O}}
\newcommand{\bbp}{\mathbf{p}}
\newcommand{\bbq}{\mathbf{q}}
\newcommand{\bbs}{\mathbf{s}}
\newcommand{\bbu}{\mathbf{u}}
\newcommand{\bbv}{\mathbf{v}}
\newcommand{\bbx}{\mathbf{x}}
\newcommand{\bby}{\mathbf{y}}
\newcommand{\bbz}{\mathbf{z}}
\newcommand{\bbA}{\mathbf{A}}
\newcommand{\bbB}{\mathbf{B}}
\newcommand{\bbI}{\mathbf{I}}
\newcommand{\bbJ}{\mathbf{J}}
\newcommand{\bbM}{\mathbf{M}}
\newcommand{\bbR}{\mathbf{R}}
\newcommand{\trans}{^{\mathsf{T}}}
\newcommand{\hbbx}{\hat{\bbx}}
\newcommand{\hbby}{\hat{\bby}}
\newcommand{\calEd}{\calE^{\delta}}
\newcommand{\vol}{\mathsf{vol}}
\newcommand{\hj}{\hat{j}}
\newcommand{\hk}{\hat{k}}
\title{Chaos, Extremism and Optimism:\\Volume Analysis of Learning in Games}
\author{
Yun Kuen Cheung\\
Singapore University of\\
Technology and Design
\and
Georgios Piliouras\\
Singapore University of\\
Technology and Design
}
\date{}
\begin{document}

\maketitle

\begin{abstract}
We present volume analyses of Multiplicative Weights Updates (MWU) and Optimistic Multiplicative Weights Updates (OMWU) in zero-sum as well as coordination games.
Such analyses provide new insights into these game dynamical systems, which seem hard to achieve via the classical techniques within Computer Science and Machine Learning.

The first step is to examine these dynamics not in their original space (simplex of actions) but in a dual space (aggregate payoff space of actions).
The second step is to explore how the volume of a set of initial conditions evolves over time when it is pushed forward according to the algorithm.
This is reminiscent of approaches in Evolutionary Game Theory where replicator dynamics, the continuous-time analogue of MWU, is known to always preserve volume in all games.
Interestingly, when we examine discrete-time dynamics, both the choice of the game and the choice of the algorithm play a critical role.
So whereas MWU expands volume in zero-sum games and is thus Lyapunov chaotic, we show that OMWU contracts volume, providing an alternative understanding for its known convergent behavior.
However, we also prove a no-free-lunch type of theorem, in the sense that when examining coordination games the roles are reversed:
OMWU expands volume exponentially fast, whereas MWU contracts.

Using these tools, we prove two novel, rather negative properties of MWU in zero-sum games:
\begin{enumerate}[leftmargin=0.2in]
\item Extremism: even in games with unique fully mixed Nash equilibrium, the system recurrently gets stuck near pure-strategy profiles,
despite them being clearly unstable from game theoretic perspective.
\item Unavoidability: given any set of \emph{good} points (with your own interpretation of ``good''), the system cannot avoid \emph{bad} points indefinitely.
\end{enumerate}
\end{abstract}

\begin{figure}[htp]
\begin{center}
\includegraphics[scale=0.4]{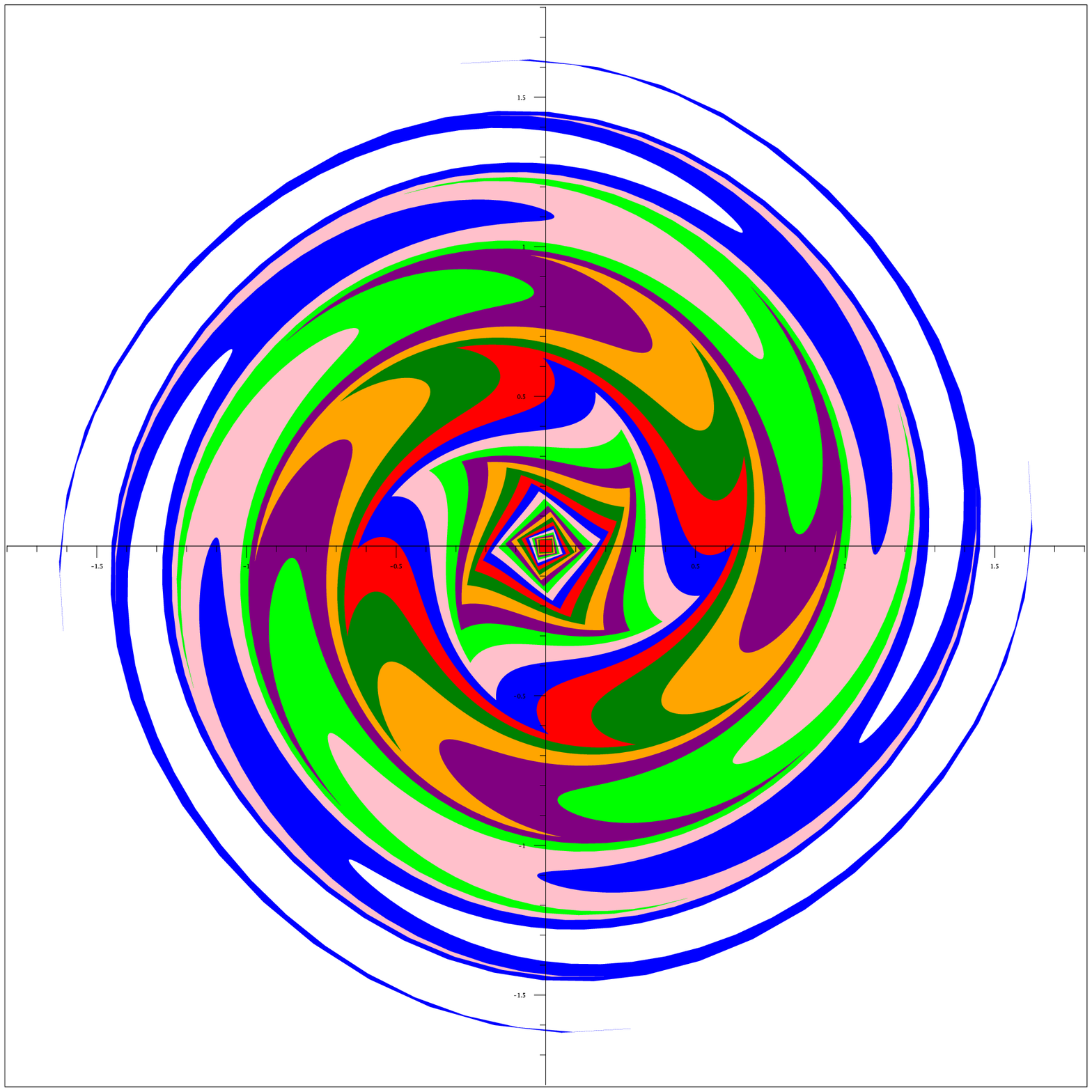}~
\includegraphics[scale=0.4]{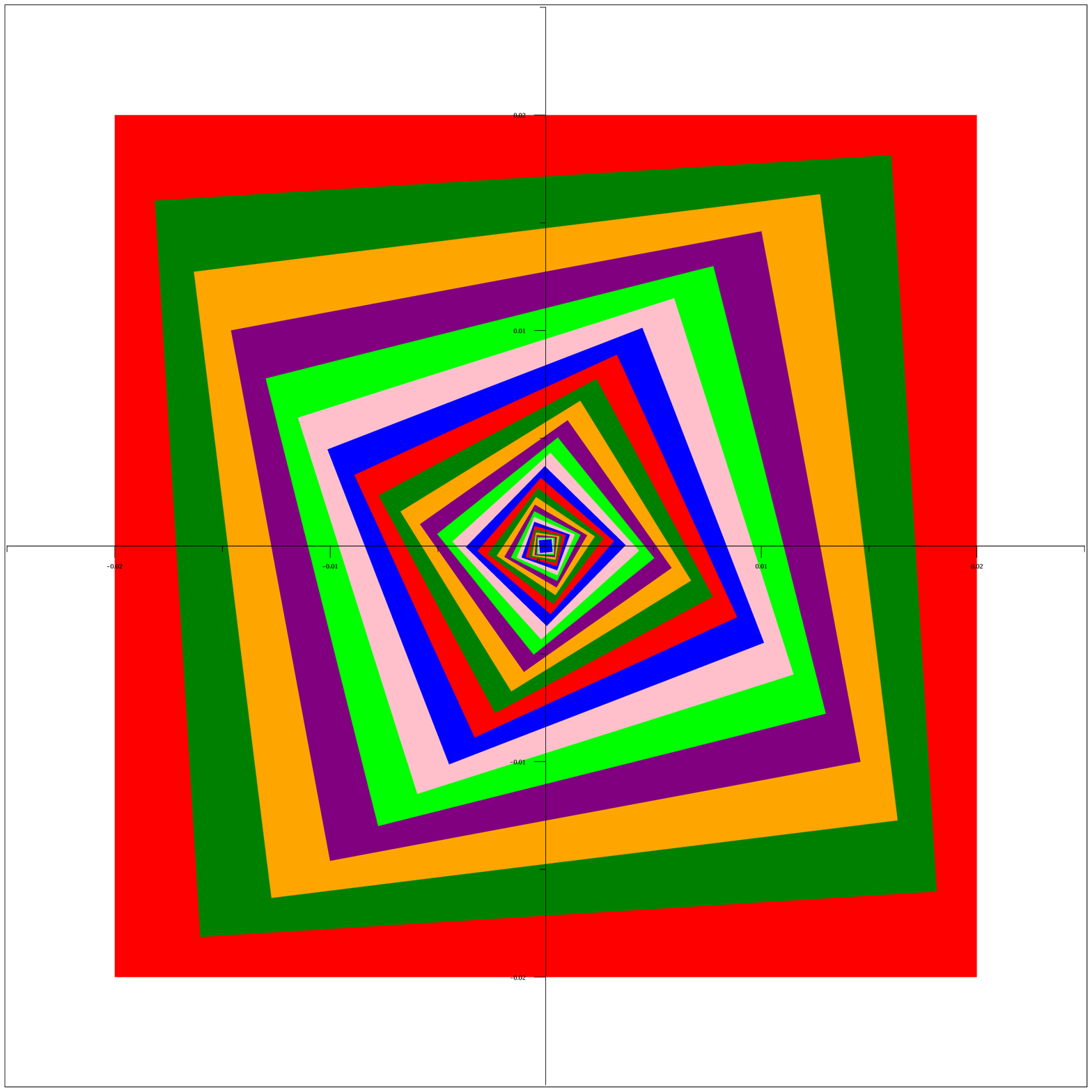}\\
\end{center}
\caption{Evolution of MWU (left) and OMWU (right) in Matching-Pennies game in the dual space. The origin is the unique Nash equilibrium. In this 2-D system, volume is area.
\underline{MWU:} The initial set is the tiny red square around the equilibrium. When the set is evolved via MWU, it rotates  tornado-like and its area increases.~\cite{CP2019}
\underline{OMWU:} The initial set is the outermost red square. When the set is evolved via OMWU, it shrinks toward the equilibrium and its area decreases.}\label{fig:tornado}
\end{figure}

\section{Introduction}\label{sect:intro}

In recent years, fuelled by AI applications such as Generative Adversarial Networks (GANs),
there has a been a strong push towards a more detailed understanding of the behavior of online learning dynamics in zero-sum games and beyond.
Even when focusing on the canonical case of bilinear zero-sum games, the emergent behavior depends critically on the choice of the training algorithms.
Results can macroscopically be grouped in three distinct categories: convergent, divergent and cyclic/recurrent. 
Specifically, most standard regret minimizing dynamics and online optimization dynamics, such as Multiplicative Weights Updates (MWU) or gradient descent~\cite{Cesa06},
although their time average converges~\cite{freund1999adaptive}, their day-to-day behavior diverges away from Nash equilibria~\cite{BP2018,Cheung2018}.
On the other hand, some game-theoretically inspired dynamics, such as Optimistic Multiplicative Weights Updates (OMWU) converge~\cite{DaskalakisP2019,daskalakis2018training}.
(Numerous other convergent heuristics have also been recently analyzed, e.g.~\cite{mertikopoulos2019optimistic,gidel2019a,gidel2019negative,balduzzi2018mechanics,2019arXiv190602027A}.)
Finally, if we simplify learning into continuous-time ordinary differential equations (ODEs), such as Replicator Dynamics, the continuous time analogue of MWU,
the emergent behavior becomes almost periodic (Poincar\'{e} recurrence)~\cite{PS2014,GeorgiosSODA18,boone2019darwin}.
%Informally speaking, every initial condition returns infinitely often arbitrarily close to itself. 
This level of complex case-by-case analysis just to understand bilinear zero-sum games seems daunting.
Can we find a more principled approach behind these results that is also applicable to more general games?

One candidate is \emph{volume analysis}, a commonly used tool in the area of Dynamical Systems.
Briefly speaking, what it does is to consider a set of starting points with positive volume (Lebesgue measure),
and analyze how the volume changes as the set evolves forward in time. %according to a numerical algorithm.
As we shall see, an advantage of volume analysis is its general applicability, for it can be used to analyze not just ODEs
but different discrete-time algorithms such as MWU and OMWU in different types of games.

In Evolutionary Game Theory, volume analysis has been applied to \emph{continuous-time} dynamical systems/ODEs (see~\cite[Sections 11.3 and 13.5]{HS1998} and~\cite[Chapter 9]{Sandholm10}).
Eshel and Akin~\cite{EA1983} showed that RD in any multi-player matrix game is volume preserving in the dual (aggregate payoff) space.
This result is in fact a critical step in the proof of Poincar\'{e} recurrence in zero-sum games.
Loosely speaking, if we think of the set of initial conditions as our uncertainty about where is the starting point of the system,
since uncertainty does not decrease (convergence) or increase (divergence) we end up cycling in space.
 
Cheung and Piliouras~\cite{CP2019} recently applied volume analysis to \emph{discrete-time} numerical algorithms in a series of games,
including two-person zero-sum games, graphical constant-sum games, generalized Rock-Paper-Scissors games and general $2\times 2$ bimatrix games.
Among other results, they showed that MWU in zero-sum games is \emph{Lyapunov chaotic} in the dual space.
This is done by showing that the volume of any set is expanding exponentially fast. 
Lyapunov chaos is one of the most classical notions in the area of Dynamical Systems that captures instability and unpredictability.
More precisely, it captures the following type of \emph{butterfly effect}: when the starting point of a dynamical system is slightly perturbed,
the resulting trajectories and final outcomes diverge quickly.
Lyapunov chaos means that such system is very sensitive to round-off errors in computer simulations; thus, the result of Cheung and Piliouras
provides a rigorous mathematical explanation to the numerical instability of MWU experiments.

\medskip

\parabold{Our Contributions, and Roadmap for This Paper.}
Our contributions can be summarized into two categories, both stemming from volume analyses.
First, besides the numerical instability and unpredictability already mentioned,
we discover two novel and negative properties of MWU in zero-sum games in this paper,
which are consequences of exponential volume expansion. We call them \emph{unavoidability} and \emph{extremism}.
We have given informal descriptions of these two properties in the abstract; we will give more details about them below.

Second, we carry out volume analysis on OMWU and discover that its volume-change behavior is in stark contrast with MWU.
To understand why we should be interested in such an analysis, we first point out that in the study of game dynamics,
a primary target is to seek algorithms that behave well in as broad family of games as possible.
Recently, OMWU was shown to achieve stability in zero-sum games, despite its strong similarity with MWU (which is chaotic in zero-sum games).
It is natural to ask how these stability behaviors generalize to other games.
We provide a negative answer, by proving that OMWU is volume-expanding and Lyapunov chaotic in coordination games; see Figure~\ref{fig:main} for a summary of this no-free-lunch phenomena.
We show that the volume is exponentially decreasing for OMWU in zero-sum game, mirroring with
the recent stability results~\cite{daskalakis2018training,DaskalakisP2019} in the original (primal) space (simplex of actions) about these game dynamics.
The details are presented in Section~\ref{sect:volume-contract}.

As unavoidability and extremism are shown via volume expansion argument, it is easy to generalize and show that these two properties also appear in OMWU in coordination games.
On a technical note, the volume analysis on OMWU is more involved than that on MWU. Along the process, we propose an ODE system which is the continuous-time analogue of OMWU in games,
and the volume analysis relies crucially on the fact that discrete-time OMWU is an \emph{online Euler discretization} of the ODE system; see Section~\ref{sect:cont-analog} for details.

\begin{figure}[htp]
\begin{center}
\begin{tabular}{ |c|c|c| } 
\hline
& Zero-sum Games & Coordination Games \\ 
\hline
MWU & $\mathbf{+}$~\cite{CP2019} & $-$ (Theorem~\ref{thm:MWU-volume-contract-coordination-discrete}) \\ 
OMWU & $-$ (Theorem~\ref{thm:OMD-volume-contract-zerosum-discrete}) & $\mathbf{+}$ (Theorem~\ref{thm:OMD-volume-expand-coordination-discrete}) \\ 
\hline
\end{tabular}
\end{center}
\caption{How volume changes in the dual space. ``$+$'' denotes exponential volume expansion, unavoidability and extremism, while ``$-$'' denotes exponential volume contraction.
See Figures~\ref{fig:tornado} and~\ref{fig:coor} for graphical illuminations.}
\label{fig:main}
\end{figure}

In Section~\ref{sect:volume-analysis-prelim}, we discuss how volume analyses can be carried out on learning algorithms that are \emph{gradual}, i.e.~controlled by a step-size $\ep$.
We demonstrate that volume analyses can often be boiled down to analyzing a polynomial of $\ep$ that arises from the expansion of the determinant of some \emph{Jacobian matrix}.
This convincingly indicates that volume analyses can be readily applicable to a broad family of learning algorithms.

In the rest of this introduction, we discuss extremism and unavoidability with more details.
Section~\ref{sect:prelim} contains the necessary background for this work. All missing proofs can be found in the appendix.

\medskip

\parabold{Extremism (Section~\ref{sect:extremism}).}
The more formal statement of our extremism theorem (Theorem~\ref{thm:extremism}) is: given any zero-sum game that satisfies mild regularity condition,
there is a dense set of starting points from which MWU will lead to a state where both players concentrate their game-plays on only one strategy.
More precisely, let $\bbx,\bby$ denote the mixed strategies of the two players.
For any $\delta > 0$, there is a dense set of starting points $(\bbx^0,\bby^0)$,
from which MWU with a suitably small step-size leads to $(\bbx^t,\bby^t)$ for some time $t$,
where there exists a strategy $j$ of Player 1 with $x^t_j \ge 1-\delta$, and a strategy $k$ of Player 2 with $y^t_k \ge 1-\delta$.

To understand how bizarre extremism is, consider the classical Rock-Paper-Scissors game,
which is a zero-sum game with a unique fully-mixed Nash equilibrium, where each strategy is chosen with equal probability $1/3$.
The extremism theorem indicates that there exists a starting point arbitrarily close to the Nash equilibrium,
which will eventually lead to a situation where each player essentially sticks with one strategy for a long period
of time\footnote{When $x_j^t < \delta$, it takes at least $\Omega \left( \frac 1\ep \ln \frac{1}{\delta} \right)$ time before $x_j$ can possibly resume a ``normal'' value,
say above $1/20$.}.
As no pure Nash equilibrium exists, the trajectory will recurrently approach and then escape such extremal points infinitely often (Theorem~\ref{thm:extremal-io}),
demonstrating that the dynamic is very unstable.

\medskip

\parabold{Unavoidability (Section~\ref{sect:unavoidability}).}
The extremism theorem is actually an indirect consequence of an unavoidability theorem of MWU in zero-sum games.
Unavoidability is a notion first introduced in a topic of (automatic) control theory called ``avoidance control''~\cite{LS1977}, which addresses the following type of problems:
for dynamical/automatic systems, analyze whether they can always avoid reaching certain \emph{bad} states, e.g.~collisions of robots/cars, or places with severe weather conditions.

To explain unavoidability of MWU in general games, we need another notion of \emph{uncontrollability}.
Let $U$ be a region which is in the strict interior of the primal simplex, and let $V$ be the correspondence set of $U$ in the dual space.
Informally, we say a region $U$ is \emph{uncontrollable} if any subset of $V$ is exponentially volume expanding in the dual space.
As the volume is expanding quickly, it is impossible for $V$ to contain the evolved set after a sufficiently long period of time,
which, when converting back to the primal space, implies that the dynamic escapes from $U$ (Theorem~\ref{thm:unavoidable}).
When $U$ is thought as a set of \emph{good} points and its complements are the set of \emph{bad} points, the above discussions can be summarized by a punchline:
\begin{center}
\emph{When a good set is uncontrollable, the bad set is unavoidable.}
\end{center}

Note that the above discussion concerns general games. When we narrow down to zero-sum games, the results of Cheung and Piliouras~\cite{CP2019} indicate that
under mild regularity condition, \emph{any} set $U$ in the the strict interior of the primal simplex is uncontrollable.
Thus, for MWU in zero-sum game, you can have whatsoever interpretation of ``good'', but the corresponding bad set is unavoidable.

Some ideas behind the proof of unavoidability come from Cheung and Piliouras~\cite{CP2019},
who demonstrated several negative properties of MWU in special games, including generalized Rock-Paper-Scissors games.
Our key contribution here is to formulate and prove the fully generalized statement about this property.
In the proof of the extremism theorem, the first step uses unavoidability to show that extremism appears for one player.
But to show that extremism appears for both players simultaneously, we need some substantially novel arguments in the subsequent steps.
\section{Preliminary}\label{sect:prelim}

\subsection{Games}
In this paper, we focus on two-person general normal-form games. The strategy set of Player $i$ is $S_i$. Let $n=|S_1|$ and $m=|S_2|$. We assume $n,m\ge 2$ throughout this paper.
Let $\bbA$ and $\bbB$ be two $S_1\times S_2$ matrices, which represent the payoffs to Players 1 and 2 respectively.
We assume all payoffs are bounded within the interval $[-1,1]$.
Let 
$\Delta^d := \left\{(z_1,z_2,\cdots,z_d)\in \rr^d ~\big|~ \sum_{j=1}^d z_j = 1,~~\text{and}~~\forall~j,~z_j\ge 0\right\}$.
We call $\Delta := \Delta^n \times \Delta^m$ the \emph{primal simplex} or \emph{primal space} of the game, which contains the set of all mixed strategy profiles of the two players.
We use $\bbx\in \Delta^n$ and $\bby\in \Delta^m$ to denote strategy vectors of Players 1 and 2 respectively.
When a zero-sum game is concerned, only the matrix $\bbA$ needs to be specified, as $\bbB = -\bbA$.

\begin{definition}\label{def:trivial}
A zero-sum game $(\bbA,-\bbA)$ is \emph{trivial} if there exist real numbers $a_1,a_2,\cdots,a_n$ and $b_1,b_2,\cdots,b_m$ such that $A_{jk} = a_j + b_k$.
\end{definition}

A trivial game is not interesting as each player has a clear dominant strategy; for Player 1 it is $\argmax_{j\in S_1} a_j$, while for Player 2 it is $\argmin_{k\in S_2} b_k$.
Following~\cite{CP2019}, we measure the distance of a zero-sum game $\bbA$ from triviality by
\begin{equation}\label{eq:distance-from-trivial}
c(\bbA) ~:=~ \min_{\substack{a_1,\cdots,a_n\in \rr\\b_1,\cdots,b_m\in \rr}}~~\left[\max_{\substack{j\in S_1\\k\in S_2}}~ \left(A_{jk} - a_j - b_k\right)
~~-~~ \min_{\substack{j\in S_1\\k\in S_2}}~ \left(A_{jk} - a_j - b_k\right)\right].
\end{equation}
Observe that if $\bbA'$ is a sub-matrix of $\bbA$, then $c(\bbA') \le c(\bbA)$. If one of the two dimensions of $\bbA$ is one, then $c(\bbA) = 0$.
By setting all $a_j,b_k$ to zero, we have the trivial bound $c(\bbA) \le 2$.

For a coordination game, i.e.~a game with payoff matrices in the form of $(\bbA,\bbA)$, we also measure its distance from triviality using Equation~\eqref{eq:distance-from-trivial}.

\subsection{MWU and OMWU Update Rules in Dual and Primal Spaces}

As is well-known, MWU and OMWU can be implemented either in the primal space, or in a dual space.
The dual space is $\calD := \rr^n\times \rr^m$, in which MWU with positive step size $\ep$ generates a sequence of updates $(\bbp^0,\bbq^0),(\bbp^1,\bbq^1),$
$(\bbp^2,\bbq^2),\cdots$,
where $p_j^t - p_j^0$ is $\ep$ times the cumulative payoff to Player 1's strategy $j$ up to time $t$,
and $q_k^t - q_k^0$ is $\ep$ times the cumulative payoff to Player 2's strategy $k$ up to time $t$.
At each time step, a point $(\bbp^t,\bbq^t)\in \calD$ is converted to a point $(\bbx^t,\bby^t)=(\bbx(\bbp^t),\bby(\bbq^t))\in \Delta$ by the following rules:
\begin{equation}\label{eq:dual-to-primal}
x_j^t = x_j(\bbp^t) = \exp(p_j^t)\left/\left(\sum_{\ell\in S_1} \exp(p_\ell^t)\right)\right.~~~\text{and}~~~y_k^t = y_k(\bbq^t) = \exp(q_k^t)\left/\left(\sum_{\ell\in S_2} \exp(q_\ell^t)\right)\right..
\end{equation}
For convenience, we let $\tG$ denote the function that converts a dual point to a primal point, i.e.~$\tG(\bbp,\bbq) = (\bbx(\bbp),\bby(\bbq))$.

For MWU in a general-sum game, the payoffs to Player 1's all strategies in round $(t+1)$ can then be represented by the vector $\bbA \cdot \bby(\bbq^t)$,
while the payoffs to Player 2's all strategies in round $(t+1)$ can be represented by the vector $\bbB\trans \cdot \bbx(\bbp^t)$.
Accordingly, the MWU update rule in the dual space can be written as
\begin{equation}\label{eq:MWU}
\bbp^{t+1} ~=~ \bbp^t + \ep\cdot \bbA \cdot \bby(\bbq^t)~~~~~~\text{and}~~~~~~
\bbq^{t+1} ~=~ \bbq^t + \ep\cdot \bbB\trans \cdot \bbx(\bbp^t).
\end{equation}
The above update rule in the dual space is equivalent to the following MWU update rule in the primal space with starting point $\tG(\bbp^0,\bbq^0)$,
which some readers might be more familiar with:
\begin{equation}\label{eq:MWU-primal}
x_j^{t+1} ~=~ \frac{x_j^t \cdot \exp(\ep \cdot [\bbA \cdot \bby^t]_j)}{\sum_{\ell\in S_1} x_\ell^t \cdot \exp(\ep \cdot [\bbA \cdot \bby^t]_\ell)}~~~~~~\text{and}~~~~~~
y_k^{t+1} ~=~ \frac{y_k^t \cdot \exp(\ep \cdot [\bbB\trans \cdot \bbx^t]_k)}{\sum_{\ell\in S_2} y_\ell^t \cdot \exp(\ep \cdot [\bbB\trans \cdot \bbx^t]_\ell)}.
\end{equation}
%We point out a simple fact which will be useful later: $\exp(-2\ep) \le x_j^{t+1}/x_j^t~,~y_k^{t+1}/y_k^t \le \exp(2\ep)$.

For OMWU in a general-sum game with step-size $\ep$,
its update rule in the dual space starts with $(\bbp^0,\bbq^0)=(\bbp^1,\bbq^1)$, % = (\bbp^0 + \bbA \cdot \bby(\bbq^0),\bbq^0 + \bbB\trans \cdot \bbx(\bbp^0))
and for $t\ge 1$,
\begin{equation}\label{eq:OptMD}
\bbp^{t+1} = \bbp^t + \ep\cdot \left[2\bbA \cdot \bby(\bbq^t)-\bbA\cdot \bby(\bbq^{t-1})\right]~~\text{and}~~
\bbq^{t+1} = \bbq^t + \ep\cdot \left[2\bbB\trans \cdot \bbx(\bbp^t) - \bbB\trans \cdot \bbx(\bbp^{t-1})\right],
\end{equation}
where $\bbx(\bbp^t),\bby(\bbq^t)$ are as defined in~\eqref{eq:dual-to-primal}.
The above update rule in the dual space has an equivalent update rule in the primal space, which is the same as~\eqref{eq:MWU-primal},
except we replace $\bbA \cdot \bby^t$ there by $2\bbA \cdot \bby^t-\bbA\cdot \bby^{t-1}$,
and replace $\bbB\trans \cdot \bbx^t$ there by $2\bbB\trans \cdot \bbx^t - \bbB\trans \cdot \bbx^{t-1}$.

Note that for the update rule~\eqref{eq:OptMD}, for $t\ge 2$, we have
$\bbp^t-\bbp^0 = \ep (\sum_{\tau=1}^{t-2} \bbA \cdot \bby(\bbq^\tau) ~+~ 2\cdot \bbA \cdot \bby(\bbq^{t-1}))$,
which can be viewed as $\ep$ times the cumulative payoff to strategy $j$ from time $2$ up to time $t$,
but with a double weight on the last-iterate payoff.

\subsection{Relationships between Primal and Dual Spaces}\label{subsect:relation}

Here, we clarify some facts about primal and dual spaces and their relationships. 
%While some of these facts are not directly relevant to the results in our papers,
%we feel such clarifications necessary to clear any potential confusions of the readers.
Equation~\eqref{eq:dual-to-primal} provides a conversion from a point in $\calD$ to a point in the interior of the primal space, i.e., $\interior(\Delta)$.
It is not hard to see that there exist multiple points in $\calD$ which convert to the same point in $\interior(\Delta)$.
Precisely, by~\cite[Proposition 1]{CP2019}, if $(\bbp,\bbq),(\bbp',\bbq')\in \calD$, then
$(\bbx(\bbp),\bby(\bbq)) = (\bbx(\bbp'),\bby(\bbq'))$ if and only if
$\bbp-\bbp' = c_1 \cdot \bbone$ and $\bbq-\bbq' = c_2 \cdot \bbone$ for some $c_1,c_2\in \rr$.
%as stated in the proposition below.
For any $S\subset \interior(\Delta)$, we let $\tGinv(S)$ denote the set of points $(\bbp,\bbq)$ in the dual space $\calD$ such that $\tG(\bbp,\bbq)\in S$.

Since the primal and dual spaces are not in one-one correspondence, some readers might argue that the \emph{reduced} dual space used by Eshel and Akin~\cite{EA1983}
(in which its $(n+m-2)$ dual variables denote the quantities $p_1-p_n,p_2-p_n,\cdots,p_{n-1}-p_n,q_1-q_m,q_2-q_m,\cdots,q_{m-1}-q_m$) is a better choice.
Our reason for choosing $\calD$ as the dual space to work with is simply because we are unable to establish the same type of results (like Lemma~\ref{lem:Cxy} below)
for the reduced dual space.

The following proposition shows that volume expansion in the dual space implies large diameter in the primal space,
if the corresponding primal set is bounded away from the simplex boundary.

\begin{proposition}\label{prop:dual-expand-to-primal-instability-simpler}
Let $S$ be a set in the dual space with Lebesgue volume $v$.
Suppose there exists $j\in S_1$ and $k\in S_2$ such that $\max_{(\bbp,\bbq)\in S} p_j - \min_{(\bbp,\bbq)\in S} p_j \le R_j$
and $\max_{(\bbp,\bbq)\in S} q_k - \min_{(\bbp,\bbq)\in S} q_k \le R_k$.
Also, suppose that for some $\kappa > 0$, there exists a point $(\bbx,\bby)\in \tG(S)$ such that either every entry of $\bbx$ is at least $\kappa$
or every entry of $\bby$ is at least $\kappa$.
Then the diameter of $\tG(S)$ %w.r.t.~$\ell_2$ norm 
is at least
$
\left[ 1 - \exp\left(-\frac 14 \cdot \left( \frac{v}{R_j R_k} \right)^{1/(n+m-2)}\right) \right] \cdot \kappa.
$
\end{proposition}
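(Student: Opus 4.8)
The plan is to make the map $\tG$ transparent by passing to a reduced dual coordinate system, convert the volume hypothesis into a ``width'' estimate via Fubini, and then cash that width in for a genuine separation between two points of $\tG(S)$, one of which is the guaranteed $\kappa$-interior point supplied by the hypothesis.

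First, recall from~\eqref{eq:dual-to-primal} that $\tG(\bbp,\bbq)$ depends on $(\bbp,\bbq)$ only through the coordinate differences $p_\ell-p_j$ and $q_\ell-q_k$. So I would apply the linear change of variables $\Phi\colon\calD\to\rr^{n+m}$ sending $(\bbp,\bbq)$ to the tuple $\big((p_\ell-p_j)_{\ell\in S_1\setminus\{j\}},\ (q_\ell-q_k)_{\ell\in S_2\setminus\{k\}},\ p_j,\ q_k\big)$. Expanding the determinant of $\Phi$ along the two rows carrying $p_j$ and $q_k$ shows $\det\Phi=\pm1$, so $\Phi$ preserves Lebesgue volume and $\vol(\Phi(S))=v$. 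By hypothesis the last two coordinates of every point of $\Phi(S)$ lie in an axis-parallel rectangle of area at most $R_jR_k$; slicing $\Phi(S)$ over its first $n+m-2$ coordinates, each two-dimensional slice has area at most $R_jR_k$, so Fubini gives that the projection $T\subseteq\rr^{n+m-2}$ of $\Phi(S)$ onto the reduced coordinates satisfies $\vol_{n+m-2}(T)\ge v/(R_jR_k)$.

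Next, since any set of volume $V$ in $\rr^d$ sits inside its axis-parallel bounding box, whose side lengths multiply to at least $V$, at least one side has length $\ge V^{1/d}$. Applying this to $T$ with $d=n+m-2$, some reduced coordinate spans a length at least $(v/(R_jR_k))^{1/(n+m-2)}$ over $S$; after relabelling, say this is the Player-1 difference $p_{\ell^*}-p_j$, and (choosing $j$ and the active branch of the hypothesis to agree) that the guaranteed point $(\bbx^0,\bby^0)\in\tG(S)$ has every entry of $\bbx^0$ at least $\kappa$, with preimage $(\bbp^0,\bbq^0)\in S$. By the triangle inequality applied to the two points of $S$ witnessing that span, there is $(\bbp^1,\bbq^1)\in S$ with
\[
\big|(p^1_{\ell^*}-p^1_j)-(p^0_{\ell^*}-p^0_j)\big|\ \ge\ \sigma\ :=\ \tfrac12\left(\tfrac{v}{R_jR_k}\right)^{1/(n+m-2)}.
\]
Setting $\bbx^1=\bbx(\bbp^1)$, the identity $x^1_{\ell^*}/x^1_j=e^{(p^1_{\ell^*}-p^1_j)-(p^0_{\ell^*}-p^0_j)}\cdot x^0_{\ell^*}/x^0_j$ shows this ratio changes by a factor $e^{\pm\sigma}$; combined with $x^0_{\ell^*},x^0_j\in[\kappa,1]$ and $x^1_{\ell^*}+x^1_j\le 1$, a short computation — most cleanly by isolating the one-parameter family in which only $p_{\ell^*}$ is perturbed from $\bbp^0$, for which $x_{\ell^*}$ equals $1/(1+a e^{-s})$ with $a=(1-x^0_{\ell^*})/x^0_{\ell^*}$, and checking that the remaining perturbations cannot cancel the resulting monotone change — forces one of $x^1_{\ell^*},x^1_j$ to differ from its value at $\bbx^0$ by at least $(1-e^{-\sigma/2})\kappa$. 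Hence $\diameter(\tG(S))\ge\big\|\tG(\bbp^1,\bbq^1)-\tG(\bbp^0,\bbq^0)\big\|\ge(1-e^{-\sigma/2})\kappa$, and since $\sigma/2=\tfrac14(v/(R_jR_k))^{1/(n+m-2)}$ this is exactly the claimed bound; the case of a Player-2 spanning coordinate, or of an interior point on the $\bby$-side, is symmetric.

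The main obstacle is the last conversion. Softmax is severely non-expansive near the boundary of the simplex — an arbitrarily large swing in a dual difference produces a negligible change in the associated primal probability once that probability is small — so the width $\sigma$ only yields a primal separation because one endpoint $\bbx^0$ is pinned $\kappa$-away from every face, which is precisely what makes $x_{\ell^*}$ responsive near that endpoint and keeps the estimate from collapsing. A secondary but genuine bookkeeping point is lining up which player owns the spanning reduced coordinate with which player the interiority hypothesis concerns; the disjunctive form of the hypothesis, together with the freedom in choosing $j$ and $k$, is what one leans on there.
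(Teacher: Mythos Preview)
Your approach is essentially the paper's: the paper also passes to the coordinates $\big(p_j,\,(p_\ell-p_j)_{\ell\ne j},\,q_k,\,(q_\ell-q_k)_{\ell\ne k}\big)$ via a unimodular linear map, uses the $R_jR_k$ bound on the two ``absolute'' coordinates to force some difference coordinate to span at least $(v/R_jR_k)^{1/(n+m-2)}$, and then compares a witness point against the given $\kappa$-interior point by the same half-interval trick. Your Fubini-then-bounding-box phrasing and the paper's direct ``subset of a box'' phrasing are interchangeable.

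Where you do have a gap is the final conversion. The ``one-parameter family'' heuristic you sketch --- perturb only $p_{\ell^*}$, track the logistic $x_{\ell^*}=1/(1+ae^{-s})$, then assert the remaining perturbations cannot cancel the monotone change --- is not sound as stated: $\bbp^1$ is an arbitrary point of $S$ and may differ from $\bbp^0$ in every coordinate by an uncontrolled amount, so there is no reason the change in $x_{\ell^*}$ alone should survive. The correct argument, which is exactly what the paper does, works purely from the ratio you already control and is indifferent to the other coordinates of $\bbp^1$. From (say) $x^1_{\ell^*}/x^1_j\ge e^{\sigma}\cdot x^0_{\ell^*}/x^0_j$ one has $x^1_{\ell^*}-x^0_{\ell^*}\ge x^0_{\ell^*}\big(e^{\sigma}\cdot\tfrac{x^1_j}{x^0_j}-1\big)$; now split on whether $x^1_j/x^0_j\ge e^{-\sigma/2}$, obtaining $x^1_{\ell^*}-x^0_{\ell^*}\ge\kappa(e^{\sigma/2}-1)\ge\kappa(1-e^{-\sigma/2})$ in one branch and $x^0_j-x^1_j>\kappa(1-e^{-\sigma/2})$ in the other. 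Only $x^0_{\ell^*},x^0_j\ge\kappa$ is used.

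On the alignment issue you raise at the end: your proposed fix --- ``freedom in choosing $j$ and $k$'' --- does not resolve it, because $j$ is forced into $S_1$ and $k$ into $S_2$ regardless, while which player owns the spanning reduced coordinate is dictated by the shape of $S$, not by you. If the span lands on the $\bbq$-side while only $\bbx^0$ is $\kappa$-interior, nothing in the disjunctive hypothesis closes the gap. The paper's one-line reduction to its more detailed appendix proposition glides over the same point, so this is a looseness in the statement rather than something your argument is missing relative to the paper.
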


We point out that while we use volume as the mean for analyses, when measuring instability what we really care is the diameter of the set $S$ or its corresponding primal set.
Indeed, volume is not an ideal benchmark, as we present concrete examples in Appendix~\ref{app:missing-prelim} to show that
(A) volume contraction in the dual space does \emph{not} necessarily imply stability in either the dual or the primal space;
(B) volume expansion in the dual space does \emph{not} necessarily imply instability in the primal space if the primal set is near the simplex boundary.
We show that (B) remains true in the reduced dual space.

\subsection{Dynamical System, Jacobian, and Volume of Flow}\label{sect:volume-analysis-prelim}

A dynamical system is typically described by a system of ordinary differential equations (ODE) over time in $\rr^d$,
governed by $d$ differential equations on the variables $s_1,s_2,\cdots,s_d$,
which are of the form  $\forall j\in [d],~~\difft{s_j} = F_j(s_1,s_2,\cdots,s_d).$
Given a starting point $(s_1^\circ,s_2^\circ,\cdots,s_d^\circ)$,
the values of the variables at any time $t\ge 0$ are typically uniquely determined;
precisely, given the starting point, for each $j\in [d]$, there is a function $s_j:\rrplus \ra \rr$
such that altogether they satisfy the ODE system, with $(s_1(0),s_2(0),\cdots,s_d(0))$ being the given starting point.
The collection of the functions $s_1,s_2,\cdots,s_d$ is called the \textit{trajectory} of the given starting point.
The \textit{flow} of a given starting point at time $t$ is simply $(s_1(t),s_2(t),\cdots,s_d(t))$.
In this paper, we assume that $F_j$ is smooth everywhere.

Given a measurable set $S$ and an ODE system, the \textit{flow} of $S$ at time $t$
is simply the collection of the flows of all starting points in $S$ at time $t$;
when the underlying ODE system is clear from context, we denote it by $S(t)$.
Let $\vol(S)$ denote the Lebesgue volume of set $S$.

The \textit{Jacobian} of the ODE system at $\bbs=(s_1,s_2,\cdots,s_d)$ is the $d\times d$-matrix $\bbJ(\bbs)$:
\begin{equation}\label{eq:Jacobian}
\bbJ(\bbs) ~=~ 
\begin{bmatrix}
\frac{\partial}{\partial s_1}F_1(\bbs) & \frac{\partial}{\partial s_2}F_1(\bbs) & \cdots & \frac{\partial}{\partial s_d}F_1(\bbs)\\
%\frac{\partial}{\partial s_1}F_2(\bbs) & \frac{\partial}{\partial s_2}F_2(\bbs) & \cdots & \frac{\partial}{\partial s_d}F_2(\bbs)\\
\vdots & \vdots & \ddots & \vdots \\
\frac{\partial}{\partial s_1}F_d(\bbs) & \frac{\partial}{\partial s_2}F_d(\bbs) & \cdots & \frac{\partial}{\partial s_d}F_d(\bbs)
\end{bmatrix}.
\end{equation}
The Liouville's theorem states that if $S(0)\subset \rr^d$ is a bounded and measurable set,
then
\begin{equation}\label{eq:Liouville-cont}
\difft{}\vol(S(t)) = \int_{\bbs\in S} \mathsf{trace}(\bbJ(\bbs))\,\mathsf{d}V.
\end{equation}

The Liouville's theorem is indeed the continuous analogue of integration by substitution for multi-variables,
which applies for calculating volume changes of discrete-time update rules.
We present a simplified version of it which suffices for our purposes.
For a \emph{gradual} update rule
\[
\bbs_{t+1} = \bbs_t + \ep \cdot F(\bbs_t),
\]
where $F:\rr^d\ra \rr^d$ is a smooth function and step-size $\ep>0$,
if $S\subset \rr^d$ is a bounded and measurable set, and if the discrete flow in one time step maps $S$ to $S'$ injectively, then
\begin{equation}\label{eq:Liouville-discrete}
\vol(S') ~=~ \int_{\bbs\in S} \det \left( \bbI + \ep \cdot \bbJ(\bbs) \right) \,\mathsf{d}V,
\end{equation}
where $\bbJ(\bbs)$ is as defined in~\eqref{eq:Jacobian}, and $\bbI$ is the identity matrix.

Clearly, analyzing the determinant in the integrand is crucial in volume analysis; we call it the \emph{volume integrand} in this paper.
When the determinant is expanded using the Leibniz formula, it becomes a polynomial of $\ep$, in the form of $1 + C(\bbs) \cdot \ep^h + \calO(\ep^{h+1})$ for some integer $h\ge 1$.
Thus, when $\ep$ is sufficiently small, the sign of $C(\bbs)$ dictates on whether the volume expands or contracts.

In our case, $\bbs$ refers to a cumulative payoff vector $(\bbp,\bbq)$.
Cheung and Piliouras~\cite{CP2019} showed that for the MWU update rule~\eqref{eq:MWU} in the dual space,
the volume integrand can be written as $1 + C_{(\bbA,\bbB)}(\bbp,\bbq) \cdot \ep^2 + \calO(\ep^4)$, where
\begin{equation}\label{eq:Cxy}
C_{(\bbA,\bbB)}(\bbp,\bbq) ~=~ -\sum_{j\in S_1}~\sum_{k\in S_2}~x_j(\bbp) \cdot y_k(\bbq) \cdot 
(A_{jk} - [\bbA\cdot \bby(\bbq)]_j) \cdot (B_{jk} - [\bbB\trans \cdot \bbx(\bbp)]_k).
\end{equation}
Note that $C_{(\bbA,\bbB)}(\bbp,\bbq)$ depends on the primal variables $\bbx(\bbp),\bby(\bbq)$ but not explicitly on $\bbp,\bbq$.
Thus, it is legitimate to refer to this value using the primal variables as input parameters to $C_{(\bbA,\bbB)}$, i.e., we can refer to its value by $C_{(\bbA,\bbB)}(\bbx,\bby)$ too.
Cheung and Piliouras~\cite{CP2019} showed the following lemma.% about $C(\bbx,\bby)$ in two-person zero-sum games.

\begin{lemma}\cite[Lemma 3, Section 4.1 and Appendix B]{CP2019} The following hold:\label{lem:Cxy}
\begin{enumerate}
\item When $\ep \le 1/4$, the update rule~\eqref{eq:MWU} in the dual space is injective.
\item In any two-person zero-sum game $(\bbA,-\bbA)$, at any point $(\bbx,\bby)\in \Delta$, $C_{(\bbA,-\bbA)}(\bbx,\bby) \ge 0$.
Indeed, $C_{(\bbA,-\bbA)}(\bbx,\bby)$ equals to the variance of the random variable $X$ such that
$X = (A_{jk} - [\bbA\bby]_j - [\bbA\trans\bbx]_k)$ with probability $x_j y_k$, for all $(j,k)\in S_1\times S_2$.
\item When $\ep < \min \left\{ 1/(32n^2m^2) , C(\bbp,\bbq) \right\}$, the volume integrand at point $(\bbp,\bbq)$ is lower bounded by $1+(C_{(\bbA,\bbB)}(\bbp,\bbq)-\ep)\ep^2$.
Thus, in~\eqref{eq:Liouville-discrete}, if $\overline{C} := \min_{(\bbp,\bbq)\in S} C_{\bbA,\bbB}(\bbp,\bbq) > 0$, then for all $0 < \ep \le \overline{C}$,
\[
\vol(S') ~\ge~ \left[ 1 + \left( \overline{C} - \ep \right) \ep^2 \right] \cdot \vol(S).
\]
\end{enumerate}
\end{lemma}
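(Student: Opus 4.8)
The plan is to establish the three parts in order, since part~(3) uses part~(1).

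\textbf{Part (1), injectivity.} I would write one MWU step \eqref{eq:MWU} as $\Phi(\bbp,\bbq)=(\bbp,\bbq)+\ep\,F(\bbp,\bbq)$ with $F(\bbp,\bbq)=(\bbA\,\bby(\bbq),\ \bbB\trans\bbx(\bbp))$, and bound the Lipschitz constant of $F$ in the sup norm on $\calD$. Since every payoff is in $[-1,1]$ we have $\|\bbA\bbv\|_\infty\le\|\bbv\|_1$ and $\|\bbB\trans\bbv\|_\infty\le\|\bbv\|_1$; and the softmax map of \eqref{eq:dual-to-primal} satisfies $\|\bby(\bbq)-\bby(\bbq')\|_1\le\|\bbq-\bbq'\|_\infty$, because its Jacobian $\mathrm{diag}(\bby)-\bby\bby\trans$ sends a direction $\bbv$ to the vector with entries $y_k(v_k-\inner{\bby}{\bbv})$, whose $\ell_1$-norm is the mean absolute deviation of $\bbv$ under $\bby$, hence at most $\tfrac{1}{2}(\max_k v_k-\min_k v_k)\le\|\bbv\|_\infty$. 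Because $F$ has a cross structure (its $\bbp$-component depends only on $\bbq$ and vice versa), these combine to $\|F(z)-F(z')\|_\infty\le\|z-z'\|_\infty$, so $\|\Phi(z)-\Phi(z')\|_\infty\ge(1-\ep)\|z-z'\|_\infty$ and $\Phi$ is injective for every $\ep<1$, in particular for $\ep\le 1/4$.

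\textbf{Part (2), the variance identity.} Substituting $\bbB=-\bbA$ in \eqref{eq:Cxy}, the factor $B_{jk}-[\bbB\trans\bbx]_k$ becomes $-(A_{jk}-[\bbA\trans\bbx]_k)$, so the minus signs cancel and $C_{(\bbA,-\bbA)}(\bbx,\bby)=\sum_{j\in S_1}\sum_{k\in S_2}x_jy_k(A_{jk}-[\bbA\bby]_j)(A_{jk}-[\bbA\trans\bbx]_k)$. I would read this as an expectation over the product law drawing $(j,k)$ with probability $x_jy_k$: with $A=A_{jk}$, $r=[\bbA\bby]_j$, $c=[\bbA\trans\bbx]_k$, $v=\bbx\trans\bbA\bby$, the identities $\expect{A}=\expect{r}=\expect{c}=v$ (the last two since $\sum_k y_k A_{jk}=r_j$ and $\sum_j x_j A_{jk}=c_k$), $\expect{Ar}=\expect{r^2}$, $\expect{Ac}=\expect{c^2}$, $\expect{rc}=v^2$ give, in one expansion, $\expect{(A-r)(A-c)}=\expect{A^2}-\expect{r^2}-\expect{c^2}+v^2=\mathrm{Var}(A-r-c)$. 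Hence $C_{(\bbA,-\bbA)}(\bbx,\bby)=\mathrm{Var}(X)\ge0$ with $X=A_{jk}-[\bbA\bby]_j-[\bbA\trans\bbx]_k$.

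\textbf{Part (3), the volume integrand.} The Jacobian of $F$ is block off-diagonal, so $\bbI+\ep\bbJ(\bbp,\bbq)$ has diagonal blocks $\bbI_n,\bbI_m$ and off-diagonal blocks $\ep\,\bbA D\bby$ and $\ep\,\bbB\trans D\bbx$, where $D\bbx=\mathrm{diag}(\bbx)-\bbx\bbx\trans$ and $D\bby=\mathrm{diag}(\bby)-\bby\bby\trans$. The Schur-complement identity gives $\det(\bbI+\ep\bbJ)=\det(\bbI_m-\ep^2 Z)$ with $Z=(\bbB\trans D\bbx)(\bbA D\bby)$; expanding, $\det(\bbI_m-\ep^2 Z)=1-\ep^2\mathrm{tr}(Z)+\sum_{k\ge2}(-\ep^2)^k e_k(Z)$, and a one-line computation shows $\mathrm{tr}(Z)=-C_{(\bbA,\bbB)}(\bbp,\bbq)$, matching \eqref{eq:Cxy}. (Equivalently, expanding $\det(\bbI+\ep\bbJ)$ by Leibniz, the block structure forces every odd power of $\ep$ to vanish and identifies the $\ep^2$-coefficient, the sum of $2\times2$ principal minors, with $C_{(\bbA,\bbB)}$.) For the tail I would use $|Z_{kk'}|\le 4\,y_{k'}$ (factor the simplex weights out of each factor and bound the centered payoffs by $2$), so that a $k\times k$ principal minor of $Z$ is, by Hadamard after pulling out the $y$'s, at most $4^k k^{k/2}\prod_{k'\in T}y_{k'}$; summing over index sets via $\sum_{|T|=k}\prod_{k'\in T}y_{k'}\le 1/k!$ yields the dimension-free bound $|e_k(Z)|\le 4^k k^{k/2}/k!$, whence $\sum_{k\ge2}\ep^{2k}|e_k(Z)|$ is a fast-decaying series that is $\le\ep^3$ for all $\ep$ below the stated threshold. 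Therefore $\det(\bbI+\ep\bbJ(\bbp,\bbq))\ge 1+(C_{(\bbA,\bbB)}(\bbp,\bbq)-\ep)\,\ep^2$, and since such $\ep$ also satisfies $\ep\le 1/4$, part~(1) makes the one-step map injective on $S$, so \eqref{eq:Liouville-discrete} applies; integrating this pointwise bound over $S$ with $C_{(\bbA,\bbB)}(\bbp,\bbq)\ge\overline{C}$ there gives $\vol(S')\ge[1+(\overline{C}-\ep)\ep^2]\vol(S)$.

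The \textbf{main obstacle} is part~(3)'s tail estimate: one must turn ``$\det(\bbI+\ep\bbJ)=1+C\ep^2+(\text{higher order})$'' into a genuine inequality with an explicit threshold on $\ep$. The two facts that make this manageable are the block-off-diagonal structure --- which kills all odd-degree terms, so the first error term is $\ep^4$, not $\ep^3$ --- and the Schur-complement reduction to $\det(\bbI_m-\ep^2 Z)$, after which a crude bound on the entries of $Z$ (they inherit the simplex weights $y_{k'}$) followed by Hadamard's inequality and an elementary-symmetric-polynomial estimate controls all higher principal minors uniformly. Chasing the precise constant is the only fiddly point; any reasonable set of such estimates gives a threshold at least as generous as $1/(32n^2m^2)$.
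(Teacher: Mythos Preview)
The paper does not prove this lemma; it is quoted from \cite{CP2019} (their Lemma~3, Section~4.1 and Appendix~B), so there is no in-paper argument to compare against. Your proposal stands on its own and is correct in all three parts.

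A few remarks on what your argument actually delivers. For part~(1), the Lipschitz route gives injectivity for every $\ep<1$, stronger than the stated $\ep\le 1/4$; the inequality you use for the softmax Jacobian, $\sum_k y_k|v_k-\langle\bby,\bbv\rangle|\le\tfrac12(\max_k v_k-\min_k v_k)$, is correct (it is the mean-absolute-deviation bound, provable e.g.\ via $(v-\mu)^+\le v(1-\mu)$ after rescaling to $[0,1]$) and is the sharp $\ell_\infty\!\to\!\ell_1$ operator-norm estimate here. For part~(2), the moment identities $\expect{Ar}=\expect{r^2}$, $\expect{Ac}=\expect{c^2}$, $\expect{rc}=v^2$ are exactly what makes the cross-term expansion collapse to $\mathrm{Var}(A-r-c)$. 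For part~(3), the Schur-complement reduction to $\det(\bbI_m-\ep^2 Z)$ is the clean way to see that only even powers of $\ep$ occur, and your tail estimate via $|Z_{kk'}|\le 4y_{k'}$, Hadamard on columns, and $e_k(\bby)\le 1/k!$ is in fact \emph{dimension-free}: it yields a threshold of order $\ep\lesssim 1/32$, independent of $n,m$, considerably more generous than the stated $1/(32n^2m^2)$. The additional hypothesis $\ep<C(\bbp,\bbq)$ in the lemma is not used in your integrand bound and need not be; its role is only to make the factor $1+(\overline C-\ep)\ep^2$ exceed~$1$, so that the final inequality expresses genuine expansion.
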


By the definition of $C_{(\bbA,\bbB)}$, it is straight-forward to see that
\begin{equation}\label{eq:coordination-negative-ZS}
C_{(\bbA,\bbA)}(\bbp,\bbq) ~=~ -C_{(\bbA,-\bbA)}(\bbp,\bbq).
\end{equation}
Thus, for any coordination game $(\bbA,\bbA)$, and for any $(\bbp,\bbq)\in \calD$, $C_{(\bbA,\bbA)}(\bbp,\bbq)\le 0$ due to Lemma~\ref{lem:Cxy} Part (2).

\subsection{Lyapunov Chaos}

In the study of dynamical systems, \textit{Lyapunov chaos} generally refers to following phenomenon in some systems:
a tiny difference in the starting points can yield widely diverging outcomes \emph{quickly}.
A classical measure of chaos is \textit{Lyapunov time}, defined as:
when the starting point is perturbed by a distance of tiny $\gamma$,
for how long will the trajectories of the two starting points remain within a distance of at most $2\gamma$.
Cheung and Piliouras~\cite{CP2019} showed that if the volume of a set increases at a rate of $\Omega((1+\beta)^t)$, %as it is evolved by a system,
its diameter increases at a rate of at least $\Omega((1+\beta/d)^t)$, where $d$ is the dimension of the system,
thus indicating that the Lyapunov time is at most $\calO(d/\beta)$.
\section{Unavoidability of MWU in Games}\label{sect:unavoidability}

The result in this section holds for MWU in general-sum games.
Recall the definition of $C_{(\bbA,\bbB)}(\bbp,\bbq)$ in Equation~\eqref{eq:Cxy},
and the discussion on extending the definition of the function $C$ to the primal space (i.e.~$C_{(\bbA,\bbB)}(\bbx,\bby)$) below Equation~\eqref{eq:Cxy}.
To avoid cluster, when the underlying game $(\bbA,\bbB)$ is clear from context, we write $C(\cdot)$ for $C_{(\bbA,\bbB)}(\cdot)$.

\begin{definition}\label{def:uncontrollable}
A set $U\subset \interior(\Delta)$ is called a \emph{primal open set} if there is an open set $U'$ in $\rr^{n+m}$, such that
$U = U'\cap \Delta$. A primal open set $U$ is \emph{uncontrollable} if $\inf_{(\bbx,\bby)\in U} C(\bbx,\bby) > 0$.
\end{definition}

The following is the \emph{unavoidability theorem}, the main theorem in this section.

\begin{theorem}\label{thm:unavoidable}
Let $U$ be an uncontrollable primal open set with $\inf_{(\bbx,\bby)\in U} C(\bbx,\bby) \ge \hC > 0$.
If the step-size $\ep$ in the update rule~\eqref{eq:MWU-primal} satisfies $\ep < \min \left\{ \frac{1}{32n^2m^2},\hC \right\}$,
then there exists a dense subset of points in $U$ such that the flow of each such point must eventually reach a point outside $U$.
\end{theorem}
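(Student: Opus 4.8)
The plan is to argue by contradiction: suppose there is an open subset $W\subseteq U$ (relatively open in $\Delta$, hence of positive Lebesgue measure when lifted appropriately) every point of which has a flow that stays inside $U$ for all time. Lift $W$ to the dual space via $\tGinv$ and intersect with a bounded ``slice'' that fixes the gauge freedom (recall from Section~\ref{subsect:relation} that the fiber over a primal point is a translate of $\mathrm{span}\{\bbone\}\times\mathrm{span}\{\bbone\}$); call the resulting bounded, positive-volume set $S_0\subset\calD$. Because the primal flow of every point of $W$ remains in $U$, and the dual MWU update rule~\eqref{eq:MWU} projects onto the primal update rule~\eqref{eq:MWU-primal}, the dual flow $S_t$ of $S_0$ has the property that $\tG(S_t)\subseteq U$ for every $t$. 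Now apply Lemma~\ref{lem:Cxy}: since $\ep<\min\{1/(32n^2m^2),\hC\}$ and $\inf_{(\bbx,\bby)\in U}C(\bbx,\bby)\ge\hC$, we get $\overline C_t:=\min_{(\bbp,\bbq)\in S_t}C(\bbp,\bbq)\ge\hC>0$ at every step (injectivity on $S_t$ also holds since $\ep\le 1/4$), so $\vol(S_{t+1})\ge(1+(\hC-\ep)\ep^2)\,\vol(S_t)$, whence $\vol(S_t)\ge(1+(\hC-\ep)\ep^2)^t\,\vol(S_0)\to\infty$.

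The contradiction then comes from Proposition~\ref{prop:dual-expand-to-primal-instability-simpler}. The key is to control the $p_j$- and $q_k$-spans of $S_t$. Since $\tG(S_t)\subseteq U\subseteq\interior(\Delta)$ and $U$ is bounded away from the simplex boundary (because $\inf_U C>0$ forces a uniform lower bound $\kappa>0$ on the entries of $\bbx,\bby$ over the closure of $U$ — any vanishing coordinate would make the relevant conditional distribution degenerate, but one should check this carefully), every point of $S_t$ maps into a compact subset of $\interior(\Delta)$; combined with the gauge-fixing normalization that defines $S_0$ and the fact that each coordinate of $(\bbp,\bbq)$ changes by at most $\ep$ per step (payoffs are in $[-1,1]$, but that alone only gives a linear-in-$t$ bound, which is not good enough). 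The cleaner route: since $\tG(S_t)$ lands in a fixed compact $K\subset\interior(\Delta)$, and we normalize each fiber so that (say) $p_n=q_m=0$, the remaining coordinates $p_j=\log(x_j/x_n)$, $q_k=\log(y_k/y_m)$ are bounded by a constant $R$ depending only on $K$ (equivalently on $\kappa$). So $R_j,R_k\le R$ for all $t$. Feeding $v=\vol(S_t)\to\infty$, the fixed bounds $R_j,R_k\le R$, and the uniform $\kappa$ into Proposition~\ref{prop:dual-expand-to-primal-instability-simpler}, the diameter of $\tG(S_t)$ tends to $\kappa$ — but more usefully, run the same argument on a slightly enlarged open set: the bracket $[1-\exp(-\tfrac14(v/R^2)^{1/(n+m-2)})]\kappa$ exceeds the diameter of $K$ once $v$ is large enough, contradicting $\tG(S_t)\subseteq K$.

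The main obstacle is the normalization/boundedness step: making precise that ``the flow of every point of $W$ stays in $U$'' can be lifted to ``$\tG(S_t)$ stays in a fixed compact set, with dual spans bounded uniformly in $t$,'' despite the dual variables $p_j^t$ themselves drifting (their mean can run off to infinity even while differences stay bounded). The fix is to never track the raw dual flow but the gauge-fixed representative $p_j^t-p_n^t=\log(x_j^t/x_n^t)$, which is a bounded function of the primal point alone; one must verify that Lemma~\ref{lem:Cxy}'s volume estimate is invariant under this gauge fixing (it is, since $C$ depends only on $(\bbx,\bby)$ and the update rule commutes with the $\bbone$-translations, so the volume ratio is the same whether computed in $\calD$ or in the gauge-fixed $(n+m-2)$-dimensional slice). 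A secondary point needing care is extracting the uniform $\kappa>0$: one argues that on the closure $\overline U$, if some coordinate could be zero then $C$ would not be bounded below by $\hC$ there by continuity — so in fact $\overline U$ is a compact subset of $\interior(\Delta)$, giving both $\kappa$ and $R$ at once. With these in hand the density conclusion follows: the set of points whose flow escapes $U$ is dense because its complement cannot contain any relatively open subset of $U$, by the argument above.
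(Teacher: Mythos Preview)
Your overall strategy---assume an open $W\subseteq U$ whose flow never leaves $U$, lift to the dual space, and derive a contradiction from exponential volume growth---matches the paper's. But the way you close the contradiction contains a real gap.

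The step that fails is the extraction of a uniform lower bound $\kappa>0$ on all coordinates of $(\bbx,\bby)$ over $\overline{U}$. You argue that ``if some coordinate could be zero then $C$ would not be bounded below by $\hC$ there by continuity,'' but this is false. By Lemma~\ref{lem:Cxy}, $C_{(\bbA,-\bbA)}(\bbx,\bby)$ is the variance of $A_{jk}-[\bbA\bby]_j-[\bbA\trans\bbx]_k$ under the product distribution $x_jy_k$; this variance can be bounded away from zero even when some coordinates vanish, provided at least two coordinates of each player stay large. Concretely, for a $3\times 3$ game whose top-left $2\times 2$ block is non-trivial, the set $U=\{(\bbx,\bby): x_1,x_2,y_1,y_2>0.3\}$ has $\inf_U C>0$ (exactly the computation in \eqref{eq:variance}) yet $x_3,y_3$ can be arbitrarily small on $\overline{U}$. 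Without $\kappa$, the gauge-fixed coordinates $p_j-p_n=\log(x_j/x_n)$ are unbounded over $\tGinv(U)$, Proposition~\ref{prop:dual-expand-to-primal-instability-simpler} gives you nothing, and the whole ``bounded-slice'' contradiction collapses. (A secondary concern: the paper explicitly notes in Section~\ref{subsect:relation} that the volume estimate of Lemma~\ref{lem:Cxy} was \emph{not} established in the reduced dual space, so your claim that the volume ratio transfers to the gauge-fixed slice would also need real work.)

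The paper's route is both simpler and robust to this issue: it never tries to bound $S_t$ uniformly in $t$. Instead it stays in the full dual space $\calD$ and uses the trivial observation that each coordinate of $(\bbp,\bbq)$ changes by at most $\ep$ per step (payoffs lie in $[-1,1]$), so $S_t$ sits inside a hypercube of side $d(S)+2\ep t$ and hence $\vol(S_t)\le (d(S)+2\ep t)^{n+m}$, a polynomial in $t$. You dismissed this bound as ``not good enough,'' but it is exactly enough: the exponential lower bound $(1+(\hC-\ep)\ep^2)^t\vol(S)$ eventually overtakes any polynomial, yielding the contradiction with an explicit time bound (Lemma~\ref{lem:key-in-dual}). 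No compactness of $\overline{U}$, no gauge-fixing, and no appeal to Proposition~\ref{prop:dual-expand-to-primal-instability-simpler} are needed.
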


Recall that one perspective to think about the unavoidability theorem is to consider $U$ as a collection of good points, while $\Delta \setminus U$ is the set of bad points that we want to \emph{avoid}.
We desire the game dynamic to stay within $U$ forever, so long as the starting point is in $U$.
The theorem then presents a negative property, which states that if $U$ is uncontrollable,
then there is a dense set of points in $U$ such that the game dynamic must eventually reach a point that we want to avoid.

In particular, when the underlying game is a zero-sum game, due to Lemma~\ref{lem:Cxy}, $\inf_{(\bbx,\bby)\in U} C(\bbx,\bby) \ge 0$ for \emph{any} $U$.
With some mild assumptions on $U$ and the underlying game,
it is foreseeable that the infimum will become strictly positive, for which Theorem~\ref{thm:unavoidable} is applicable.
For instance, if the zero-sum game is not trivial (see Definition~\ref{def:trivial}),
and $U$ collects all primal points $(\bbx,\bby)$ such that all $x_j,y_k\ge \delta$ for some fixed $\delta > 0$,
then the infimum is strictly positive due to Lemma~\ref{lem:Cxy} Part (2); see~\cite{CP2019} for a detailed explanation.
Informally speaking, for quite general scenarios, MWU in zero-sum game \emph{cannot} avoid bad states, regardless of what ``good'' or ``bad'' really mean.

\subsection{Proof of Theorem~\ref{thm:unavoidable}}

In Definition~\ref{def:uncontrollable}, we have defined uncontrollability of a set in the primal space.
In the dual space, the definition of uncontrollability is similar: an open set $V$ in the dual space is uncontrollable if $\inf_{(\bbp,\bbq)\in V} C(\bbp,\bbq) > 0$.

The following is the key lemma to proving Theorem~\ref{thm:unavoidable}.

\begin{lemma}\label{lem:key-in-dual}
Let $V$ be an uncontrollable open set in the dual space, with $\inf_{(\bbp,\bbq)\in V} C(\bbp,\bbq) \ge \hC > 0$.
Assume that the step-size $\ep$ in the update rule~\eqref{eq:MWU} satisfies $0< \ep < \min \left\{ \frac{1}{32n^2m^2},\hC \right\}$.
Let $S\subset V$ be a measurable set with positive volume, and let $S(t)$ be the flow of $S$ at time $t$.
Also, let
\[
d(S) ~:=~ \max \left\{ \max_{j\in S_1} \left\{ \max_{(\bbp,\bbq)\in S} p_j ~-~ \min_{(\bbp,\bbq)\in S} p_j \right\} ~,~ 
\max_{k\in S_2} \left\{ \max_{(\bbp,\bbq)\in S} q_k ~-~ \min_{(\bbp,\bbq)\in S} q_k \right\} \right\}.
\]
Then there exists a time $\tau$ with
\begin{equation}\label{eq:time-bound-unavoid}
\tau ~\le~ \max \left\{ \frac{d(S)}{2\ep} ~,~ \frac{8(n+m)}{(\hC-\ep)\ep^2} \ln \frac{4(n+m)}{(\hC-\ep)\ep^2} ~,~ \frac{4}{(\hC-\ep)\ep^2} \ln \frac{1}{\vol(S)} \right\},
\end{equation}
such that $S(\tau)$ contains a point which is \emph{not} in $V$.
\end{lemma}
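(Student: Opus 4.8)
The plan is to argue by contradiction: suppose $S(t) \subseteq V$ for every $t \le \tau$, where $\tau$ is the quantity on the right-hand side of~\eqref{eq:time-bound-unavoid}. Then I can apply the volume-expansion estimate of Lemma~\ref{lem:Cxy} Part (3) at every step, since the hypothesis $\inf_{(\bbp,\bbq)\in V} C(\bbp,\bbq) \ge \hC$ guarantees $\overline{C} \ge \hC$ on each $S(t)$, and the step-size assumption $\ep < \min\{1/(32n^2m^2),\hC\}$ puts us in the regime where that lemma's bound holds and the update rule is injective. Iterating gives $\vol(S(t)) \ge [1 + (\hC - \ep)\ep^2]^t \cdot \vol(S)$, i.e.\ exponential growth with rate $\beta := (\hC - \ep)\ep^2 > 0$.

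The crux is then to bound how large $\vol(S(t))$ can possibly be while $S(t)$ still fits inside $V$ — or rather, inside a bounded box. Here I would track the growth of $d(S(t))$, the maximal coordinate-spread of the flow. Because the MWU update~\eqref{eq:MWU} adds $\ep \bbA\bby(\bbq^t)$ to $\bbp^t$ and $\ep \bbB\trans\bbx(\bbp^t)$ to $\bbq^t$, and all payoff entries lie in $[-1,1]$, each coordinate of each point moves by at most $\ep$ per step; hence for any two points in $S$, the gap in a fixed coordinate $p_j$ grows by at most $2\ep$ per step, so $d(S(t)) \le d(S) + 2\ep t$. Consequently $S(t)$ is contained in a box of side lengths at most $d(S) + 2\ep t$ in each of the $n+m$ coordinates, giving $\vol(S(t)) \le (d(S)+2\ep t)^{n+m}$. (Strictly, one must be slightly careful about the degenerate direction $\bbone$ along which the dual space projects to a point, but since we are only upper-bounding volume by a product of coordinate-spreads this is harmless.) Combining the lower and upper bounds yields
\[
[1+\beta]^t \cdot \vol(S) ~\le~ (d(S) + 2\ep t)^{n+m}
\]
for all $t \le \tau$, which is impossible once $t$ is large enough, because the left side grows exponentially while the right side grows only polynomially in $t$.

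The remaining work is the elementary-calculus bookkeeping: I want to identify a time $\tau$, of the form claimed, at which the inequality above is already violated. Taking logarithms, $t\ln(1+\beta) + \ln\vol(S) \le (n+m)\ln(d(S)+2\ep t)$; using $\ln(1+\beta) \ge \beta/2$ for $\beta \in (0,1)$, it suffices that $\tfrac{\beta}{2}t \ge (n+m)\ln(d(S)+2\ep t) + \ln\tfrac{1}{\vol(S)}$. One term on the right, $\ln\tfrac{1}{\vol(S)}$, is handled by the third entry in the max, $\tfrac{4}{\beta}\ln\tfrac{1}{\vol(S)}$. For the logarithmic-in-$t$ term one uses the standard fact that $\tfrac{\beta}{4}t \ge (n+m)\ln(Ct)$ holds once $t \gtrsim \tfrac{n+m}{\beta}\ln\tfrac{n+m}{\beta}$ (with the constant $C$ absorbed into the log since $\ln(Ct) = \ln C + \ln t$ and $d(S)$ contributes via the first entry $d(S)/(2\ep)$ in the max, which dominates the $d(S)$ part of the argument of the logarithm). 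Matching constants carefully gives exactly the three-way maximum in~\eqref{eq:time-bound-unavoid}. I expect the main obstacle to be purely presentational — splitting the inequality cleanly into the three regimes so that the stated bound drops out with the right constants — rather than anything conceptually deep; the one genuine subtlety worth double-checking is that the volume upper bound $(d(S)+2\ep t)^{n+m}$ is legitimate despite the rank deficiency of the dual dynamics, which is why the statement phrases things in terms of coordinate spreads $d(S)$ rather than an intrinsic volume.
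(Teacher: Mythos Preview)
Your proposal is correct and follows essentially the same argument as the paper: contradiction, exponential volume lower bound from Lemma~\ref{lem:Cxy}(3), polynomial upper bound from the bounding-box estimate $(d(S)+2\ep t)^{n+m}$, then logarithms and the three-term split to extract the explicit $\tau$. Your aside about ``rank deficiency'' is unnecessary---the dual space here is the full $\rr^{n+m}$ and $S$ has positive Lebesgue volume by hypothesis, so the box bound is immediate---but otherwise your sketch matches the paper's proof step for step.
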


\begin{pf}
We suppose the contrary, i.e., for all $\tau \le T$, $S(\tau) \subset V$, where $T$ will be specified later.
We analyze how the volume of $S(t)$ changes with $t$ using formula~\eqref{eq:Liouville-discrete}.
We rewrite it here:
\[
\vol(S(t+1)) ~=~ \int_{(\bbp,\bbq)\in S(t)} \det \left( \bbI + \ep \cdot \bbJ(\bbp,\bbq) \right) \,\mathsf{d}V.
\]
By Lemma~\ref{lem:Cxy}, if $S(t)\subset V$, then the above inequality yields 
$\vol(S(t+1)) ~\ge~ \vol(S(t)) \cdot \left( 1+(\hC-\ep)\ep^2 \right)$, and hence
\begin{equation}\label{eq:exp-LB}
\forall t\le T+1,~~~~\vol(S(t)) ~\ge~ \vol(S) \cdot \left( 1+(\hC-\ep)\ep^2 \right)^t.
\end{equation}

On the other hand, observe that in the update rule~\eqref{eq:MWU}, each variable is changed by a value in the interval $[-\ep,\ep]$ per time step,
since every entry in $\bbA,\bbB$ is in the interval $[-1,1]$.
Consequently, the range of possible values for each variable in $S(t)$ lies within an interval of length at most $d(S) + 2\ep t$,
and hence $S(t)$ is a subset of a hypercube with side length $d(S) + 2\ep t$. Therefore,
\begin{equation}\label{eq:poly-UB}
\forall t\le T+1,~~~~\vol(S(t)) ~\le~ \left( d(S) + 2\ep t \right)^{n+m}.
\end{equation}

Note that the lower bound in~\eqref{eq:exp-LB} is exponential in $t$, while the upper bound in~\eqref{eq:poly-UB} is polynomial in $t$.
Intuitively, it is clear that the two bounds cannot be compatible for some large enough $T$, and hence a contradiction.
The rest of this proof is to derive how large $T$ should be.
Precisely, we seek $T$ such that
\[
\left( d(S) + 2\ep T \right)^{n+m} ~<~ \vol(S) \cdot \left( 1+(\hC-\ep)\ep^2 \right)^T.
\]
First, we impose that $T\ge d(S)/(2\ep)=:T_1$. Taking logarithm on both sides, to satisfy the above inequality, it suffices that
\[
(n+m) \ln (4\ep T) ~<~ \frac{T \cdot (\hC-\ep)\ep^2}{2} + \ln (\vol(S)).
\]
Since $4\ep \le 1$, it suffices that
\[
(\hC-\ep)\ep^2T - 2(n+m) \ln T ~>~ 2\cdot \ln \frac{1}{\vol(S)}.
\]

Next, observe that when $T\ge \frac{8(n+m)}{(\hC-\ep)\ep^2} \ln \frac{4(n+m)}{(\hC-\ep)\ep^2} =: T_2$,
we have $(\hC-\ep)\ep^2T - 2(n+m) \ln T \ge (\hC-\ep)\ep^2T/2$. (We will explain why in the next paragraph.)
Then it is easy to see that $T\ge \frac{4}{(\hC-\ep)\ep^2} \ln \frac{1}{\vol(S)} =: T_3$ suffices. Overall, we need $T = \max \{T_1,T_2,T_3\}$.

Lastly, we explain why the inequality in the last paragraph holds. Observe that it is equivalent to $\frac{T}{\ln T} \ge \frac{4(n+m)}{(\hC-\ep)\ep^2} =: \gamma$.
Then it suffices to know that $\frac{T}{\ln T}$ is an increasing function of $T$ when $T\ge 3$, and
\[
\frac{T_2}{\ln T_2} ~=~ \frac{2\gamma \ln \gamma}{\ln 2 + \ln \gamma + \ln\ln \gamma} ~\ge~ \frac{2\gamma \ln \gamma}{2\ln \gamma} ~=~ \gamma,
\]
where the only inequality sign in the above expression holds because $\ln \gamma \ge \ln \ln \gamma + \ln 2 > 0$ when $\gamma\ge 3$.
\end{pf}

The following proposition is straight-forward.

\begin{proposition}\label{pr:tD}
If $U$ is a primal open set, then $\tGinv(U)$ is an open and unbounded subset in $\calD$.
\end{proposition}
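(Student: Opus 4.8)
The plan is to establish the two claims --- openness and unboundedness --- independently; both are short, and the statement is essentially a bookkeeping consequence of the setup in Section~\ref{subsect:relation}.

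For openness I would write $U = U' \cap \Delta$ with $U'$ open in $\rr^{n+m}$, as the definition of a primal open set provides. The map $\tG : \calD \to \rr^{n+m}$ given by~\eqref{eq:dual-to-primal} is continuous (indeed smooth), since each coordinate is an exponential divided by a strictly positive sum of exponentials. Because $\tG(\bbp,\bbq)$ always lands in $\Delta$, the constraint ``$\tG(\bbp,\bbq)\in\Delta$'' is automatic, so $\tGinv(U) = \tG^{-1}(U'\cap\Delta) = \tG^{-1}(U')$, which is the preimage of an open set of $\rr^{n+m}$ under a continuous map and hence open in $\calD$. The one place that wants a sentence of care is precisely this identification $\tGinv(U)=\tG^{-1}(U')$: since $U$ is open \emph{relative to $\Delta$} and not in the ambient $\rr^{n+m}$, one has to route through $U'$ rather than apply continuity of $\tG$ against $U$ directly.

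For unboundedness I would first note that $\tGinv(U)$ is nonempty (the proposition being meaningful only for nonempty $U$), because $\tG$ maps onto $\interior(\Delta)\supseteq U$; explicitly, $\big((\ln x_j)_{j\in S_1},(\ln y_k)_{k\in S_2}\big)$ is sent to $(\bbx,\bby)$ for any $(\bbx,\bby)\in\interior(\Delta)$. Fixing some $(\bbp_0,\bbq_0)\in\tGinv(U)$, the key input is the many-to-one description of $\tG$ from \cite[Proposition~1]{CP2019} recalled in Section~\ref{subsect:relation}: adding $c\,\bbone$ to $\bbp_0$ leaves $\tG(\bbp_0,\bbq_0)$ unchanged. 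Hence $(\bbp_0 + c\,\bbone,\,\bbq_0)\in\tGinv(U)$ for every $c\in\rr$, and letting $c\to+\infty$ yields points of $\tGinv(U)$ of arbitrarily large norm, so $\tGinv(U)$ is unbounded. I do not expect any real obstacle: the unboundedness is just a manifestation of the non-injectivity of the primal--dual correspondence that the paper has deliberately retained by working with $\calD$ rather than the reduced dual space.
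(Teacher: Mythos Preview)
Your proof is correct and is exactly the straightforward argument the paper has in mind (the paper declares the proposition ``straight-forward'' and omits the proof). Both the openness via continuity of $\tG$ routed through the ambient open set $U'$, and the unboundedness via the translation invariance $(\bbp,\bbq)\mapsto(\bbp+c\,\bbone,\bbq)$ recorded in Section~\ref{subsect:relation}, are the natural ingredients.
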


\begin{pfof}{Theorem~\ref{thm:unavoidable}}
Let $U'$ denote the set of points in $U$ which, when taken as a starting point, will eventually reach a point outside $U$.
Suppose the theorem does not hold, i.e., $U'$ is not dense. %the set of points in the primal set $U$ which eventually reach a point outside $U$ is not dense.
Then we can find a primal open set $B\subset U$ such that its flow must stay in $U$ forever. %In particular, $\tP^{-1}(B)\subset U$.

Let $V := \tGinv(U)$ and $S' := \tGinv(B)$. 
Due to the discussion immediately after Equation~\eqref{eq:Cxy} and the assumption that $U$ is uncontrollable in the primal space, $V$ is uncontrollable in the dual space.
On the other hand, $S'$ is open and unbounded due to Proposition~\ref{pr:tD}.
But it is easy to find a subset $S\subset S'$ which is open and bounded. Thus, $S$ has positive and finite volume.
We apply Lemma~\ref{lem:key-in-dual} with the sets $V,S$ given above,
to show that using update rule~\eqref{eq:MWU}, the flow of $S$ at some time $\tau$ contains a point $(\bbp^\tau,\bbq^\tau)\notin V$.
By definition of $V$, $\tG(\bbp^\tau,\bbq^\tau)\notin U$.

Let $(\bbp^0,\bbq^0)$ denote a point in $S$ such that its flow at time $\tau$ is $(\bbp^\tau,\bbq^\tau)$.
Since $S$ is a subset of $S'$, $\tG(\bbp^0,\bbq^0)\in B$.
Due to the equivalence between the primal update rule~\eqref{eq:MWU-primal} and the dual update rule~\eqref{eq:MWU},
we can conclude that when $\tG(\bbp^0,\bbq^0) \in B$ is used as the starting point of the primal update rule~\eqref{eq:MWU-primal},
at time $\tau$ its flow is $\tG(\bbp^\tau,\bbq^\tau)$ which is not in $U$, a contradiction.
\end{pfof}
\section{Extremism of MWU in Zero-Sum Games}\label{sect:extremism}

Here, we focus on MWU in zero-sum game.
\cite{BP2018} and \cite{Cheung2018} showed that the dynamic converges to the boundary of $\Delta$
and fluctuates bizarrely near the boundary by using a potential function argument.
However, the potential function has value $+\infty$ at every point on the boundary so it cannot be distinctive there,
and hence it cannot provide any useful insight on how the dynamic behaves near the boundary.
In general, the behaviors near boundary can be highly unpredictable, as suggested by the ``chaotic switching'' phenomenon found by~\cite{AC2010},
although more regular (yet still surprising) patterns were found in lower-dimensional systems~\cite{Gau1992}.

In~\cite{BP2018,Cheung2018}, a central \emph{discouraging} message is convergence towards boundary of $\Delta$ is inevitable
even when the underlying zero-sum game has a fully-mixed Nash equilibrium.
What can we still hope for after this? Will $(\bbx^t,\bby^t)$ remain within a somewhat \emph{reasonable} range around the Nash equilibrium forever?
We answer the latter question with a strikingly general negative answer for almost all zero-sum games, with the two theorems below.

\begin{definition}
The \emph{extremal domain with threshold $\delta$} consists of all points $(\bbx,\bby)$ such that
each of $\bbx,\bby$ has exactly one entry of value at least $1-\delta$.
\end{definition}

\begin{theorem}\label{thm:extremism}
Let $(\bbA,-\bbA)$ be a two-person zero-sum game. Suppose the following:
\begin{enumerate}
\item[(A)] Every $2\times 2$ sub-matrix of $\bbA$ is non-trivial. Let $\alpha_1 > 0$ denote the minimum distance from triviality of all
$2\times 2$ sub-matrices of $\bbA$. (Recall the distance measure~\eqref{eq:distance-from-trivial}.)
\item[(B)] No two entries in the same row or the same column have exactly the same value.
Let $\alpha_2 > 0$ be the minimum difference between any two entries of $\bbA$ in the same row or the same column.
\end{enumerate}
Let $N := \max\{n,m\}$. For any $0 < \delta < \alpha_2/4$, if both players use MWU with step-size $\ep$ satisfying
$
0< \ep <  \min \left\{ \frac{1}{32n^2 m^2} ~,~ \frac {(\alpha_1)^2}{18} \cdot \left(\frac{\delta}{N-1}\right)^{8(N-1)/(\alpha_2-4\delta)+2} \right\},
$
then there exists a dense subset of points in $\interior(\Delta)$,
such that the flow of each such point must eventually reach the extremal domain with threshold $\delta$.
\end{theorem}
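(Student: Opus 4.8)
The plan is to split the argument into two phases: in the first I use the unavoidability theorem (Theorem~\ref{thm:unavoidable}) to force \emph{one} player to become highly concentrated, and in the second, more delicate, phase I propagate this to a state in which \emph{both} players are $\delta$-extremal.

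\emph{Phase~1 (one player becomes extremal).} Fix an auxiliary threshold $\delta_1$, much smaller than $\delta$, to be pinned down at the end, and let $U := \{(\bbx,\bby)\in\interior(\Delta) : \max_j x_j \le 1-\delta_1\ \text{and}\ \max_k y_k \le 1-\delta_1\}$ be the open set of profiles in which neither player is $\delta_1$-extremal. I claim $U$ is uncontrollable. By Lemma~\ref{lem:Cxy}(2), $C(\bbx,\bby)$ equals the variance of the random variable whose value is $X_{jk} := A_{jk}-[\bbA\bby]_j-[\bbA\trans\bbx]_k$ with probability $x_jy_k$; writing this variance as $\tfrac12\sum x_jy_kx_{j'}y_{k'}(X_{jk}-X_{j'k'})^2$ and keeping only the terms built from the two largest coordinates $j_1,j_2$ of $\bbx$ and $k_1,k_2$ of $\bby$, one invokes the identity $X_{j_1k_1}-X_{j_1k_2}-X_{j_2k_1}+X_{j_2k_2}=A_{j_1k_1}-A_{j_1k_2}-A_{j_2k_1}+A_{j_2k_2}$ (the $j$- and $k$-marginal corrections cancel), whose right-hand side has absolute value $\gtrsim\alpha_1$ by assumption (A). On $U$ each player has a second coordinate of size $\ge\delta_1/(N-1)$, so the relevant product of coordinate factors is $\ge(\delta_1/(N-1))^4$ and $C(\bbx,\bby)\gtrsim\alpha_1^2(\delta_1/(N-1))^4$. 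Choosing $\delta_1$ polynomially small in $\delta$ — of order $(\delta/(N-1))^{2(N-1)/(\alpha_2-4\delta)+1/2}$ — makes this lower bound match, up to constants, the stated bound on $\ep$, so $\ep<\inf_U C$ and Theorem~\ref{thm:unavoidable} applies: there is a dense subset of $U$ from which the flow eventually leaves $U$, i.e.\ reaches a state where, after relabeling, Player~1 has $x_j^{t_0}\ge 1-\delta_1$ for some $j$ and some time $t_0$. (If both players cross the threshold simultaneously, both are $\delta_1$-extremal, hence $\delta$-extremal, and we are done.)

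\emph{Phase~2 (propagation to both players).} Suppose Player~1 is $\delta_1$-extremal at $j$ at time $t_0$. Two facts drive this phase. (i)~Whenever Player~1 is $\delta$-extremal at a strategy $j$, assumption (B) forces a gap of at least $\alpha_2-4\delta>0$ between Player~2's best coordinate $k^\star(j):=\argmin_k A_{jk}$ and every other coordinate of $-\bbA\trans\bbx$, so in the dual update~\eqref{eq:MWU} the difference $q_{k^\star(j)}-q_k$ grows at rate $\ge\ep(\alpha_2-4\delta)$ and $y_{k^\star(j)}$ climbs monotonically toward $1-\delta$ (symmetrically with the players swapped). (ii)~A coordinate that has fallen below $\delta_1$ cannot climb back above $\delta$ for at least $\Omega\!\big(\tfrac1\ep\ln\tfrac1{\delta_1}\big)$ steps; because $\delta_1$ is polynomially small in $\delta$ with exponent $\Theta(N/(\alpha_2-4\delta))$, this ``freezing time'' exceeds the total time $O\!\big(\tfrac{N}{\ep(\alpha_2-4\delta)}\ln\tfrac1\delta\big)$ taken by all best-response moves that can occur before a vertex pair recurs. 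I would then trace the dynamics in a neighbourhood of the simplex boundary, where — using (ii) together with the fact that the perturbation caused by the frozen coordinates accumulates to only $O(\delta_1)$ in the dual variables (the factor $\ep$ times the geometrically decaying small coordinates telescopes) — it is a faithful perturbation of best-response dynamics: Player~2 races to $k^\star(j)$; if $j$ is Player~1's best response to $k^\star(j)$ the profiles lock in and both players reach $1-\delta$; otherwise Player~1 migrates to a new vertex $j'$, and since Player~1's previously dominant coordinates are frozen small, this migration passes through a state with $x_{j'}\ge 1-\delta$ while $y_{k^\star(j)}\ge 1-\delta$ still holds. Since there are only finitely many vertex pairs, a simultaneous-extremal event must occur within $O(N)$ such migrations, and propagating the thresholds through these $O(N)$ moves is precisely what forces $\ep$ to be exponentially small in $1/(\alpha_2-4\delta)$, giving the stated bound.

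\emph{Conclusion and main obstacle.} The dense subset of $U$ produced by Phase~1, together with the open set $\interior(\Delta)\setminus\overline{U}$ of already-$\delta_1$-extremal profiles (to which Phase~2 applies directly), is dense in $\interior(\Delta)$, since the topological boundary of $U$ has empty interior; from every point of this dense set the flow reaches the extremal domain with threshold $\delta$. I expect the main obstacle to be Phase~2: one must control what happens after one player is pinned near a vertex while the other has not yet concentrated, \emph{uniformly over histories} — in particular when the target coordinate $y_{k^\star(j)}^{t_0}$ is arbitrarily tiny, so that Player~2's race to it is arbitrarily long and Player~1 necessarily migrates in the meantime. Verifying that the resulting perturbed best-response trajectory genuinely passes through a state in which \emph{both} players are $\delta$-extremal (rather than the two concentrations never coinciding in time), while bounding the number of migrations, is where the substantially novel arguments are needed.
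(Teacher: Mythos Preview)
Your Phase~1 is essentially the paper's Steps~1--2: the paper works with the set $\calE^\kappa_{2,2}=\{(\bbx,\bby):\text{at least two }x_j>\kappa\text{ and at least two }y_k>\kappa\}$ rather than your $U$, but since $\max_j x_j\le 1-\delta_1$ forces a second coordinate of size $\ge\delta_1/(n-1)$, the two sets are interchangeable. The paper bounds the variance directly as $\sum x_jy_k(X_{jk}-\expect{X})^2$, getting a $\kappa^2$ (not $\kappa^4$) lower bound on $C$, but this only shifts constants.

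Your Phase~2, however, has the gap you yourself flag, and the paper's resolution is both simpler and different in kind from what you anticipate. Your picture of alternating best-response migrations is a red herring: in the paper's argument \emph{Player~1 never migrates}. Having chosen $\kappa$ (your $\delta_1$) small enough, Player~1 stays $\delta$-extremal at $\hat{\jmath}$ for $T_1=\lfloor\tfrac{1}{2\ep}\ln\tfrac{\delta}{(N-1)\kappa}\rfloor$ steps, and the entire Phase~2 argument fits inside this single window. The key point you are missing is that \emph{Player~2 need not concentrate on the best response $k^\star(\hat{\jmath})$}; she will concentrate on \emph{some} strategy within a time bounded independently of the initial $\hbby$. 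The paper's Lemma~\ref{lem:single-minded} makes this precise via a potential/cascade argument: sort Player~2's strategies so that the perturbed payoffs satisfy $a_1>a_2>\cdots>a_m$ with gaps $\ge\alpha_2$, and let $\hat{k}(t)=\min\{k:y_k^t>\delta/(m-1)\}$. As long as Player~2 is not yet extremal, one shows $\hat{k}(\tau+T)\le\hat{k}(\tau)-1$ for $T=\lceil\tfrac{2}{\ep(\alpha_2-4\delta)}\ln\tfrac{m-1}{\delta}\rceil$, because every $y_\ell$ with $\ell>\hat{k}(\tau)$ shrinks relative to $y_{\hat{k}(\tau)}$ and drops below $\delta/(m-1)$ within $T$ steps. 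Since $\hat{k}\ge 1$, Player~2 is extremal within $(m-1)T$ steps, and one simply takes $\kappa$ small enough that $(m-1)T\le T_1$; this is exactly what produces the exponent $8(N-1)/(\alpha_2-4\delta)+2$ in the step-size bound.

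So the ``arbitrarily tiny $y_{k^\star}$'' case that worries you is handled not by tracking migrations until a vertex pair recurs, but by observing that whichever strategy currently leads (in the payoff-sorted sense) will absorb the mass of all lower-ranked strategies within one $T$, and this leader index can only drop at most $m-1$ times.
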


\begin{theorem}\label{thm:extremal-io}
Let $v$ denote the game value of the zero-sum game $(\bbA,-\bbA)$.
In addition to the conditions required in Theorem~\ref{thm:extremism},
if (i) $\min_{j\in S_1,k\in S_2} |A_{jk} - v| \ge r > 0$, and (ii) $6\ep + 4\delta \le r$,
then there exists a dense subset of points in $\interior(\Delta)$,
such that the flow of each such point visits and leaves extremal domain with threshold $\delta$ infinitely often. 
\end{theorem}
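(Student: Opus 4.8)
The plan is to treat Theorem~\ref{thm:extremism} as a black box and bolt on two ingredients: an \emph{escape lemma} asserting that the flow cannot remain inside the extremal domain, and a Baire-category (recurrence) argument that upgrades ``the flow reaches the extremal domain once'' to ``the flow enters and leaves it infinitely often.'' Write $E_\delta$ for the extremal domain with threshold $\delta$. Since all payoffs lie in $[-1,1]$ we have $\alpha_2\le 2$, so $\delta<\alpha_2/4\le\tfrac12$, and hence at any point of $E_\delta$ there is a \emph{unique} coordinate $j_0\in S_1$ with $x_{j_0}\ge 1-\delta$ and a unique $k_0\in S_2$ with $y_{k_0}\ge 1-\delta$; also recall that MWU keeps the play strictly inside $\Delta$, so the flow $\Phi^t$ of~\eqref{eq:MWU-primal} is defined on all of $\interior(\Delta)$ for every $t\ge 0$.

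\textbf{Escape lemma.} Fix a state $(\bbx,\bby)\in E_\delta$ with concentrated pair $(j_0,k_0)$. Condition~(i) forces $|A_{j_0 k_0}-v|\ge r>0$, so $A_{j_0 k_0}\neq v$; hence the pure profile $(e_{j_0},e_{k_0})$ is \emph{not} a Nash equilibrium of $(\bbA,-\bbA)$, because a pure equilibrium of a zero-sum game necessarily has game value equal to that matrix entry. Combined with the elementary minimax inequalities $\min_{k'}A_{j_0 k'}\le v\le\max_{j'}A_{j' k_0}$, this yields: either $A_{j_0 k_0}\ge v+r$, in which case there is $k^*$ with $A_{j_0 k_0}-A_{j_0 k^*}\ge r$, or $A_{j_0 k_0}\le v-r$, in which case there is $\hat{j}$ with $A_{\hat{j} k_0}-A_{j_0 k_0}\ge r$; in the first case $k^*$ strictly outperforms $k_0$ for Player~2 against $\bbx$, in the second $\hat{j}$ strictly outperforms $j_0$ for Player~1 against $\bby$. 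Since $\bby$ (resp.\ $\bbx$) puts mass at least $1-\delta$ on $k_0$ (resp.\ $j_0$), every coordinate of $\bbA\bby$ (resp.\ $\bbA\trans\bbx$) is within $2\delta$ of the corresponding coordinate of $\bbA e_{k_0}$ (resp.\ $\bbA\trans e_{j_0}$), so the relevant payoff gap is still at least $r-4\delta\ge 6\ep>0$ by condition~(ii). Therefore, as long as the flow remains in $E_\delta$ --- in particular the concentrated pair stays $(j_0,k_0)$, since one step of~\eqref{eq:MWU} moves each dual coordinate by at most $\ep$ and $\ep$ is tiny --- the dual gap $p_{\hat{j}}-p_{j_0}$ (resp.\ $q_{k^*}-q_{k_0}$) increases by at least $\ep(r-4\delta)$ per step, which after finitely many steps forces $x_{j_0}$ (resp.\ $y_{k_0}$) below $1-\delta$; that is, the flow reaches the open set $\Delta\setminus E_\delta$ in finite time.

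\textbf{Recurrence via Baire category, and the main obstacle.} For an integer $\ell\ge 1$, let $G_\ell\subseteq\interior(\Delta)$ be the set of starting points whose flow enters $\interior(E_\delta)$ and then returns to $\Delta\setminus E_\delta$ at least $\ell$ times; then $\bigcap_{\ell\ge 1}G_\ell$ is contained in the set asserted by the theorem, so it suffices to prove $\bigcap_{\ell\ge 1}G_\ell$ dense. Each $G_\ell$ is open, since $\interior(E_\delta)$ and $\Delta\setminus E_\delta$ are open (the latter because $E_\delta$ is closed), membership in $G_\ell$ is a union over finite time-profiles of finite conjunctions of conditions of the form ``$\Phi^{t}(\cdot)\in\interior(E_\delta)$'' or ``$\Phi^{s}(\cdot)\in\Delta\setminus E_\delta$'', and every $\Phi^t$ is continuous. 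And $G_1$ is dense: Theorem~\ref{thm:extremism} provides a dense set of starting points whose flow reaches $E_\delta$ --- indeed $\interior(E_\delta)$, the discrete flow overshooting the threshold except on a thin set of starts --- and the escape lemma then carries the flow to $\Delta\setminus E_\delta$. For the inductive step, assume $G_\ell$ open and dense and let $O$ be nonempty open; then $O\cap G_\ell$ is nonempty open, and about any of its points we may choose a smaller open $W\subseteq O\cap G_\ell$ on which one fixed time-profile works, so the $\ell$-th return time $T$ is constant and $\Phi^{T}(W)\subseteq\Delta\setminus E_\delta$. Since $\det(\bbI+\ep\bbJ)=1+\calO(\ep^2)\neq 0$, each $\Phi^t$ has nonsingular Jacobian, hence is an open map, so $\Phi^{T}(W)$ is nonempty open and meets the dense set $G_1$; picking $w\in\Phi^{T}(W)\cap G_1$ and a preimage $z\in W$ with $\Phi^{T}(z)=w$, the flow of $z$ enters and leaves $E_\delta$ at least $\ell$ times before $T$ and then, continuing as the flow of $w\in G_1$, once more --- so $z\in O\cap G_{\ell+1}$. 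Hence every $G_\ell$ is open and dense, and since $\interior(\Delta)$ is a Baire space, $\bigcap_\ell G_\ell$ is dense. The analytic crux is the escape lemma: the observation that condition~(i) forbids any pure profile from being an equilibrium, and the $2\delta$-perturbation estimate that survives the passage from the exact pure profile $(e_{j_0},e_{k_0})$ to a nearby point of $E_\delta$ --- this is exactly where hypotheses~(i) and~(ii) are spent. Everything after that is soft, the only point of care being to express ``enters and leaves $E_\delta$'' through the open sets $\interior(E_\delta)$ and $\Delta\setminus E_\delta$ so that the $G_\ell$ are genuinely open.
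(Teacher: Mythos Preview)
Your proof is correct and follows a genuinely different route from the paper's in both of its two components.

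For the escape step, the paper argues by contradiction via the MWU regret bound: if the flow stayed forever near the extremal profile $(e_{j_0},e_{k_0})$, Player~1's long-run average payoff would lie in $A_{j_0 k_0}\pm 2\delta$, whereas the regret guarantee forces it into $v\pm 3\ep$; condition~(ii) makes these intervals disjoint. Your argument is more direct and more elementary: from $|A_{j_0 k_0}-v|\ge r$ and the minimax inequalities you extract a strictly improving deviation for one player, and then track the corresponding dual gap, which grows by at least $\ep(r-4\delta)$ per step. This avoids the black-box regret lemma entirely and makes the role of hypotheses~(i) and~(ii) completely transparent.

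For the recurrence step, the paper constructs a nested sequence of compact sets $\calC_1\supset\calC_2\supset\cdots$ (each obtained by pulling back a small closed ball through finitely many iterates of the flow) and invokes Cantor's intersection theorem to produce a point with infinitely many visits; density comes from running this construction inside an arbitrary open set. You instead define $G_\ell$ as the set of starts with at least $\ell$ entries-and-exits, show each $G_\ell$ is open and dense by induction (using that $\Phi^T$ is an open map), and apply Baire category. The two arguments are cousins, but your packaging is cleaner: density of $\bigcap_\ell G_\ell$ falls out immediately rather than being reassembled from a pointwise construction.

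One point worth tightening: your passage from ``reaches $E_\delta$'' (which is what Theorem~\ref{thm:extremism} literally gives) to ``reaches $\interior(E_\delta)$'' via ``the discrete flow overshooting the threshold except on a thin set of starts'' is informal. The simplest rigorous fix is to apply Theorem~\ref{thm:extremism} with a threshold $\delta'$ slightly smaller than $\delta$; all the hypotheses (the step-size bound, $\delta<\alpha_2/4$, and condition~(ii)) are open in $\delta$, so a small decrease costs nothing, and then the flow lands in $E_{\delta'}\subset\interior(E_\delta)$ outright. The paper's own proof glosses over the same issue when it asserts an open neighbourhood $\calO_1$ whose entire flow ``visits'' $E_\delta$.
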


To see the power of the Theorem~\ref{thm:extremism}, consider a zero-sum game with a fully-mixed Nash Equilibrium.
The theorem implies that in \emph{any arbitrarily small open neighbourhood} of the Nash equilibrium,
there exists a starting point such that its flow will eventually reach a point where each player concentrates her game-play on only one strategy.
We call this \emph{extremism of game-play}, since both players are single-minded at this point:
they are concentrating on one strategy and essentially ignoring all the other available options.

There are two assumptions on the matrix $\bbA$. If the matrix is to be drawn uniformly randomly from the space $[-1,+1]^{n\times m}$,
the random matrix satisfies assumptions (A) and (B) almost surely.
Unfortunately, the classical Rock-Paper-Scissors game is a zero-sum game which does not satisfy assumption (A) in Theorem~\ref{thm:extremism}, and thus the theorem is not applicable.
In Appendix~\ref{app:RPS}, we provide a separate proof which shows similar result to Theorem~\ref{thm:extremism} for this specific game.

\subsection{Proof Sketch of Theorem~\ref{thm:extremism}}

The full proofs of the two theorems are deferred to Appendix~\ref{app:extremism}. Here, we give high-level description of the proof of Theorem~\ref{thm:extremism}.

We first define a family of primal open sets in $\interior(\Delta)$.
Let $\calEd_{a,b}$ be the collection of all points $(\bbx,\bby)$, such that 
at least $a$ entries in $\bbx$ are larger than $\delta$, and at least $b$ entries in $\bby$ are larger than $\delta$.
The first step is to use condition (A) to show that for any $1/3 > \kappa > 0$,
\begin{equation}\label{eq:calE-uncontrol}
\calE^\kappa_{2,2}~\text{is an uncontrollable primal set with}~~~\inf_{(\bbx,\bby)\in \calE^\kappa_{2,2}} C(\bbx,\bby) \ge \kappa^2 (\alpha_1)^2 / 2.
\end{equation}

Then we can apply Theorem~\ref{thm:unavoidable} to show that for any sufficiently small step-size $\ep$, there exists
a dense subset of points in $\calE^\kappa_{2,2}$ such that the flow of each such point must eventually reach a point outside $\calE^\kappa_{2,2}$.
Let $(\hbbx,\hbby)$ denote the reached point outside $\calE^\kappa_{2,2}$.
At $(\hbbx,\hbby)$, one of the two players, which we assume to be Player 1 without loss of generality,
concentrates her game-play on only one strategy, which we denote by strategy $\hj$.
We have: for any $j\neq \hj$, $\hat{x}_j \le \kappa$, and hence $\sum_{j\in S_1\setminus\{\hj\}} \hat{x}_j \le (N-1)\kappa$.

When we pick $(N-1)\kappa \ll \delta$, where $\delta$ is the quantity specified in Theorem~\ref{thm:extremism},
after the dynamic reaches $(\hbbx,\hbby)$, the total probability of choosing any strategy other than $\hj$ by Player 1 will be at most $\delta$
for a long period of time --- this is true because that total probability can increase by a factor of at most $\exp(2\ep)$ per time step.
In other words, we may think that the game essentially becomes an $1\times m$ sub-game of $(\bbA,-\bbA)$ during this long period of time.

We then show that during the long period of time, \emph{no matter what $\hbby$ is}, the game-play of Player 2 must become concentrating on one strategy too.
Naively, one might think that this strategy ought to be $k$ which maximizes $-A_{\hj k}$, which is the dominant strategy of Player 2 in the $1\times m$ sub-game.
However, if $\hat{y}_k$ is tiny, this might not be true. To reach the conclusion, we have to use a technical lemma, Lemma~\ref{lem:single-minded} in Appendix~\ref{app:extremism}.
\section{Continuous Analogue of OMWU}\label{sect:cont-analog}

As the update rule~\eqref{eq:OptMD} at time $t+1$ depends on the past updates at times $t$ and $t-1$,
at first sight it might seem necessary to perform volume analysis in the product space $\Delta \times \Delta$ that contains $((\bbp_t,\bbq_t),(\bbp_{t-1},\bbq_{t-1}))$.
However, this raises a number of technical difficulties.
First, since the initialization sets $(\bbp_1,\bbq_1)$ as a function of $(\bbp_0,\bbq_0)$,
the initial set has to lie in a proper manifold in $\Delta \times \Delta$,
thus it has zero Lebesgue measure w.r.t.~$\Delta \times \Delta$, making volume analysis useless, as the volume must remain zero when the initial set is of measure zero.
Second, even if we permit $\bbp_1,\bbq_1$ to be unrelated to $\bbp_0,\bbq_0$ so that we can permit an initial set with positive measure,
the OMWU update rule is not of the same type that is presumed by the formula~\eqref{eq:Liouville-discrete}.
We will need to use the more general form of integration by substitution, and the volume integrand there
will \emph{not} be of the form $\bbI + \ep \cdot \bbJ$, hence the determinant is not a polynomial of $\ep$ with constant term $1$.
This imposes huge difficulty in analysis, forbidding us to present a clean volume analysis as was done in~\cite{CP2019}.
To bypass the issues, we first derive a continuous analogue of OMWU in games as an ODE system, which will permit us to have a clean volume analysis.

\subsection{Continuous Analogue of OMWU in General Contexts}\label{sect:cont-analog-general}

%We first give some high level discussion about the OMWU update rule~\eqref{eq:OptMD} in the dual space.
We focus on Player 1 who uses the OMWU update rule~\eqref{eq:OptMD} in the dual space.
To set up for the most general context, we replace $\bbA \cdot \bby(\bbq^t)$ by $\bbu(t)$,
which represents the utility (or payoff) vector at time $t$. We assume $\bbu(t)$ is $C^2$-differentiable.
%Note that this replacement of notation implicitly assumes that $\bbu$ is not depending on $\bbp,\bbq$ anymore.
We rewrite the rule as below:
\begin{equation}\label{eq:OptMD-rewrite}
\frac{\bbp^{t+1} ~-~ \bbp^t}{\ep} ~=~  \bbu(t) ~+~ \ep \cdot\frac{\bbu(t) - \bbu(t-1)}{\ep}.
\end{equation}
Recall that for any smooth function $f:\rr\ra\rr$,
its first derivative is $\lim_{\ep\ra 0} (f(x+\ep)-f(x))/\ep$.
For readers familiar with Euler discretization and finite-difference methods,
the above discrete-time rule naturally motivates the following differential equation, where for any variable $v$, $\dot v \equiv \difft{v}$:
\begin{equation}\label{eq:DE-OptMD-general}
\dot \bbp ~=~ \bbu ~+~ \ep \cdot \dot{\bbu}.
\end{equation}

To numerically simulate~\eqref{eq:DE-OptMD-general}, we should take into account various informational constraints:
\begin{itemize}[leftmargin=0.2in]
\item If the function $\bbu$ is explicitly given and it is a simple function of time (e.g.~a polynomial),
the function $\dot{\bbu}$ can be explicitly computed.
Euler method with step-size $\Delta t = \ep$ is the update rule
$
\bbp(t+\ep) = \bbp(t) + \ep \cdot \bbu(t) + \ep^2 \cdot \dot{\bbu}(t).
$
\item However, in some scenarios, $\bbu$ is a rather complicated function of $t$,
so computing explicit formula for $\dot{\bbu}$ might not be easy.
Yet, we have full knowledge of values of $\bbu(0),\bbu(\Delta t),\bbu(2\cdot \Delta t),\bbu(3\cdot \Delta t),\cdots$.
Then a common approach to approximately compute $\dot{\bbu}(N\cdot \Delta t)$ is to use the central finite-difference method: 
$
\dot{\bbu}(N\cdot \Delta t) ~=~ \frac{\bbu((N+1)\cdot \Delta t) - \bbu((N-1)\cdot \Delta t)}{2\cdot \Delta t} ~+~ \calO((\Delta t)^2).
$
Euler method with step-size $\Delta t = \ep$ which makes use of the above approximation gives the update rule
$
\bbp(t+\ep) = \bbp(t) + \ep \cdot \bbu(t) + \ep \cdot \frac{\bbu(t+\ep) - \bbu(t-\ep)}{2}.
$
%with local error $\calO(\ep^2) + \calO(\ep^3) = \calO(\ep)$.
\item Even worse, in the context of online learning or game dynamics,
at time $N\cdot \Delta t$, the players have only observed
$\bbu(0),\bbu(\Delta t),\bbu(2\cdot \Delta t),\cdots,\bbu(N\cdot \Delta t)$,
but they do not have any knowledge on the \emph{future} values of $\bbu$.
Due to the more severe constraint on information,
we have to settle with the backward finite-difference method to approximately compute $\dot{\bbu}(N\cdot \Delta t)$:
$
\dot{\bbu}(N\cdot \Delta t) ~=~ \frac{\bbu(N\cdot \Delta t) - \bbu((N-1)\cdot \Delta t)}{\Delta t} ~+~ \calO(\Delta t),
$
which has a higher-order error when compared with the central finite-difference method.
Euler method with step-size $\Delta t = \ep$ which makes use of the above approximation gives the rule~\eqref{eq:OptMD-rewrite}, by identifying $\bbp(t+\ep)$ as $\bbp^{t+1}$.
Due to an error that occurs when we approximate $\dot{\bbu}$ as above,
\begin{equation}\label{eq:online-Euler-error}
\ep \cdot \bbu(t) + \ep\cdot (\bbu(t)-\bbu(t-1)) ~=~ \ep \left[\bbu(t) + \ep \cdot \dot{\bbu}(t)\right] + \calO(\ep^3),
\end{equation}
where the LHS is the quantity $\bbp^{t+1} - \bbp^t$ in the OMWU update rule~\eqref{eq:OptMD-rewrite},
and the first term in the RHS is the standard Euler discretization of~\eqref{eq:DE-OptMD-general}.
\end{itemize}
%The local errors for the three cases can all be eventually expressed as $\calO(\ep^2)$.
%One common source of local error term occurs when we approximate any $\bbu(t+\Delta t)$ for $\Delta t\in [0,\ep]$ by $\bbu(t)$.
%Modulo this main error term, the first and second cases both have remaining errors of $\calO(\ep^3)$,
%while the third case has worse remaining error of $\calO(\ep^2)$.

%The local truncation errors for the three cases are all $\calO(\ep^2)$, which is standard for Euler method.
%However, the hidden constants in the big-Oh notation vary.
%We conclude the above discussion with the following proposition.

\begin{proposition}\label{prop:OptMD-discretization}
From differential equation~\eqref{eq:DE-OptMD-general}, when only \emph{online value oracle} for a $C^2$-differentiable function $\bbu$ is given,
the discrete-time update rule~\eqref{eq:OptMD} is obtained by first using backward finite-difference method with step-size $\ep$ to approximate $\dot{\bbu}$,
and then applying the Euler discretization method with step-size $\ep$. Also, Equation~\eqref{eq:online-Euler-error} holds.
\end{proposition}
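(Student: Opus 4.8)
The plan is to treat this as a direct verification that reconciles the three discretization steps with formula~\eqref{eq:OptMD}. First I would write the standard Euler discretization of~\eqref{eq:DE-OptMD-general} with step-size $\Delta t = \ep$: sampling the continuous trajectory at multiples of $\ep$ and identifying $\bbp(t+\ep)$ with $\bbp^{t+1}$ (and $\bbu(t)$ with the utility observed at the corresponding step), this reads $\bbp^{t+1} = \bbp^t + \ep\,\bbu(t) + \ep^2\,\dot{\bbu}(t)$. Since only an online value oracle for $\bbu$ is available, $\dot{\bbu}(t)$ cannot be evaluated; I replace it by the backward finite-difference quotient $\bigl(\bbu(t) - \bbu(t-1)\bigr)/\ep$, which uses only the present and the most recent past observations (here $\bbu(t-1)$ denotes the value one step, i.e.~$\ep$ units of continuous time, earlier). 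Substituting, the factor $\ep^2$ in front of $\dot{\bbu}$ absorbs one power of $\ep$ from the denominator, giving $\bbp^{t+1} = \bbp^t + \ep\,\bbu(t) + \ep\bigl(\bbu(t) - \bbu(t-1)\bigr) = \bbp^t + \ep\bigl(2\bbu(t) - \bbu(t-1)\bigr)$, which is exactly~\eqref{eq:OptMD-rewrite}; specializing $\bbu(t) = \bbA\cdot\bby(\bbq^t)$ for Player~1 (and the analogous vector for Player~2) recovers~\eqref{eq:OptMD}.

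For the error bound~\eqref{eq:online-Euler-error} I would compare the actual increment $\ep\,\bbu(t) + \ep\bigl(\bbu(t) - \bbu(t-1)\bigr)$ with the ideal Euler increment $\ep\bigl[\bbu(t) + \ep\,\dot{\bbu}(t)\bigr]$. Their difference equals $\ep\bigl[\bigl(\bbu(t) - \bbu(t-1)\bigr) - \ep\,\dot{\bbu}(t)\bigr]$, so it suffices to show the bracket is $\calO(\ep^2)$. This is exactly where the hypothesis that $\bbu$ is $C^2$ is used: Taylor's theorem with Lagrange remainder gives $\bbu(t-1) = \bbu(t) - \ep\,\dot{\bbu}(t) + \tfrac{\ep^2}{2}\,\ddot{\bbu}(\xi)$ for some $\xi$ between $t-\ep$ and $t$, whence the bracket equals $-\tfrac{\ep^2}{2}\,\ddot{\bbu}(\xi)$; since $\ddot{\bbu}$ is continuous it is bounded on the relevant compact interval, so the bracket is $\calO(\ep^2)$ and the difference is $\calO(\ep^3)$, as claimed.

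There is no deep difficulty here; the statement is essentially a reconciliation of notation between the ODE~\eqref{eq:DE-OptMD-general} and the update rule~\eqref{eq:OptMD}. The only points that need care are the consistent identification of one discrete step with $\ep$ units of continuous time — which is why the backward difference carries a $1/\ep$ and the $\ep^2\dot{\bbu}$ term collapses to an $\ep$-order correction rather than an $\ep^2$-order one — and the fact that the Taylor remainder must be controlled by the \emph{second} derivative, which is precisely why $C^2$ (and not merely $C^1$) regularity is assumed.
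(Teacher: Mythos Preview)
Your proposal is correct and follows essentially the same approach as the paper: the paper itself does not give a separate proof of this proposition but rather states it as a summary of the discussion in the three bullet points of Section~\ref{sect:cont-analog-general}, which perform exactly the Euler-plus-backward-difference substitution you describe. Your write-up is in fact slightly more explicit than the paper's, since you spell out the Taylor remainder with the Lagrange form and identify the role of the $C^2$ hypothesis, whereas the paper only records the $\calO(\Delta t)$ error of the backward difference and then reads off~\eqref{eq:online-Euler-error}.
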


\vspace*{-0.1in}

While we \emph{heuristically derived} the ODE system~\eqref{eq:DE-OptMD-general} from OMWU with $\ep$ being the step-size,
but after it is derived, $\ep$ becomes a parameter of a parametrized family of learning dynamics.
%In the ODE system~\eqref{eq:DE-OptMD-general}, 
When this parameter $\ep$ is zero, system~\eqref{eq:DE-OptMD-general} recovers the Replicator Dynamics. %, which does not depend on $\dot{\bbu}$.
When $\ep > 0$, it reduces the increment of $p_j$ if $\dot u_j$ is negative.
It can be interpreted as a common learning behavior (e.g.~in financial markets), which is mainly depending on the payoffs, but also having a tuning which depends on the \emph{trend} of payoffs.
There is nothing to stop us from having a negative $\ep$, although it is not clear in what contexts such learning dynamics are motivated.

%While we \emph{heuristically derived} the ODE system from OptMD by assuming $\ep$ is a small quantity, but once after it is derived,
%we can view the system as a family of parametrized learning dynamics with an \emph{unrestricted} parameter $\ep$: $\ep$ can be positive or negative, and its magnitude can be large (say $1/2$).
%It is legitimate to consider online Euler discretization with a positive step-size $\alpha$ which can be different from $\ep$, which yields the update rule
%$\bbp^{t+1} =  \bbp^t + (\alpha+\ep) \cdot \bbu(t) - \ep \cdot \bbu(t-1)$.

\subsection{Continuous Analogue of OMWU in General-Sum Games}\label{sect:cont-analog-game}
Next, we use~\eqref{eq:DE-OptMD-general} to derive a system of differential equations for OMWU in general-sum games.
In these and also many other learning contexts, $\bbu,\dot{\bbu}$ depend on the driving variables $\bbp,\bbq$.
In~\eqref{eq:DE-OptMD-general}, for Player 1, we replace $\bbu(t)$ by $\bbA \cdot \bby(\bbq^t)$. By the chain rule,
\[
\difft{p_j} ~=~ [\bbA \cdot \bby(\bbq)]_j + \ep \cdot \difft{[\bbA \cdot \bby(\bbq)]_j} ~=~ [\bbA \cdot \bby(\bbq)]_j + \ep \cdot \sum_{k\in S_2} \frac{\partial [\bbA\cdot \bby(\bbq)]_j}{\partial q_k} \cdot \difft{q_k}.
\]
%and hence $\dot{\bbu}(t) = \difft{\bbA \cdot \bby(\bbq^t)}$.
%By doing so, we express $\dot{\bbp}$ in terms of $\dot{\bbq}$ on Player 1's perspective;
%analogously, we express $\dot{\bbq}$ in terms of $\dot{\bbp}$ on Player 2's perspective.
%These constitute to a set of recursive formulae.
%Precisely, 
Recall from~\cite[Equation (7)]{CP2019} that $\frac{\partial [\bbA\cdot \bby(\bbq)]_j}{\partial q_k} = y_k(\bbq) \cdot (A_{jk} - [\bbA \cdot \bby(\bbq)]_j)$. Thus,
\begin{align}
\difft{p_j} &= [\bbA \cdot \bby(\bbq)]_j + \ep \sum_{k\in S_2} y_k(\bbq) \cdot (A_{jk} - [\bbA \cdot \bby(\bbq)]_j) \cdot \difft{q_k}.\label{eq:recur-p}\\
\text{Analogously,}~~~~\difft{q_k}
&= [\bbB\trans \cdot \bbx(\bbp)]_k + \ep \sum_{j\in S_1} x_j(\bbp) \cdot (B_{jk} - [\bbB\trans \cdot \bbx(\bbp)]_k) \cdot \difft{p_j}.\label{eq:recur-q}
~~~~~~~~~~~~~~~~~
\end{align}

Formally, the above two formulae, which are in a recurrence format, have \emph{not} yet formed an ODE system. To settle this issue,
in Appendix~\ref{app:cont-analogue-GS-games}, we show that when $\ep$ is small enough, they can be reduced to a standard ODE system of the form
\[
\left( \difft{\bbp} , \difft{\bbq} \right)\trans = %\left( \bbI + \sum_{\ell=1}^\infty \ep^\ell \cdot \bbM(\bbp,\bbq)^\ell \right) 
\left[ \bbI - \ep \bbM(\bbp,\bbq) \right]^{-1}\cdot \bbv(\bbp,\bbq)
\]
for some matrix $\bbM(\bbp,\bbq)$ and vector $\bbv(\bbp,\bbq)$.
This then formally permits us to use~\eqref{eq:recur-p} and~\eqref{eq:recur-q} in the analysis below, as is standard in formal power series when dealing with generating functions.
\section{Volume Analysis of OMWU in Games}\label{sect:volume-contract}

Iterating the recurrence~\eqref{eq:recur-p} and~\eqref{eq:recur-q} yields the following system.
\begin{align}
\difft{p_j} &~=~ [\bbA \cdot \bby(\bbq)]_j ~+~ \ep \sum_{k\in S_2} y_k(\bbq)\cdot (A_{jk} - [\bbA \cdot \bby(\bbq)]_j) \cdot [\bbB\trans \cdot \bbx(\bbp)]_k ~+~ \calO(\ep^2); \nonumber\\
\difft{q_k} &~=~ [\bbB\trans \cdot \bbx(\bbp)]_k ~+~ \ep \sum_{j\in S_1} x_j(\bbp) \cdot (B_{jk} - [\bbB\trans \cdot \bbx(\bbp)]_k) \cdot [\bbA \cdot \bby(\bbq)]_j ~+~ \calO(\ep^2).\label{eq:explicit-dp-dq-dt}
\end{align}

Proposition~\ref{prop:OptMD-discretization} establishes that in general contexts,~\eqref{eq:OptMD} is the online Euler discretization of the differential equation~\eqref{eq:DE-OptMD-general}.
As a special case in games,~\eqref{eq:OptMD} is the online Euler discretization of the recurrence system~\eqref{eq:recur-p} and~\eqref{eq:recur-q}.
Via Equations~\eqref{eq:explicit-dp-dq-dt} and~\eqref{eq:online-Euler-error}, we rewrite~\eqref{eq:OptMD} as
\begin{align*}
p_j^{t+1} &=~ p_j^t + \ep [\bbA \cdot \bby(\bbq^t)]_j ~+~ \ep^2 \sum_{k\in S_2} y_k(\bbq^t)\cdot (A_{jk} - [\bbA \cdot \bby(\bbq^t)]_j) \cdot [\bbB\trans \cdot \bbx(\bbp^t)]_k + \calO(\ep^3); \nonumber\\
q_k^{t+1} &=~ q_k^t + \ep [\bbB\trans \cdot \bbx(\bbp^t)]_k ~+~ \ep^2 \sum_{j\in S_1} x_j(\bbp^t) \cdot (B_{jk} - [\bbB\trans \cdot \bbx(\bbp^t)]_k) \cdot [\bbA \cdot \bby(\bbq^t)]_j + \calO(\ep^3).
\end{align*}

Update rule~\eqref{eq:OptMD} can be implemented easily by the players in distributed manner, but it is hard to be used for volume analysis.
The above update rule %~\eqref{eq:OptMD-convenient-for-analysis} 
\emph{cannot} be implemented by the players in distributed manner, since
Player 1 does not know the values of $y_k$ and $[\bbB\trans \cdot \bbx(\bbp)]_k$. %, and in many common models she also has no knowledge of individual $A_{jk}$ values too.
However, it will permit us to perform a clean volume analysis, since its RHS involves only $\bbp^t,\bbq^t$ but not $\bbp^{t-1},\bbq^{t-1}$.
We will show that when we ignore the $\calO(\ep^3)$ terms %in~\eqref{eq:OptMD-convenient-for-analysis} 
and perform volume analysis as described in Section~\ref{sect:volume-analysis-prelim},
the volume integrand %in~\eqref{eq:OptMD-convenient-for-analysis} 
is of the format $1+C'\ep^2+ \calO(\ep^3)$.
Thus, taking the ignored terms into account does not affect the crucial $C'\ep^2$ term which dictates volume change.

For the moment, we ignore the $\calO(\ep^3)$ terms. % in~\eqref{eq:OptMD-convenient-for-analysis}.
To use~\eqref{eq:Liouville-discrete} for computing volume change, we need to derive $\ep\cdot \bbJ(\bbp,\bbq)$ in the volume integrand:
\begin{align*}
\forall j_1,j_2\in S_1,~~~~~~& \ep J_{j_1 j_2} ~=~ \ep^2 \sum_{k\in S_2} y_k(\bbq) \cdot (A_{j_1k} - [\bbA \cdot \bby(\bbq)]_{j_1}) \cdot x_{j_2}(\bbp) \cdot (B_{j_2k} - [\bbB\trans \cdot \bbx(\bbp)]_k)~;\\
\forall k_1,k_2\in S_2,~~~~~~& \ep J_{k_1 k_2} ~=~ \ep^2 \sum_{j\in S_1} x_j(\bbp) \cdot (B_{jk_1} - [\bbB\trans \cdot \bbx(\bbp)]_{k_1}) \cdot y_{k_2}(\bbq) \cdot (A_{jk_2} - [\bbA \cdot \bby(\bbq)]_j)~;\\
\forall j\in S_1, k\in S_2,~~& ~\ep J_{jk}~~=~ \ep \cdot y_k(\bbq)  \cdot (A_{jk} - [\bbA \cdot \bby(\bbq)]_j) ~+~ \calO(\ep^2)~;\\
\forall k\in S_2, j\in S_1,~~& ~\ep J_{kj}~~=~ \ep \cdot x_j(\bbp) \cdot (B_{jk} - [\bbB\trans \cdot \bbx(\bbp)]_k) ~+~ \calO(\ep^2)~.
\end{align*}
With the above formulae, we expand $\det(\bbI + \ep \cdot \bbJ(\bbp,\bbq))$ via the Leibniz formula. The determinant
is of the form $1+C'(\bbp,\bbq)\cdot \ep^2 + \calO(\ep^3)$, where $C'(\bbp,\bbq)$ is the coefficient of $\ep^2$ in the expression
\[
\sum_{j\in S_1} \ep J_{jj} ~+~ \sum_{k\in S_2} \ep J_{kk} ~-~ \sum_{\substack{j\in S_1\\k\in S_2}}~(\ep J_{jk})(\ep J_{kj}).
\]
A straight-forward arithmetic shows the above expression equals to $-\ep^2 \cdot C_{(\bbA,\bbB)}(\bbp,\bbq) + \calO(\ep^3)$, and hence
\begin{equation}\label{eq:integrand-OptMD}
\det(\bbI + \ep \cdot \bbJ(\bbp,\bbq)) ~=~ 1 ~-~ C_{(\bbA,\bbB)}(\bbp,\bbq) \cdot \ep^2 ~+~ \calO(\ep^3).
\end{equation}
\subsection{OMWU in Coordination Games is Lyapunov Chaotic in the Dual Space}\label{sect:Lyapunov-coordination}

At this point, it is important to address the similarity of MWU in zero-sum games $(\bbA,-\bbA)$ and OMWU in coordination games $(\bbA,\bbA)$.
Recall from~\cite{CP2019} that the volume integrand for the former case is 
\[
1+C_{(\bbA,-\bbA)}(\bbp,\bbq)\cdot \ep^2 + \calO(\ep^4),
\]
while by~\eqref{eq:integrand-OptMD} and~\eqref{eq:coordination-negative-ZS},
the volume integrand for the latter case is 
\[
1-C_{(\bbA,\bbA)}(\bbp,\bbq)\cdot \ep^2 + \calO(\ep^3) = 1+C_{(\bbA,-\bbA)}(\bbp,\bbq)\cdot \ep^2 + \calO(\ep^3).
\]
When $\ep$ is the sufficiently small, their volume-change behaviors are almost identical.
Using~\eqref{eq:calE-uncontrol}, we can deduce all the Lyapunov chaos, unavoidability and extremism results in Section~\ref{sect:unavoidability} for OMWU in coordination games.
We also have volume contraction results for OMWU in zero-sum game and MWU in coordination game,
which are stated formally in Theorems~\ref{thm:OMD-volume-contract-zerosum-discrete} and~\ref{thm:MWU-volume-contract-coordination-discrete} in Appendix~\ref{app:vol-analysis-OptMD}.

%Next, we recall the fact $C(\bbp,\bbq) \le 0$ in coordination game. 
%In the same manner as proving Theorem~\ref{thm:OMD-volume-expand-coordination}, we have the following theorem.

\begin{theorem}\label{thm:OMD-volume-expand-coordination-discrete}
Suppose the underlying game is a non-trivial coordination game $(\bbA,\bbA)$ and the parameter $\alpha_1$ as defined in Theorem~\ref{thm:extremism} is strictly positive.
For any $1/2 > \delta > 0$, for any sufficiently small $0<\ep\le \bar{\ep}$ where the upper bound depends on $\delta$,
and for any set $S=S(0)\subset \tGinv(\calEd_{2,2})$ in the dual space, 
if $S$ is evolved by the OMWU update rule~\eqref{eq:OptMD} and if its flow remains a subset of $\tGinv(\calEd_{2,2})$ for all $t\le T-1$, then
$\vol(S(T)) ~\ge~ \left( 1 + \frac{\ep^2 \delta^2 (\alpha_1)^2}{4} \right)^T \cdot \vol(S).$
Consequently, the system is Lyapunov chaotic within $\tGinv(\calEd_{2,2})$ of the dual space, with Lyapunov time $\calO((n+m)/(\ep^2 \delta^2 (\alpha_1)^2))$.
\end{theorem}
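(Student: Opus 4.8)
The plan is to show that, throughout $\tGinv(\calEd_{2,2})$, the one-step volume integrand of OMWU is bounded below by $1 + \tfrac{\ep^2\delta^2(\alpha_1)^2}{4}$ once $\ep$ is small (with the threshold depending on $\delta$), and then to iterate the discrete Liouville formula~\eqref{eq:Liouville-discrete} for as long as the flow stays inside $\tGinv(\calEd_{2,2})$. The crucial point is that, at the level of the $\ep^2$ term, OMWU in a coordination game mimics MWU in a zero-sum game: by~\eqref{eq:integrand-OptMD} applied with $\bbB=\bbA$, together with the identity~\eqref{eq:coordination-negative-ZS}, the integrand of~\eqref{eq:Liouville-discrete} for the update rule~\eqref{eq:OptMD} satisfies
\[
\det\!\left(\bbI + \ep\cdot\bbJ(\bbp,\bbq)\right) ~=~ 1 ~-~ C_{(\bbA,\bbA)}(\bbp,\bbq)\cdot\ep^2 ~+~ \calO(\ep^3) ~=~ 1 ~+~ C_{(\bbA,-\bbA)}(\bbp,\bbq)\cdot\ep^2 ~+~ \calO(\ep^3),
\]
and, since every quantity appearing in $\bbJ$ is a function of the primal variables $\bbx(\bbp),\bby(\bbq)$ ranging over the compact simplex while the payoffs lie in $[-1,1]$, the $\calO(\ep^3)$ remainder is uniform in $(\bbp,\bbq)$ with an implied constant $K=K(n,m)$.

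First I would fix the admissible range of $\ep$. Three conditions must be arranged by taking $\bar\ep$ small (the second depending on $\delta$): (i) the rewritten one-step form of~\eqref{eq:OptMD} derived in Section~\ref{sect:volume-contract} — which differs from~\eqref{eq:OptMD} only at order $\calO(\ep^3)$ and hence perturbs the one-step volume factor only by $1+\calO(\ep^3)$ — is injective on the relevant domain, which holds as soon as $\ep$ times a Lipschitz bound for its right-hand side is below $1$; (ii) $K\ep \le \tfrac{\delta^2(\alpha_1)^2}{4}$; (iii) $\ep\le\tfrac14$. Next I would record the uniform lower bound on $C$: for $(\bbx,\bby)\in\calEd_{2,2}$ each of $\bbx,\bby$ has two coordinates exceeding $\delta$, so the argument behind~\eqref{eq:calE-uncontrol} (which is valid for every $\delta<1/2$, the restriction to $\delta<1/3$ in Section~\ref{sect:extremism} being inessential) and the variance interpretation in Lemma~\ref{lem:Cxy}(2) give $C_{(\bbA,-\bbA)}(\bbx,\bby)\ge\delta^2(\alpha_1)^2/2$; since $C$ depends only on the primal point, $C_{(\bbA,-\bbA)}(\bbp,\bbq)\ge\delta^2(\alpha_1)^2/2$ for every $(\bbp,\bbq)\in\tGinv(\calEd_{2,2})$. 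Combining this with the displayed integrand formula and (ii) yields $\det(\bbI+\ep\bbJ(\bbp,\bbq))\ge 1+\tfrac{\ep^2\delta^2(\alpha_1)^2}{4}$ everywhere on $\tGinv(\calEd_{2,2})$.

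The conclusion then follows by iteration. Assuming $S(t)\subseteq\tGinv(\calEd_{2,2})$ for all $t\le T-1$, applying~\eqref{eq:Liouville-discrete} at each such $t$ gives $\vol(S(t+1)) = \int_{S(t)}\det(\bbI+\ep\bbJ)\,\mathsf{d}V \ge \big(1+\tfrac{\ep^2\delta^2(\alpha_1)^2}{4}\big)\vol(S(t))$, and chaining over $t=0,1,\dots,T-1$ gives $\vol(S(T)) \ge \big(1+\tfrac{\ep^2\delta^2(\alpha_1)^2}{4}\big)^{T}\vol(S)$. Finally, Lyapunov chaos inside $\tGinv(\calEd_{2,2})$, with Lyapunov time $\calO\big((n+m)/(\ep^2\delta^2(\alpha_1)^2)\big)$, follows from the volume-to-diameter estimate of~\cite{CP2019} quoted in Section~\ref{sect:prelim} (volume growing like $(1+\beta)^t$ forces diameter to grow like $(1+\beta/d)^t$ with $d=n+m$), applied with $\beta=\ep^2\delta^2(\alpha_1)^2/4$.

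The main obstacle I anticipate is not the iteration or the variance bound, which are immediate from the cited results, but making the reduction to a gradual one-step rule fully rigorous: one must justify that replacing~\eqref{eq:OptMD} by the approximating one-step rule of Section~\ref{sect:volume-contract}, discarding the $\calO(\ep^3)$ terms in the integrand, and invoking injectivity together perturb the growth rate only within the $\calO(\ep^3)$ slack already absorbed into the constant $1/4$. The remaining ingredients — uniformity of the $\calO(\ep^3)$ remainder over the noncompact set $\tGinv(\calEd_{2,2})$ (guaranteed because $\bbJ$ factors through the compact simplex), injectivity for small $\ep$, and the passage from dual-space volume expansion to Lyapunov time — are routine.
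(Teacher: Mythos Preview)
Your proposal is correct and follows essentially the same route as the paper: combine~\eqref{eq:integrand-OptMD} with the sign identity~\eqref{eq:coordination-negative-ZS} to get a volume integrand $1+C_{(\bbA,-\bbA)}(\bbp,\bbq)\,\ep^2+\calO(\ep^3)$, invoke the uniform lower bound~\eqref{eq:calE-uncontrol} on $\calEd_{2,2}$, absorb the $\calO(\ep^3)$ remainder by taking $\ep$ small, iterate via~\eqref{eq:Liouville-discrete}, and read off the Lyapunov time from the volume-to-diameter estimate of~\cite{CP2019}. Your explicit flagging of the reduction from the two-step rule~\eqref{eq:OptMD} to a gradual one-step map as the main technical point is, if anything, more careful than the paper's own treatment in Section~\ref{sect:volume-contract}, which dispatches it with the remark that the ignored $\calO(\ep^3)$ terms ``do not affect the crucial $C'\ep^2$ term.''
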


\subsection{Negative Consequences of Volume Expansion of OMWU in Coordination Game}

In Sections~\ref{sect:unavoidability} and~\ref{sect:extremism}, the unavoidability and extremism theorems are proved largely due to volume expansion;
For the extremism theorems, it requires some additional arguments that seem specific to MWU,
which comprise of Lemma~\ref{lem:single-minded} and Step 3 in the proof of Theorem~\ref{thm:extremism}, both in Appendix~\ref{app:extremism}).
But a careful examination of the proof of Lemma~\ref{lem:single-minded} and the Step 3 finds these additional arguments work for OMWU too (with very minor modifications).
Thus, the unavoidability and extremism theorems hold for OMWU too, after suitably modifying the condition needed for volume expansion, and the upper bounds on the step-sizes.

Suppose a coordination game has a non-pure Nash equilibrium (i.e.~a Nash equilibrium $(\bbx^*,\bby^*)$ in which the supports of $\bbx^*,\bby^*$ are both of size at least $2$).
By Theorem~\ref{thm:extremism} (the OMWU analogue), for any tiny open ball $B$ around the equilibrium, there is a dense subset of points in $\interior(\Delta)\cap B$ such that
the flow of this point eventually reaches close to an extremal point.
In other words, there are points arbitrarily close to the equilibrium with their flows reaching extremal points, i.e.~the
flows not only move away from the equilibrium \emph{locally}, but they move away for a big distance.
This kind of \emph{global instability} result can be applied quite broadly, as many coordination games have non-pure Nash equilibrium.
In the standard coordination game $(\bbA,\bbA)$ where 
$\bbA = \left[\begin{smallmatrix}
1 & 0 \\
0 & 1
\end{smallmatrix}\right]$.,
the game has three Nash equilibria, namely $((1,0),(1,0))$, $((0,1),(0,1))$ and $((1/2,1/2),(1/2,1/2))$.
The latest one is a non-pure Nash equilibrium.

Another example is the following generalization.
Consider a two-player coordination game where each player has $n$ strategies.
Suppose that when both players choose strategy $i$, they both earn \$$A_i$, and otherwise they both lose \$$Z$, where $A_i>0$ and $Z\ge 0$.
Then the game has a non-pure Nash equilibrium $(\bbx^*,\bbx^*)$, where $x^*_i = \frac{1}{A_i+Z} \left/ \left(\sum_j \frac{1}{A_j+Z}\right) \right.$, which is strictly positive for all $i$.

\section*{Acknowledgments}

We thank several anonymous reviewers for their suggestions, which help to improve the readability of this paper from its earlier version.
Yun Kuen Cheung and Georgios Piliouras acknowledge 
AcRF Tier 2 grant 2016-T2-1-170, grant PIE-SGP-AI-2018-01, NRF2019-NRF-ANR095 ALIAS grant and NRF 2018 Fellowship NRF-NRFF2018-07.

\bibliographystyle{plain}
\bibliography{ref,refer,refer2}

\begin{thebibliography}{10}

\bibitem{2019arXiv190602027A}
Jacob {Abernethy}, Kevin~A. {Lai}, and Andre {Wibisono}.
\newblock {Last-iterate convergence rates for min-max optimization}.
\newblock {\em arXiv e-prints}, page arXiv:1906.02027, Jun 2019.

\bibitem{AC2010}
Manuela A.~D. Aguiar and Sofia B. S.~D. Castro.
\newblock Chaotic switching in a two-person game.
\newblock {\em Phys. D}, 239(16):1598--1609, 2010.

\bibitem{BP2018}
James~P. Bailey and Georgios Piliouras.
\newblock Multiplicative weights update in zero-sum games.
\newblock In {\em EC}, pages 321--338, 2018.

\bibitem{balduzzi2018mechanics}
David Balduzzi, Sebastien Racaniere, James Martens, Jakob Foerster, Karl Tuyls,
  and Thore Graepel.
\newblock The mechanics of n-player differentiable games.
\newblock In {\em ICML}, 2018.

\bibitem{boone2019darwin}
Victor Boone and Georgios Piliouras.
\newblock From darwin to poincar{\'e} and von neumann: Recurrence and cycles in
  evolutionary and algorithmic game theory.
\newblock In {\em International Conference on Web and Internet Economics},
  pages 85--99. Springer, 2019.

\bibitem{Cesa06}
Nikolo Cesa-Bianchi and Gabor Lugosi.
\newblock {\em Prediction, Learning, and Games}.
\newblock Cambridge University Press, 2006.

\bibitem{Cheung2018}
Yun~Kuen Cheung.
\newblock Multiplicative weights updates with constant step-size in graphical
  constant-sum games.
\newblock In {\em {NeurIPS} 2018}, pages 3532--3542, 2018.

\bibitem{CP2019}
Yun~Kuen Cheung and Georgios Piliouras.
\newblock Vortices instead of equilibria in minmax optimization: Chaos and
  butterfly effects of online learning in zero-sum games.
\newblock In {\em Conference on Learning Theory, {COLT} 2019, 25-28 June 2019,
  Phoenix, AZ, {USA}}, pages 807--834, 2019.

\bibitem{daskalakis2018training}
Constantinos Daskalakis, Andrew Ilyas, Vasilis Syrgkanis, and Haoyang Zeng.
\newblock Training {GAN}s with optimism.
\newblock In {\em ICLR}, 2018.

\bibitem{DaskalakisP2019}
Constantinos Daskalakis and Ioannis Panageas.
\newblock Last-iterate convergence: Zero-sum games and constrained min-max
  optimization.
\newblock In {\em 10th Innovations in Theoretical Computer Science Conference,
  {ITCS} 2019, January 10-12, 2019, San Diego, California, {USA}}, pages
  27:1--27:18, 2019.

\bibitem{EA1983}
I.~Eshel and E.~Akin.
\newblock Coevolutionary instability of mixed nash solutions.
\newblock {\em Journal of Mathematical Biology}, 18:123--133, 1983.

\bibitem{freund1999adaptive}
Yoav Freund and Robert~E Schapire.
\newblock Adaptive game playing using multiplicative weights.
\newblock {\em Games and Economic Behavior}, 29(1-2):79--103, 1999.

\bibitem{Gau1992}
Andreas Gaunersdorfer.
\newblock Time averages for heteroclinic attractors.
\newblock {\em {SIAM} J. Appl. Math.}, 52:1476--1489, 1992.

\bibitem{gidel2019a}
Gauthier Gidel, Hugo Berard, Ga{\"e}tan Vignoud, Pascal Vincent, and Simon
  Lacoste-Julien.
\newblock A variational inequality perspective on generative adversarial
  networks.
\newblock In {\em ICLR}, 2019.

\bibitem{gidel2019negative}
Gauthier Gidel, Reyhane~Askari Hemmat, Mohammad Pezeshki, Gabriel Huang,
  R{\'e}mi Lepriol, Simon Lacoste-Julien, and Ioannis Mitliagkas.
\newblock Negative momentum for improved game dynamics.
\newblock In {\em AISTATS}, 2019.

\bibitem{HS1998}
Josef Hofbauer and Karl Sigmund.
\newblock {\em Evolutionary Games and Population Dynamics}.
\newblock Cambridge University Press, 1998.

\bibitem{LS1977}
George Leitmann and Janislaw Skowronski.
\newblock Avoidance control.
\newblock {\em Journal of Optimization Theory and Applications},
  23(4):581--591, 1977.

\bibitem{mertikopoulos2019optimistic}
Panayotis Mertikopoulos, Bruno Lecouat, Houssam Zenati, Chuan-Sheng Foo, Vijay
  Chandrasekhar, and Georgios Piliouras.
\newblock Optimistic mirror descent in saddle-point problems: Going the
  extra(-gradient) mile.
\newblock In {\em ICLR}, 2019.

\bibitem{GeorgiosSODA18}
Panayotis Mertikopoulos, Christos Papadimitriou, and Georgios Piliouras.
\newblock Cycles in adversarial regularized learning.
\newblock In {\em ACM-SIAM Symposium on Discrete Algorithms}, 2018.

\bibitem{PS2014}
Georgios Piliouras and Jeff~S. Shamma.
\newblock Optimization despite chaos: Convex relaxations to complex limit sets
  via poincar{\'{e}} recurrence.
\newblock In {\em SODA}, pages 861--873, 2014.

\bibitem{Sandholm10}
William~H. Sandholm.
\newblock {\em Population Games and Evolutionary Dynamics}.
\newblock MIT Press, 2010.

\end{thebibliography}

\newpage

\begin{figure}[htp]
\begin{center}
\includegraphics[scale=0.385]{MWU-zerosum-500-28}~
\includegraphics[scale=0.385]{OMWU-zerosum-500-28}\\
\includegraphics[scale=0.385]{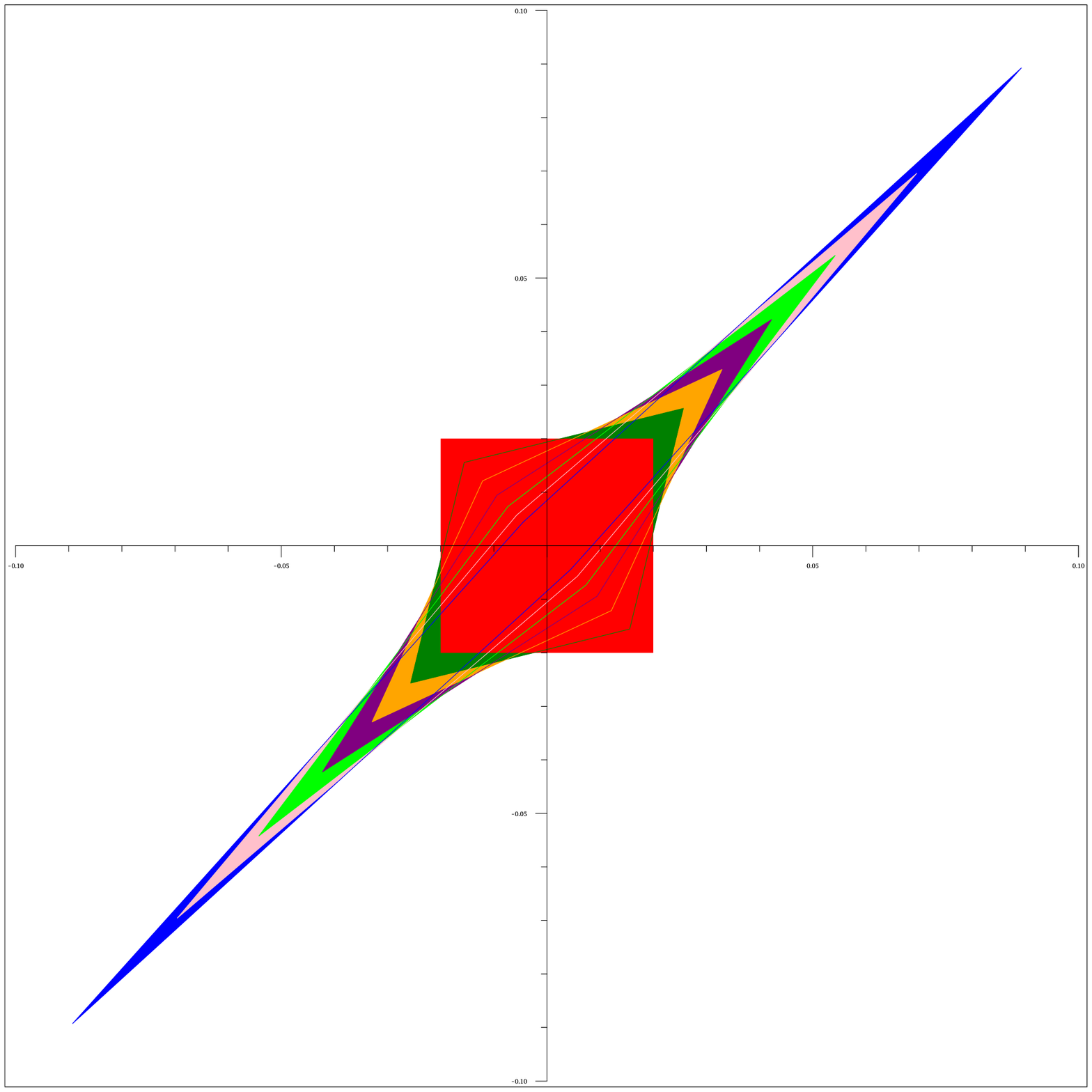}~
\includegraphics[scale=0.385]{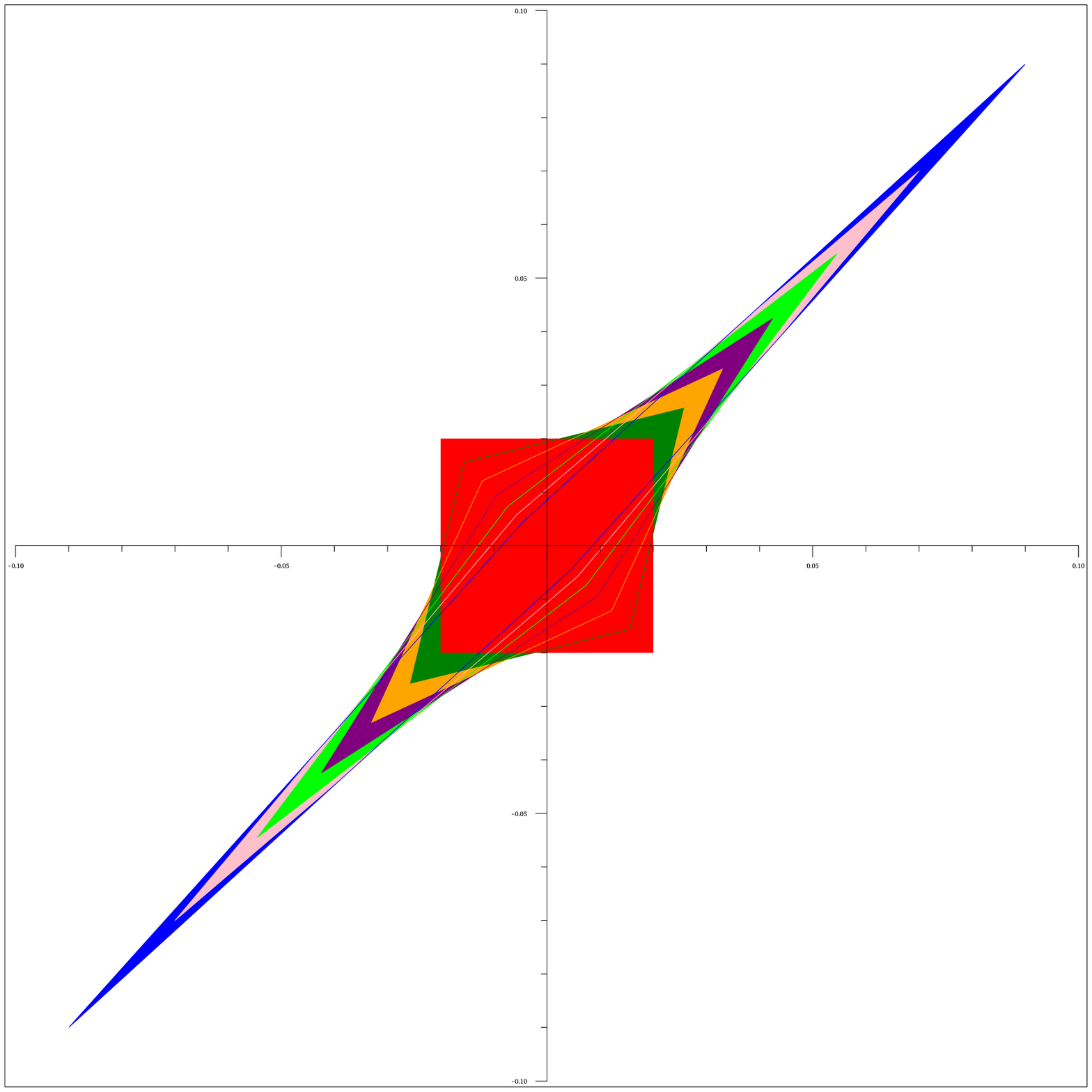}.
\end{center}
\caption{Evolution of MWU (left) and OMWU (right) in zero-sum (top) and coordination (bottom) games in the dual space of Eshel and Akin.
The initial set is the red square. The top two figures were already shown and discussed in the first page.
The bottom two figures correspond to MWU and OMWU in the coordination game $(\bbA,\bbA)$, where $\bbA$ is the $2\times 2$ identity matrix.
The vector fields associated with MWU and OMWU are very similar, and so does the two figures.
However, when we compute how the areas change, we observe that for MWU, the area is shrinking slowly (from red to blue), while for OMWU, the area is increasing slowly.}\label{fig:coor}
\end{figure}

\newpage

\begin{figure}[htp]
\centering
\includegraphics[height=60mm,width=150mm]{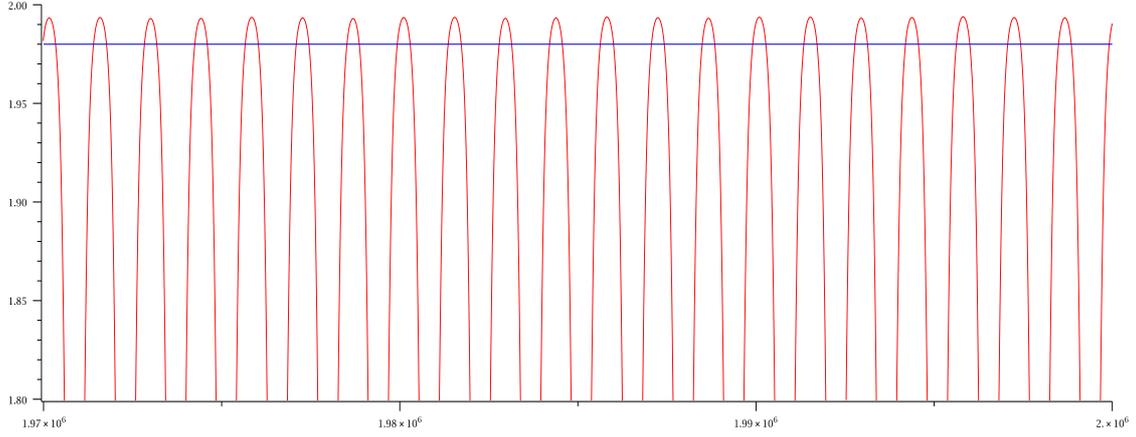}
\caption{Let $\bbx,\bby$ denote respectively the mixed strategies of Players 1 and 2 in the classical Rock-Paper-Scissors game.
We plot the quantity $\sum_{j=1}^3 (x_j)^4 + \sum_{k=1}^3 (y_k)^4$ against time steps between $1.97\times 10^6$ to $2.00\times 10^6$,
when both players employ MWU with step-size $\ep=0.005$, and starting point $\bbx^0\propto (1,1,\exp(1/2))$ and $\bby^0\propto (1,1,\exp(-1/2))$.
%$(\bbx^0,\bby^0)\approx((0.27407,0.27407,0.45186),(0.38365,0.38365,0.23270))$.
When the red curve is above the blue horizontal line, \emph{extremism} occurs, i.e., each player concentrate on one strategy, with some $x_j,y_k\ge 0.995$.
Within the $30000$ time steps, extremism occurs for $22$ periods; each period has length around $350$.}\label{fig:RPS}
\end{figure}

\appendix

\section{Missing Examples and Proof in Section~\ref{sect:prelim}}\label{app:missing-prelim}

In Section~\ref{sect:prelim}, we pointed out two facts: 
\begin{enumerate}
\item[(A)] volume contraction in the dual space does \emph{not} necessarily imply stability in either the dual or the primal space;
\item[(B)] volume expansion in the dual space does \emph{not} necessarily imply instability in the primal space.
\end{enumerate}

To see why (A) is true, consider the following parametrized rectangular set $S(z)$ around the origin in the dual space:
\[
S(z) ~:=~ \{ (\bbp,\bbq)\in \rr^2\times \rr^2 ~\Big|~ |p_1|,|q_1|\le 1/z,~|p_2|,|q_2|\le \sqrt{z} \}.
\]
As $z$ increases, the volume of $S(z) = 1/z$ decreases, but its diameter and the quantities $\max \{p_2-p_1\}$, $\max \{q_2-q_1\}$ are $\Theta(\sqrt{z})$ which increase with $z$.
Also, since $S$ contains the points 
\[
((0,\sqrt{z}),(0,\sqrt{z})), ((0,-\sqrt{z}),(0,-\sqrt{z})),
\]
when the set $S(z)$ is converted to the primal space,
$\tG(S)$ contains points close to 
\[
((0,1),(0,1)),((1,0),(1,0))
\]
as $z\ra \infty$, so the diameter of $\tG(S)$ increases to $2$ as $z\ra \infty$.

To see why (B) is true, consider the following parametrized set $S(z)$ in the dual space:
\[
S(z) ~:=~ \{ (\bbp,\bbq)\in \rr^2\times \rr^2 ~\Big|~ p_2 \ge p_1 + z \text{~and~} q_2\ge q_1 + z, \text{~and~} 0\le p_1,p_2,q_1,q_2\le 3z \}.
\]
It is not hard to compute its volume $4z^4$ which increases with $z$, but its primal counterpart contracts and converges to a single point $((0,1),(0,1))$.

We also note that (B) remains true in the dual space used by Eshel and Akin. An example is 
\[
S(z) = \{((p_1-p_3,p_2-p_3),(q_1-q_3,q_2-q_3)) ~\big|~ z\le p_1-p_3,q_1-q_3\le 2z ~\text{and}~ -2z\le p_2-p_3,q_2-q_3\le -z \}.
\]
The volume of $S(z)$ is $z^4$ which increases with $z$, but its primal counterpart converges to the primal point $((1,0,0),(1,0,0))$ as $z\ra\infty$.

\medskip

Proposition~\ref{prop:dual-expand-to-primal-instability-simpler} follows directly from the following proposition.

\begin{proposition}\label{prop:dual-expand-to-primal-instability}
Let $S$ be a set in the dual space with Lebesgue volume $v$. Also, suppose there exists $j\in S_1$ and $k\in S_2$ such that $\max_{(\bbp,\bbq)\in S} p_j - \min_{(\bbp,\bbq)\in S} p_j \le R_j$
and $\max_{(\bbp,\bbq)\in S} q_k - \min_{(\bbp,\bbq)\in S} q_k \le R_k$. Then for $\beta := \exp\left(\left( \frac{v}{R_j R_k} \right)^{1/(n+m-2)}\right)$, at least one of the followings holds:
\begin{itemize}
\item There exists $j'\in S_1$ such that $\left(\max_{(\bbp,\bbq)\in S} \frac{x_{j'}(\bbp)}{x_j(\bbp)}\right) \Big/ \left(\min_{(\bbp,\bbq)\in S} \frac{x_{j'}(\bbp)}{x_j(\bbp)}\right) \ge \beta$.
Furthermore, if there exists $(\bbp^\#,\bbq^\#) \in S$ such that $x_j(\bbp^\#),x_{j'}(\bbp^\#) \ge \kappa > 0$,
then the diameter of $\tG(S)$ w.r.t.~$\ell_2$ norm is at least $\left( 1 - \beta^{-1/4} \right) \kappa$.
\item There exists $k'\in S_2$ such that $\left(\max_{(\bbp,\bbq)\in S} \frac{y_{k'}(\bbq)}{y_k(\bbq)}\right) \Big/ \left(\min_{(\bbp,\bbq)\in S} \frac{y_{k'}(\bbq)}{y_{k}(\bbq)}\right) \ge \beta$.
Furthermore, if there exists $(\bbp^\#,\bbq^\#) \in S$ such that $y_k(\bbq^\#),y_{k'}(\bbq^\#) \ge \kappa > 0$,
then the diameter of $\tG(S)$ w.r.t.~$\ell_2$ norm is at least $\left( 1 - \beta^{-1/4} \right) \kappa$.
\end{itemize}
\end{proposition}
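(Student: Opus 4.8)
The plan is to pass to the natural ``log-ratio'' coordinates on the dual space and combine a Fubini/pigeonhole volume estimate with an elementary case analysis. Fix the indices $j\in S_1$ and $k\in S_2$ from the hypothesis, and consider the linear map $\Phi:\rr^{n+m}\to\rr^{n+m}$ sending $(\bbp,\bbq)$ to the vector whose components are $p_{j'}-p_j$ for $j'\in S_1\setminus\{j\}$, then $q_{k'}-q_k$ for $k'\in S_2\setminus\{k\}$, and finally $p_j$ and $q_k$. This map is block diagonal in the $p$- and $q$-parts after reordering rows, and each block becomes a permutation matrix after adding the ``$p_j$-row'' (resp.\ ``$q_k$-row'') to the others; hence $|\det\Phi|=1$, so $\Phi$ preserves Lebesgue volume and $\vol(\Phi(S))=v$.

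The key observation is that $\Phi(S)\subseteq \pi(\Phi(S))\times I_j\times I_k$, where $\pi$ is the projection onto the first $n+m-2$ (log-ratio) coordinates and $I_j,I_k\subset\rr$ are intervals of lengths at most $R_j,R_k$ — this is exactly the content of the two range hypotheses on $p_j$ and $q_k$. Therefore $v=\vol(\Phi(S))\le \vol_{n+m-2}(\pi(\Phi(S)))\cdot R_jR_k$, so $\pi(\Phi(S))$ has $(n+m-2)$-dimensional outer volume at least $v/(R_jR_k)$. A set of that volume in $\rr^{n+m-2}$ cannot be contained in a box all of whose sides are shorter than $L:=(v/(R_jR_k))^{1/(n+m-2)}=\ln\beta$; hence some coordinate of $\pi(\Phi(S))$ has range at least $\ln\beta$ over $S$. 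If that coordinate is $p_{j'}-p_j$ for some $j'$, then, using the cancellation of normalizing constants in $x_{j'}(\bbp)/x_j(\bbp)=\exp(p_{j'}-p_j)$, we get $(\sup_S x_{j'}/x_j)/(\inf_S x_{j'}/x_j)=\exp(\mathrm{range\ of\ }p_{j'}-p_j)\ge\beta$, which is the first bullet; if instead it is a $q$-coordinate we obtain the second bullet.

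For the ``Furthermore'' part, assume we are in the first case and there is $(\bbp^\#,\bbq^\#)\in S$ with $x_j(\bbp^\#),x_{j'}(\bbp^\#)\ge\kappa$. Write $r^\#:=x_{j'}(\bbp^\#)/x_j(\bbp^\#)$; since the ratio $r:=x_{j'}/x_j$ spans a multiplicative factor at least $\beta$ over $S$ and $r^\#\in[\inf_S r,\sup_S r]$, either $\sup_S r\ge\sqrt\beta\,r^\#$ or $\inf_S r\le r^\#/\sqrt\beta$. In the first case pick $(\bbp^*,\bbq^*)\in S$ with $x_{j'}(\bbp^*)/x_j(\bbp^*)\ge\sqrt\beta\,r^\#$ and set $a:=x_j(\bbp^*)$, $b:=x_{j'}(\bbp^*)$; if $a\le\beta^{-1/4}x_j(\bbp^\#)$ then the $(\text{Player }1,j)$ coordinates of $\tG(\bbp^*,\bbq^*)$ and $\tG(\bbp^\#,\bbq^\#)$ differ by at least $(1-\beta^{-1/4})x_j(\bbp^\#)\ge(1-\beta^{-1/4})\kappa$, and otherwise $b=a\cdot(b/a)>\beta^{-1/4}x_j(\bbp^\#)\cdot\sqrt\beta\,r^\#=\beta^{1/4}x_{j'}(\bbp^\#)$, so the $(\text{Player }1,j')$ coordinates differ by at least $(\beta^{1/4}-1)x_{j'}(\bbp^\#)\ge(1-\beta^{-1/4})\kappa$ (using $\beta\ge1$). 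Either way the $\ell_2$-diameter of $\tG(S)$ is at least $(1-\beta^{-1/4})\kappa$. The subcase $\inf_S r\le r^\#/\sqrt\beta$, and the second bullet, are entirely symmetric (swap the roles of $j$ and $j'$, respectively of the two players).

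I expect the only real subtlety to lie in the bookkeeping of the final case analysis — converting a multiplicative change in $x_{j'}/x_j$ into an additive change in one of $x_j$ or $x_{j'}$ while retaining the $\kappa$ lower bound and landing exactly on the exponent $\beta^{-1/4}$ — together with the minor point that the suprema and infima above may only be approached rather than attained, which is harmless since the diameter is itself a supremum: one replaces exact extremizers by nearby points at the cost of an arbitrarily small loss.
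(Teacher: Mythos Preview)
Your proof is correct and follows essentially the same approach as the paper's: a volume-preserving linear change to log-ratio coordinates $(p_{j'}-p_j,\,q_{k'}-q_k,\,p_j,\,q_k)$, a box/pigeonhole argument forcing some log-ratio to have range at least $\ln\beta$, and then a case split converting the multiplicative spread of $x_{j'}/x_j$ around $r^\#$ into an additive gap of size $(1-\beta^{-1/4})\kappa$ in one of the coordinates. The paper splits on whether $x_{j'}(\bbp^*)/x_{j'}(\bbp^\#)\ge\beta^{-1/4}$ while you split on whether $x_j(\bbp^*)\le\beta^{-1/4}x_j(\bbp^\#)$, but these are the same dichotomy viewed from the other coordinate; your remark about approximate extremizers is a nice touch the paper omits.
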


\begin{pf}%{Proposition~\ref{prop:dual-expand-to-primal-instability}}
Without loss of generality, we assume that $j=1$ and $k=1$.
Consider the mapping:
\[
((p_1,p_2,\cdots,p_n) ~,~ (q_1,q_2,\cdots,q_m)) ~~\rightarrow~~ ((p_1,p_2-p_1,\cdots,p_n-p_1) ~,~ (q_1,q_2-q_1,\cdots,q_m-q_1)).
\]
This is a linear mapping, and it is easy to verify that the determinant of the matrix that describes this linear mapping has determinant $1$, so the mapping is volume-preserving.

Suppose that each of the quantities $p_{j'}-p_1$ and $q_{k'}-q_1$ is bounded by an interval of length at most $R$ within the set $S$, for a value of $R$ to be specified later.
Then $S$ is a subset of a rectangular box in $\rr^{n+m}$, with $n+m-2$ sides of lengths at most $R$, and the remaining two sides of lengths at most $R_j$ and $R_k$.
Thus, the volume of $S$ after the above linear mapping is at most $R^{n+m-2} R_j R_k$. When $R < ( \frac{v}{R_j R_k} )^{1/(n+m-2)}$, this is a contradiction.

Thus, there exists one quantity $p_{j'}-p_1$ or $q_{k'}-q_1$ which is \emph{not} bounded by an interval of length at most $( \frac{v}{R_j R_k} )^{1/(n+m-2)}$.
Then we are done by recalling that $\frac{x_{j'}(\bbp)}{x_1(\bbp)} = \exp(p_{j'}-p_1)$ and $\frac{y_{k'}(\bbq)}{y_1(\bbq)} = \exp(q_{k'}-q_1)$.

If furthermore, there exists $(\bbp^\#,\bbq^\#) \in S$ such that $x_j(\bbp^\#),x_{j'}(\bbp^\#) \ge \kappa > 0$, then
there exists $(\bbp^*,\bbq^*) \in S$ such that either 
\[
\frac{x_j(\bbp^*)}{x_{j'}(\bbp^*)} \left/ \frac{x_j(\bbp^\#)}{x_{j'}(\bbp^\#)} \right. \ge \beta^{1/2}~~~~~~\text{or}~~~~~~
\frac{x_j(\bbp^*)}{x_{j'}(\bbp^*)} \left/ \frac{x_j(\bbp^\#)}{x_{j'}(\bbp^\#)} \right. \le \beta^{-1/2}.
\]
We focus on the former case, as the latter case is similar. We have
$x_j(\bbp^*) - x_j(\bbp^\#) ~\ge~ x_j(\bbp^\#) \cdot \left( \frac{x_{j'}(\bbp^*)}{x_{j'}(\bbp^\#)} \cdot \beta^{1/2}-1 \right)$.
If $\frac{x_{j'}(\bbp^*)}{x_{j'}(\bbp^\#)} \ge \beta^{-1/4}$, we have $x_j(\bbp^*) - x_j(\bbp^\#) \ge \kappa (\beta^{1/4}-1) \ge \kappa (1-\beta^{-1/4})$.
Otherwise, $\frac{x_{j'}(\bbp^*)}{x_{j'}(\bbp^\#)} < \beta^{-1/4}$, and hence $x_{j'}(\bbp^\#) - x_{j'}(\bbp^*) > x_{j'}(\bbp^\#) \cdot \left( 1 - \beta^{-1/4} \right) \ge \kappa (1-\beta^{-1/4})$.
\end{pf}

\section{Extremism of MWU in Zero-Sum Games}\label{app:extremism}

\begin{lemma}\label{lem:single-minded}
Suppose an agent has $m$ options which she use MWU with step-size $\ep$ to decide the mixed strategy $\bby^t = (y_1^t,\cdots,y_m^t)$ in each time step.
Suppose at each round $t$, the payoff to each option $k$ is $a_k + \delta_k^t$, where 
\begin{itemize}
\item each $a_k\in [-1,1]$;
\item there exists a positive number $\alpha_2>0$, such that for any $2\le k \le m$, $a_{k-1} - a_{k} \ge \alpha_2$;
\item there exists a positive number $\delta \le \alpha_2/8$, such that $\delta_k^t\in [-2\delta,2\delta]$.
\end{itemize}
Let $\hk(t)$ denote the strategy $\min \{k\in [m] ~|~ y_k^t > \delta/(m-1)\}$.
%Then $\hk(t)$ is a decreasing function of $t$. Furthermore,
Then for $T:=\ceil{\frac{2}{\ep (\alpha_2 - 4\delta)}\cdot \ln \frac{m-1}{\delta}}$,
(i) if $\bby^{\tau+T}$ has more than one entries larger than $\delta/(m-1)$ for some $\tau\ge 0$, then $\hk(\tau+T) \le \hk(\tau)-1$, and 
(ii) for some $t\le (m-1)T$, $\bby^t$ has an entry which is larger than or equal to $1-\delta$.
\end{lemma}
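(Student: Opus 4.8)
The plan is to control, for each pair of options $k_1<k_2$, the ratio $\rho^t_{k_1,k_2}:=y^t_{k_1}/y^t_{k_2}$. Writing the MWU rule in the primal form (the one-player specialization of~\eqref{eq:MWU-primal}), one step multiplies this ratio by $\exp\!\big(\ep[(a_{k_1}-a_{k_2})+(\delta^t_{k_1}-\delta^t_{k_2})]\big)$. Since $k_1<k_2$ the payoff gap is $a_{k_1}-a_{k_2}\ge\alpha_2$, while the perturbations cost at most $4\delta$; because $\delta\le\alpha_2/8$, the per-step multiplier is at least $\exp(\ep(\alpha_2-4\delta))>1$. The first thing I would record is that over $T=\ceil{\frac{2}{\ep(\alpha_2-4\delta)}\ln\frac{m-1}{\delta}}$ steps this compounds to a factor at least $\big(\tfrac{m-1}{\delta}\big)^2$ --- this is exactly what the stated $T$ is engineered for.

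For part~(i) I would set $\hj:=\hk(\tau)$, so $y^\tau_{\hj}>\delta/(m-1)$, and apply the ratio estimate for every $k>\hj$: $y^{\tau+T}_{\hj}/y^{\tau+T}_k\ge(y^\tau_{\hj}/y^\tau_k)\,(\tfrac{m-1}{\delta})^2>\tfrac{\delta}{m-1}\cdot(\tfrac{m-1}{\delta})^2=\tfrac{m-1}{\delta}$, using $y^\tau_{\hj}>\delta/(m-1)$ and $y^\tau_k\le1$. Hence $y^{\tau+T}_k<\tfrac{\delta}{m-1}\,y^{\tau+T}_{\hj}\le\tfrac{\delta}{m-1}$, i.e.\ \emph{every} coordinate of $\bby^{\tau+T}$ that exceeds $\delta/(m-1)$ has index $\le\hj$. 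By hypothesis there are at least two such coordinates, and only one index equals $\hj$, so one of them has index $<\hj$; therefore $\hk(\tau+T)\le\hj-1=\hk(\tau)-1$.

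For part~(ii) I would first make two elementary observations: since $\delta\le\alpha_2/8\le\tfrac14$ and $m\ge2$, the $m$ coordinates cannot all be $\le\delta/(m-1)$ (their sum would be $<1$), so $\hk(t)$ is always defined; and if exactly one coordinate exceeds $\delta/(m-1)$, the other $m-1$ coordinates sum to at most $\delta$, so that coordinate is $\ge1-\delta$ and (ii) is witnessed. Then I would examine the times $0,T,2T,\dots,(m-1)T$. If some $\bby^{iT}$ with $i\le m-2$ has at most one coordinate above $\delta/(m-1)$, we are done at $t=iT$. Otherwise each of $\bby^0,\dots,\bby^{(m-2)T}$ has at least two coordinates above the threshold; in particular $\hk(0)\le m-1$ (two distinct indices in $\{1,\dots,m\}$ force the smaller to be $\le m-1$), and part~(i) gives $\hk(iT)\le\hk((i-1)T)-1$ for $i=1,\dots,m-2$, so $\hk((m-2)T)\le\hk(0)-(m-2)\le1$, i.e.\ $\hk((m-2)T)=1$. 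Finally $\bby^{(m-1)T}$ cannot have two coordinates above $\delta/(m-1)$, since part~(i) would then force $\hk((m-1)T)\le0$; so it has exactly one, which is $\ge1-\delta$, and $t=(m-1)T$ works.

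The ratio bookkeeping and the calibration of $T$ are routine once $\alpha_2-4\delta>0$ is in hand. The step I expect to need the most care is the counting in part~(ii): one must notice that the ``bad'' case (two or more coordinates above the threshold) \emph{already} pins $\hk$ to be $\le m-1$ at time $0$, so that only $m-2$ further strict decrements are available before $\hk$ reaches $1$, after which one more period cannot remain bad. This is precisely what keeps the horizon at $(m-1)T$ rather than $mT$; the degenerate case $\hk=1$, where the hypothesis of part~(i) is vacuous, should be flagged but causes no trouble.
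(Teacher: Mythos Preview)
Your proposal is correct and follows essentially the same approach as the paper: both track the MWU ratio $y^t_{k_1}/y^t_{k_2}$ for $k_1<k_2$, use the gap $a_{k_1}-a_{k_2}\ge\alpha_2$ against the perturbation $|\delta^t_{k_1}-\delta^t_{k_2}|\le 4\delta$ to get a per-step multiplier $\ge\exp(\ep(\alpha_2-4\delta))$, and compound over $T$ steps to reach the factor $\big(\tfrac{m-1}{\delta}\big)^2$. The only cosmetic differences are that the paper proves part~(i) by contrapositive (assuming $\hk(\tau+T)\ge\hk(\tau)$ and showing $\bby^{\tau+T}$ has a single large entry) while you argue directly, and that the paper runs part~(ii) as a single contradiction (apply~(i) $m-1$ times to force $\hk((m-1)T)\le 0$) while you peel off the final period separately; the content is identical.
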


\begin{proof}
For part (i), we prove the contrapositive statement instead: if $\hk(\tau+T) \ge \hk(\tau)$, then $\bby^{\tau+T}$ has exactly one entry larger than $\delta/(m-1)$.

Let $k = \hk(\tau)$. For any $\ell > k$, due to the definition of the MWU update rule~\eqref{eq:MWU-primal} and our assumptions, for $t\ge \tau$,
\[
\frac{y_\ell^{t+1}}{y_k^{t+1}} ~=~ \frac{y_\ell^{t}}{y_k^{t}} \cdot \exp \left( \ep (a_\ell + \delta_\ell^t - a_k - \delta_k^t) \right)
~\le~ \frac{y_\ell^{t}}{y_k^{t}} \cdot \exp \left( -\ep (\alpha_2 - 4\delta) \right).
\]
Since $k = \hk(\tau)$, we have $y_k^\tau > \delta/(m-1)$. Also, $y_k^t,y_\ell^\tau\le 1$ trivially. Thus, for any $t\ge \tau$,
\[
y_\ell^{t} ~\le~ y_k^{t} \cdot \frac{y_\ell^{\tau}}{y_k^{\tau}} \cdot \exp \left( -\ep (\alpha_2 - 4\delta) (t-\tau) \right)
~<~ \frac{m-1}{\delta} \cdot \exp \left( -\ep (\alpha_2 - 4\delta) (t-\tau) \right).
\]
When $\exp \left( -\ep (\alpha_2 - 4\delta) (t-\tau) \right)\le \delta^2 / (m-1)^2$,
or equivalently $t \ge \tau + \ceil{\frac{2}{\ep (\alpha_2 - 4\delta)}\cdot \ln \frac{m-1}{\delta}} = \tau + T$,
we have $y_\ell^t \le \delta/(m-1)$.

Due to the conclusion of the last paragraph, we have $\hk(\tau+T) \le k$. But we also have the assumption $\hk(\tau+T) \ge \hk(\tau) = k$.
Thus, $\hk(\tau+T) = k$, and hence for any $k'<k$, $y_{k'}^{\tau+T} \le \delta/(m-1)$.
This, together with the conclusion of the last paragraph, shows that $y_k^{\tau+T}$ is the only entry in $\bby^{\tau+T}$ which is larger than $\delta/(m-1)$.
This completes the proof of part (i).

We prove part (ii) by contradiction. Suppose that for all $t\le (m-1)T$, $\bby^t$ has more than one entries larger than $\delta/(m-1)$.
First of all, $\hk(0)\neq m$, for otherwise $y_m^0$ is the only entry in $\bby^0$ which is larger than $\delta/(m-1)$.
Next, we apply part (i) for $(m-1)$ times to yield that $\hk((m-1)T) \le \hk(0) - (m-1) \le 0$, a contradiction.
Thus, for some $\bby^t$ with $t\le (m-1)T$, it has exactly one entry which is larger than $\delta/(m-1)$.
The entry has to be larger than or equal to $1-(m-1)(\delta/(m-1)) = 1-\delta$.
\end{proof}

\begin{pfof}{Theorem~\ref{thm:extremism}}
The proof comprises of three steps.

\smallskip

\parabold{Step 1.}~We show that for any $\kappa > 0$, $\calE^\kappa_{2,2}$ is an uncontrollable primal set
with $\inf_{(\bbx,\bby)\in \calE^\kappa_{2,2}} C(\bbx,\bby) \ge \kappa^2 (\alpha_1)^2 / 2$.

Recall Lemma~\ref{lem:Cxy} that $C(\bbx,\bby)$ is the variance of a random variable $X$, which is equal to $\expect{(X-\expect{X})^2}$.
For any point $(\bbx,\bby)\in \calE^\kappa_{2,2}$, each of $\bbx,\bby$ has at least two entries larger than $\kappa$. Suppose $x_{j_1},x_{j_2},y_{k_1},y_{k_2} > \kappa$.
Then
\begin{equation}\label{eq:variance}
C(\bbx,\bby) ~\ge~ \sum_{j\in \{j_1,j_2\}} ~\sum_{k\in \{k_1,k_2\}}~ \kappa^2 \left[ \underbrace{\left(A_{jk} - [\bbA\bby]_j - [\bbA\trans\bbx]_k\right) - \expect{X}}_{A'_{jk}} \right]^2.
\end{equation}
Due to Condition (A) and Equation~\eqref{eq:distance-from-trivial}, we are guaranteed that
among the four possible values of $A'_{jk}$, the maximum and minimum values differ by at least $\alpha_1$,
for otherwise we can choose $a_j = [\bbA\bby]_j + \expect{X}$ and $b_k = -[\bbA\trans\bbx]_k$ in \eqref{eq:distance-from-trivial}
to show that the $2\times 2$ sub-matrix of $\bbA$ corresponding to strategies $\{j_1,j_2\}\times \{k_1,k_2\}$ has distance from triviality strictly less than $\alpha_1$.
Consequently, $C(\bbx,\bby) \ge \kappa^2 (\alpha_1/2)^2 \cdot 2 = \kappa^2 (\alpha_1)^2/2$.

\smallskip

\parabold{Step 2.}~Then we apply Theorem~\ref{thm:unavoidable} to show that for any step-size $\ep < \min \left\{\frac{1}{32n^2 m^2}~,~\frac{\kappa^2 (\alpha_1)^2}{2} \right\}$,
there exists a dense subset of points in $\interior(\Delta)$ such that the flow of each such point must eventually reach a point outside $\calE^\kappa_{2,2}$.
Let $(\hbbx,\hbby)$ denote the point outside $\calE^\kappa_{2,2}$.
At $(\hbbx,\hbby)$, one of the two players, which we assume to be Player 1 without loss of generality,
concentrates her game-play on only one strategy, which we denote by strategy $\hj$.
Precisely, for any $j\neq \hj$, $\hat{x}_j \le \kappa$, and hence $\sum_{j\in S_1\setminus\{\hj\}} \hat{x}_j \le (N-1)\kappa$.

\smallskip

\parabold{Step 3.}~Now, we consider the flow starting from $(\hbbx,\hbby)$. Since $x_j^{t+1} / x_j^t \le \exp(2\ep)$ always,
we are sure that for the next $T_1 := \floor{\frac{1}{2\ep}\ln \frac{\delta}{(N-1)\kappa}}$ time steps, $\sum_{j\in S_1\setminus\{\hj\}} x_j^t \le \delta$.
Thus, within this time period, the payoff to strategy $k$ of Player 2 in each time step is $-A_{\hj k}$ plus a perturbation term in the interval $[-2\delta,2\delta]$.
Then by Lemma~\ref{lem:single-minded} part (ii) (a sanity check on the conditions required by the lemma is easy and thus skipped),
if $(N-1)\cdot \ceil{\frac{2}{\ep (\alpha_2-4\delta)}\cdot \ln \frac{N-1}{\delta}} \le T_1$, we are done.
A direct arithmetic shows that this inequality holds if $\kappa \le (\delta/(N-1))^{4(N-1)/(\alpha_2-4\delta)+1}/3$.
\end{pfof}

\begin{pfof}{Theorem~\ref{thm:extremal-io}}
By Theorem~\ref{thm:extremism}, we are guaranteed that there exists a dense set of starting points such that
the flow of each of them must eventually reach the extremal domain with threshold $\delta$.
When we apply Theorem~\ref{thm:extremism}, 
This is our starting point to prove Theorem~\ref{thm:extremal-io}.

\smallskip

\parabold{Step 1.} We show that: for each such starting point $y$, we prove that its flow cannot remain in the extremal domain forever.

First, observe that the extremal domain is the union of small neighbourhoods of extremal points, and
each such neighbourhood is far from the other neighbourhoods.

Suppose the contrary that there exists a starting point such that its flow remains in the extremal domain forever.
Due to the above observation, its flow must remain in the small neighbourhood of \emph{one} extremal point forever.
Suppose the utility values at this extremal point is $(u,-u)$; recall that by assumption, $|u-v| \ge r$.
Since the flow remains near this extremal point, in the long run, the average utility gained by Player 1
must lie in the interval $(1-\delta)u \pm \delta$, which is a subset of the interval $u \pm 2\delta$.

On the other hand, due to a well-known regret bound of MWU (see, for instance,~\cite[Lemma 9]{Cheung2018}),
in the long run, the average utility gained by Player 1 must lie in the interval $v\pm 3\ep$.
When $3\ep + 2\delta \le r/2$, this is incompatible with the interval derived in the previous paragraph,
thus a contradiction.

\smallskip

\parabold{Step 2.} Indeed, we have a stronger version of the result in Step 1. Recall that the complement of the extremal domain is an open set.
Since the MWU update rule is a continuous mapping, it preserves openness,
and hence we not only one point $y$ that visits and leaves the extremal domain,
but we have an open neighbourhood $\calO_1$ around $y$, such that the flow of $\calO_1$ visits and leaves the extremal domain.
Let $\calO'$ denote the flow of $\calO_1$ at the moment when the flow leaves the extremal domain. $\calO'$ is open, and hence has positive Lebesgue measure.

Then we construct a closed subset $\calC_1 \subset \calO_1$ with positive Lebesgue measure. This is easy as follows.
First, we take an arbitrary point $z \in \calO'$. Since $\calO'$ is open, there exists an open ball around $z$ with some radius $r>0$ which is contained in $\calO'$.
Since the MWU update rule is a continuous mapping, its inverse for arbitrary finite time preserves closeness,
the inverse (back to the starting time) of the closed ball around $z$ with radius $r/2$ is a closed set, which we take as $\calC_1$;
$\calC_1\subset \calO_1$ since the closed ball around $z$ with radius $r/2$ is a subset of $\calO'$, and the inverse (back to the starting time) of $\calO'$ is $\calO_1$.

\smallskip

\parabold{Step 3.} Since $\calC_1$ has positive Lebesgue measure,
we can reiterate the arguments in Steps 1 and 2, and construct open set $\calO_2\subset \calC_1$ and closed set $\calC_2\subset \calO_2$ that visit and leave the extremal domain again.

By iterating these arguments repeatedly, we get a sequence of closed (and indeed compact) sets $\calC_1 \supset \calC_2 \supset \calC_3 \supset \cdots$.
By the Cantor's intersection theorem, the intersection of this sequence of closed sets must be non-empty.
Then any point in this intersection is a starting point that visits and leaves the extremal domain infinitely often.
\end{pfof}

\subsection{Classical Rock-Paper-Scissors Game}\label{app:RPS}

The standard Rock-Paper-Scissors game is the zero-sum game $(\bbA,-\bbA)$ with the following payoff matrix:
$
\bbA =\left[\begin{smallmatrix}
0 & -1 & 1\\
1 & 0 & -1\\
-1 & 1 & 0
\end{smallmatrix}\right]
$.
There are two types of $2\times 2$ sub-matrices of $\bbA$. Consider such a sub-matrix which corresponds to strategy set $Q_i\subset \{R,P,S\}$ for Players $i=1,2$.
The first type is when $Q_1 = Q_2$, then the sub-matrix is $\bbA' = \left[\begin{smallmatrix} 0 & -1\\ 1 & 0 \end{smallmatrix}\right]$, which is trivial, i.e., $c(\bbA') = 0$.
The second type is when $|Q_1\cap Q_2| = 1$, then the sub-matrix is $\bbA'' = \left[\begin{smallmatrix} 0 & 1\\ 1 & -1 \end{smallmatrix}\right]$; it is easy to show that $c(\bbA'') = 3/2$.
Due to the existence of the first type of sub-matrices, Theorem~\ref{thm:extremism} cannot be applied.
We provide a separate proof to show that the same conclusion of Theorem~\ref{thm:extremism} holds for this specific game.

\begin{theorem}\label{thm:RPS}
Suppose the underlying game is the standard Rock-Paper-Scissors game. For any $0<\delta < 1/20$,
if both players use MWU with step-size $\ep$ satisfying $\ep < \delta^{22}/(34\times 10^6)$,
then there exists a dense subset of points in $\interior(\Delta)$, such that the flow of each such point must eventually reach a point $(\bbx,\bby)$
where each of  $\bbx,\bby$ has exactly one entry larger than or equal to $1-\delta$.
\end{theorem}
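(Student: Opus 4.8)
The plan is to follow the three-step structure of the proof of Theorem~\ref{thm:extremism}, the only substantive change being in the first step, which must be adapted to the degeneracy of the Rock-Paper-Scissors matrix. Recall that the $2\times 2$ submatrices of $\bbA$ come in exactly two kinds: those whose row-set equals their column-set, which are trivial, and all the others, each of which has distance from triviality equal to $3/2$. Consequently $\calE^{\kappa}_{2,2}$ is \emph{not} uncontrollable here, since $C(\bbx,\bby)\to 0$ as $(\bbx,\bby)$ approaches one of the three ``trivial edges'' --- the faces of $\Delta$ on which both players are supported on a common pair of strategies. I would therefore replace it by the good set $U_{\kappa}$ consisting of all $(\bbx,\bby)\in\interior(\Delta)$ such that each of $\bbx,\bby$ has at least two coordinates larger than $\kappa$ and, in addition, there is \emph{no} strategy $c\in\{R,P,S\}$ with both $x_c\le\kappa$ and $y_c\le\kappa$; this is a primal open set. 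The first step is to show $U_{\kappa}$ is uncontrollable with $\inf_{(\bbx,\bby)\in U_{\kappa}} C(\bbx,\bby)\ge \tfrac98\kappa^2$. Given $(\bbx,\bby)\in U_{\kappa}$, let $\mathcal B_1,\mathcal B_2\subseteq\{R,P,S\}$ be the sets of coordinates exceeding $\kappa$ for the two players; both have size at least $2$, and by the definition of $U_{\kappa}$ one can always choose a pair $\{j_1,j_2\}\subseteq\mathcal B_1$ and a pair $\{k_1,k_2\}\subseteq\mathcal B_2$ with $\{j_1,j_2\}\neq\{k_1,k_2\}$, so the $2\times 2$ submatrix of $\bbA$ on $\{j_1,j_2\}\times\{k_1,k_2\}$ is of the non-trivial kind; feeding this into the variance estimate used in Step~1 of the proof of Theorem~\ref{thm:extremism}, now with $\alpha_1=3/2$, yields the claimed bound.

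Second, I would apply Theorem~\ref{thm:unavoidable} to $U_{\kappa}$, with $\ep<\min\{\tfrac1{2592},\tfrac98\kappa^2\}$, to produce a dense set of starting points whose flow eventually exits $U_{\kappa}$. At the first exit time one of two things has occurred. \emph{Mode (a):} the flow has left $\calE^{\kappa}_{2,2}$, so one player --- say Player~1 --- has all but one coordinate at most $\kappa$, hence $\hat x_{\hj}\ge 1-2\kappa$ for some $\hj$. \emph{Mode (b):} the flow has entered a trivial-edge region, so $\hat x_c,\hat y_c\le\kappa$ for some strategy $c$. In Mode (a) I would copy Step~3 of the proof of Theorem~\ref{thm:extremism} essentially verbatim: since $x_j^{t+1}/x_j^t\le e^{2\ep}$ always, Player~1 keeps the mass on strategies other than $\hj$ below $\delta$ for $T_1:=\floor{\tfrac1{2\ep}\ln\tfrac{\delta}{2\kappa}}$ steps, during which Player~2 faces the fixed $1\times 3$ subgame induced by $\hj$, whose payoff vector is a permutation of $(-1,0,1)$, so Lemma~\ref{lem:single-minded} applies with $\alpha_2=1$ and drives Player~2 onto a single strategy within $2\ceil{\tfrac2{\ep(\alpha_2-4\delta)}\ln\tfrac2\delta}$ steps, which is at most $T_1$ once $\kappa\le(\delta/2)^{8/(\alpha_2-4\delta)+1}/3$. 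Using $\delta<1/20$, so that $\alpha_2-4\delta>4/5$ and the exponent is at most $11$, the choice $\kappa=(\delta/2)^{11}/3$ works, and the resulting constraint $\ep<\tfrac98\kappa^2$ is exactly the bound claimed in the theorem (up to rounding of the numerical constant).

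The genuinely new work is Mode (b). Here $\hat x_c,\hat y_c\le\kappa$, so both players place mass at least $1-\kappa$ on the complementary pair $\{a,b\}$, on which $\bbA$ restricts to the trivial game $\left[\begin{smallmatrix}0&-1\\1&0\end{smallmatrix}\right]$; a short check shows that one of the two pair-strategies --- call it $b$ --- is dominant for \emph{both} players there. While the flow remains in this trivial-edge region, $x_a/x_b$ and $y_a/y_b$ each shrink by the factor $e^{-\ep(1-3\kappa)}$ per step (the payoff of $b$ over $a$ is at least $1-3\kappa$ for either player), and since the flow entered from $\calE^{\kappa}_{2,2}$ one has $x_a,x_b,y_a,y_b>\kappa$ there; hence within $\calO(\tfrac1\ep\ln\tfrac1\kappa)$ steps either $x_a$ or $y_a$ falls below $\kappa$ --- putting us back in Mode (a) with distinguished strategy $b$ --- \emph{unless} the flow first leaves the region because the third coordinate $x_c$ or $y_c$ climbs back above $\kappa$. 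The cyclic structure of RPS forces the latter to happen eventually, since the third strategy always beats the dominant strategy $b$, so the flow re-enters $U_{\kappa}$ and Theorem~\ref{thm:unavoidable} can be reapplied. The main obstacle is to rule out an infinite ping-pong in which the flow cycles through the three trivial-edge regions --- whose dominant strategies cycle through all three pure strategies --- forever, never reaching Mode (a) or the extremal domain. To close this, one must show that each passage through a trivial-edge region strictly increases the mass that \emph{both} players put on the dominant strategy of that pair (this is where the known convergence-to-the-boundary behaviour of MWU in zero-sum games is invoked), so that after boundedly many passages the flow is within $\delta$ of a pure profile, and hence in the extremal domain; carrying this out while keeping the accumulated constraints on $\kappa$ and $\ep$ compatible with the stated bound is the crux of the argument.
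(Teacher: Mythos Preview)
Your good set $U_\kappa$ coincides with the paper's $\calE^\kappa$ (for a $3\times 3$ game, your ``no common small coordinate'' condition is equivalent to the paper's requirement that one can choose $2$-element subsets $Q'_1\subseteq Q_1$, $Q'_2\subseteq Q_2$ with $|Q'_1\cap Q'_2|=1$), and your Step~1 bound $C\ge\tfrac98\kappa^2$, the application of Theorem~\ref{thm:unavoidable}, and your Mode~(a) analysis all match the paper.

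The gap is in Mode~(b). You try to reduce it to Mode~(a) by waiting for $x_a$ or $y_a$ to fall below $\kappa$, but then acknowledge that the third coordinate may rise above $\kappa$ first, at which point you propose to ``reapply Theorem~\ref{thm:unavoidable}'' and invoke convergence-to-boundary to rule out infinite ping-pong. This does not close. Theorem~\ref{thm:unavoidable} yields a dense set of starting points whose flows exit $U_\kappa$; it says nothing about whether a \emph{given} trajectory that has re-entered $U_\kappa$ will exit again, so it cannot be reapplied along a single orbit. And convergence to the boundary does not by itself force \emph{simultaneous} extremism of both players. You yourself flag this as ``the crux of the argument'' without supplying it.

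The paper handles Mode~(b) directly, without any detour through Mode~(a) and without tracking the $\kappa$-scale at all. The key observation is that the dominant strategy $b$ in the trivial $\{a,b\}$-subgame is the same for \emph{both} players, so both can be driven to $b$ simultaneously. Since $\hat x_c,\hat y_c\le\kappa$, the crude bound $x^{t+1}/x^t\le e^{2\ep}$ gives $x_c^t,y_c^t\le\delta/2$ for the next $T_1=\lfloor\tfrac{1}{2\ep}\ln\tfrac{\delta}{2\kappa}\rfloor$ steps, and during this window the payoff advantage of $b$ over $a$ is at least $1-2\delta$ for each player. Hence both ratios $x_b/x_a$ and $y_b/y_a$ grow by $e^{\ep(1-2\delta)}$ per step. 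If neither player is yet extremal then $\hat x_b,\hat y_b\ge\delta/2$ (since $\hat x_a,\hat y_a<1-\delta$ and $\hat x_c,\hat y_c\le\delta/2$), so after at most $2\lceil\tfrac{1}{\ep(1-2\delta)}\ln\tfrac{4}{\delta^2}\rceil\le T_1$ steps both satisfy $x_b,y_b\ge 1-\delta$; the mixed sub-cases take one extra round of the same estimate. The whole argument lives at the $\delta$-scale: whether $x_c$ or $y_c$ crosses $\kappa$ is irrelevant, and extremism is reached before either can reach $\delta/2$, so the cyclic ``$c$ beats $b$'' effect that drives your ping-pong never enters. Choosing $\kappa\le\delta^{10}/2845$ makes the time budget work here, and together with the Mode~(a) constraint $\kappa\le\delta^{11}/6144$ yields the stated bound on $\ep$.
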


\begin{proof}
To start, we define a new family of primal set $\calE^\kappa$. To define it, let $(\bbx,\bby)$ be a point in $\interior(\Delta)$,
and let $Q_i$ denote the set of strategies of Player 1 with probability density larger than $\kappa$.
Then $(\bbx,\bby) \in \calE^\kappa$ if and only if $|Q_1|,|Q_2|\ge 2$, and furthermore, there exists $Q'_1\subset Q_1$, $Q'_2\subset Q_2$ such that
$|Q'_1|,|Q'_2|=2$ and $|Q'_1\cap Q'_2| = 1$.

The definition of $\calE^\kappa$ deliberately avoids us from deriving a lower bound of $C(\bbx,\bby)$ in the manner of~\eqref{eq:variance} when $\{j_1,j_2\} = \{k_1,k_2\}$,
which corresponds to a trivial sub-matrix.
Then by following Step 1 in the proof of Theorem~\ref{thm:extremism}, we have $\inf_{\bbx,\bby\in \calE^\kappa} \ge \kappa^2 c(\bbA'')^2/2 = 9\kappa^2 / 8$.
By following Step 2 in the proof of Theorem~\ref{thm:extremism}, when $\ep < \min \left\{ \frac{1}{32n^2 m^2}~,~\frac{9\kappa^2}{8} \right\}$,
there exists a dense set of points in $\interior(\Delta)$ such that the flow of each such point must reach a point $(\hbbx,\hbby)$ outside $\calE^\kappa$.

% when the flow starts from a point $(\hbbx,\hbby)$ which is outside $\calE^\kappa$.
Below, we assume the time is reset to zero with starting point $(\hbbx,\hbby)$.
We proceed on a case analysis below.

\medskip

\parabold{Case 1: either $|Q_1|=1$ or $|Q_2|=1$.} For this case, we can simply follow Step 3 in the proof of Theorem~\ref{thm:extremism}.
$\kappa \le \delta^{11}/6144$ suffices.

\medskip

\parabold{Case 2: $Q_1=Q_2$, and $|Q_1|=2$.} Without loss of generality, we assume $Q_1=Q_2=\{R,P\}$.
In the sub-game corresponding to $Q_1\times Q_2$, each player has a strictly dominant strategy, namely $P$.
Intuitively, the probability of choosing strategy $P$ must strictly increase with time (when we ignore the tiny effect of strategy $S$).

More formally, starting from time zero, for the next $T_1 := \floor{\frac{1}{2\ep}\ln \frac{\delta}{2\kappa}}$ time steps, $x_S^t,y_S^t \le \delta/2$,
and hence $x_P^t+x_R^t,y_P^t+y_R^t \ge 1-\delta/2$. Then
\begin{align*}
&{\small(\text{the payoff to strategy }P\text{ of Player 1 in round }t) - (\text{the payoff to strategy }R\text{ of Player 1 in round }t)}\\
=~&\left[y_P^t \cdot 0 + y_R^t \cdot 1 + y_S^t \cdot (-1)\right] - \left[y_P^t \cdot (-1) + y_R^t \cdot 0 + y_S^t \cdot 1\right]\\
\ge~&y_P^t + y_R^t -\delta ~\ge~ 1-2\delta.
\end{align*}
Thus, $\frac{x_P^{t+1}}{x_R^{t+1}} ~\ge~ \frac{x_P^t}{x_R^t}\cdot \exp \left( \ep (1-2\delta) \right)$, and hence
\begin{equation}\label{eq:MWU-elem-prop}
%x_R^t ~\le~ \frac{1}{\hat{x}_P} \cdot \exp \left( -\ep (1-2\delta)t \right)~~~~~~\text{and}~~~~~~
\frac{x_P^t}{x_R^t} ~\ge~ \hat{x}_P \cdot \exp \left( \ep (1-2\delta)t \right).
\end{equation}
The above inequality holds also when all $x$'s are replaced by $y$'s.

\begin{itemize}%[leftmargin=0.2in]
\item \bold{Case 2(a): at $(\hbbx,\hbby)$, each of the two players have one strategy with probability larger than or equal to $1-\delta$.} Then we are done.

\item \bold{Case 2(b): at $(\hbbx,\hbby)$, each of the two players have all strategies with probability less than $1-\delta$.}
Then we know that $\hat{x}_P,\hat{y}_P\ge 1-(1-\delta)-\delta/2 = \delta/2$.
By~\eqref{eq:MWU-elem-prop}, when $\exp \left( \ep (1-2\delta)t \right)\ge 4/\delta^2$, we have $x^t_P/x^t_R , y^t_P/y^t_R \ge 2/\delta$.
And since we still have $x^t_S,y^t_S\le \delta/2$, it is easy to show that $x^t_P,y^t_P \ge 1-\delta$.

\item \bold{Case 2(c): at $(\hbbx,\hbby)$, exactly one of the two players have one strategy with probability larger than or equal to $1-\delta$.}
Without loss of generality, we assume the player is Player 2. Then we know that $\hat{x}_P,\hat{x}_R\ge \delta/2$.
Similar to the argument for Case 2(b), when $\exp \left( \ep (1-2\delta)t \right)\ge 4/\delta^2$, we have $x^t_P \ge 1-\delta$.

If at this time $t$, we have either $y^t_P\ge 1-\delta$ or $y^t_R\ge 1-\delta$, we are done.
Otherwise, we have $y^t_P\ge \delta/2$. Thus, after another period of time $t'$ such that $\exp \left( \ep (1-2\delta)t' \right)\ge 4/\delta^2$,
we have $y^{t+t'}_P\ge 1-\delta$, while $x^{t+t'}_P\ge 1-\delta$ still.
\end{itemize}

For the arguments for Cases 2(b),(c) to hold, we need
\[
2\cdot \ceil{\frac{1}{(1-2\delta)\ep}\ln \frac{4}{\delta^2}} ~~\le~~ T_1,
\]
A direct arithmetic shows that $\kappa \le \delta^{10}/2845$ suffices.
\end{proof}
\section{Continuous Analogue of OMWU in General-Sum Games}\label{app:cont-analogue-GS-games}

In equations~\eqref{eq:recur-p} and~\eqref{eq:recur-q}, 
observe that each $\difft{p_j}$ is expressed as an affine combination of various $\difft{q_k}$,
while each $\difft{q_k}$ is expressed as an affine combination of various $\difft{p_j}$.
Thus, we may rewrite all these expressions into a matrix-form differential equation. Let $\bbv(\bbp,\bbq)$ denote the following vector in $\rr^{n+m}$:
\[
\bbv(\bbp,\bbq) = ([\bbA \cdot \bby(\bbq)]_1,\cdots,[\bbA \cdot \bby(\bbq)]_n~,~[\bbB\trans \cdot \bbx(\bbp)]_1,\cdots,[\bbB\trans \cdot \bbx(\bbp)]_m)\trans~,
\]
and let $\bbM(\bbp,\bbq)$ denote the $(S_1\cup S_2)\times (S_1\cup S_2)$ matrix $\left[\begin{smallmatrix} \mathbf{0} & \bbM^1\\ \bbM^2 & \mathbf{0} \end{smallmatrix}\right]$,
where $\bbM^1\equiv \bbM^1(\bbp,\bbq)$ is a $S_1\times S_2$ sub-matrix and $\bbM^2 \equiv \bbM^2(\bbp,\bbq)$ is a $S_2\times S_1$ sub-matrix defined as below:
\[
M^1_{jk} = y_k(\bbq) \cdot (A_{jk} - [\bbA \cdot \bby(\bbq)]_j)~~~~\text{and}~~~~M^2_{kj} = x_j(\bbp) \cdot (B_{jk} - [\bbB\trans \cdot \bbx(\bbp)]_k).
\]
Then we can rewrite the recurrence system~\eqref{eq:recur-p} and~\eqref{eq:recur-q} as
$\left( \difft{\bbp} ~,~ \difft{\bbq} \right)\trans = \bbv(\bbp,\bbq) + \ep \cdot \bbM(\bbp,\bbq) \cdot \left( \difft{\bbp} ~,~ \difft{\bbq} \right)\trans$.
This can be easily solved to a standard (non-recurring) system of ODE:
\[
\left( \difft{\bbp} ~,~ \difft{\bbq} \right)\trans ~=~ \left( \bbI - \ep \cdot \bbM(\bbp,\bbq) \right)^{-1} \cdot \bbv(\bbp,\bbq),
\]
if the inverse of the matrix $(\bbI - \ep \cdot \bbM(\bbp,\bbq))$ exists.

We proceed by using the following identity: if a square matrix $\bbR$ satisfies $\sup_{\|\bbz\|=1} \|\bbR \bbz\| < 1$, then $(\bbI - \bbR)^{-1} = \bbI + \sum_{\ell=1}^{\infty} \bbR^\ell$.
In our case, we desire $\sup_{\|\bbz\|=1} \|\ep \cdot \bbM(\bbp,\bbq) \cdot \bbz\| ~<~ 1$.
Observe that for each row of $\bbM(\bbp,\bbq)$, its $\ell_2$-norm is at most $2\|\bbx\|$ or $2\|\bby\|$, which are upper bounded by $2$.
Thus, each entry in $\ep \cdot \bbM(\bbp,\bbq) \cdot \bbz$ is absolutely bounded by $2\ep$, and hence $\|\ep \cdot \bbM(\bbp,\bbq) \cdot \bbz\| \le 2\ep \sqrt{n+m}$.
Consequently, $\ep < 1/(2\sqrt{n+m})$ suffices to guarantee that the inverse of $(\bbI - \ep \cdot \bbM(\bbp,\bbq))$ exists, and the identity mentioned above holds for its inverse:
\[
\left( \difft{\bbp} ~,~ \difft{\bbq} \right)\trans ~=~ \left( \bbI + \sum_{\ell=1}^\infty \ep^\ell \cdot \bbM(\bbp,\bbq)^\ell \right) \cdot \bbv(\bbp,\bbq).
\]

\section{Volume Analysis of Discrete-Time OMWU}\label{app:vol-analysis-OptMD}

Recall from~\cite{CP2019} that the volume integrand for MWU is
\[
1 ~+~ C_{(\bbA,\bbB)}(\bbp,\bbq) \cdot \ep^2 ~+~ \calO(\ep^4),
\]
while by~\eqref{eq:integrand-OptMD}, the volume integrand for OMWU is
\[
1 ~-~ C_{(\bbA,\bbB)}(\bbp,\bbq) \cdot \ep^2 ~+~ \calO(\ep^3).
\]
By~\eqref{eq:calE-uncontrol}, within $\tGinv(\calE^\delta_{2,2})$, $C_{(\bbA,-\bbA)}(\bbp,\bbq) \ge \delta^2 (\alpha_1)^2 / 2$,
thus $C_{(\bbA,\bbA)}(\bbp,\bbq) ~=~ -C_{(\bbA,-\bbA)}(\bbp,\bbq)$ $\le -\delta^2 (\alpha_1)^2 / 2$.
Therefore, when $\ep$ is sufficiently small, the volume integrands for MWU in coordination game and OMWU in zero-sum game are both at most
$1- \ep^2 \delta^2 (\alpha_1)^2 / 4$.

\begin{theorem}\label{thm:OMD-volume-contract-zerosum-discrete}
Suppose the underlying game is a non-trivial zero-sum game $(\bbA,-\bbA)$ and the parameter $\alpha_1$ as defined in Theorem~\ref{thm:extremism} is strictly positive.
For any $1/2 > \delta > 0$, for any sufficiently small $0<\ep\le \bar{\ep}$ where the upper bound depends on $\delta$,
and for any set $S=S(0)\subset \tGinv(\calEd_{2,2})$ in the dual space, 
if $S$ is evolved by the OMWU update rule~\eqref{eq:OptMD} and if its flow remains a subset of $\tGinv(\calEd_{2,2})$ for all $t\le T-1$, then
$\vol(S(T)) ~\le~ \left( 1 - \frac{\ep^2 \delta^2 (\alpha_1)^2}{4} \right)^T \cdot \vol(S).$
\end{theorem}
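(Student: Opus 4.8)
The plan is to run the mirror image of the argument behind Theorem~\ref{thm:OMD-volume-expand-coordination-discrete}: the identity $C_{(\bbA,\bbA)}=-C_{(\bbA,-\bbA)}$ from \eqref{eq:coordination-negative-ZS}, together with the fact that $C_{(\bbA,-\bbA)}$ is a non-negative variance (Lemma~\ref{lem:Cxy}, Part~2), means that on a zero-sum game the OMWU volume integrand \eqref{eq:integrand-OptMD} sits \emph{below} $1$, so volume contracts rather than expands. The whole statement then reduces to a per-step multiplicative bound on $\vol(S(t))$, iterated $T$ times.

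First I would pin down the volume integrand for discrete-time OMWU. By \eqref{eq:integrand-OptMD}, once \eqref{eq:OptMD} is rewritten (as in Section~\ref{sect:volume-contract}) so that its right-hand side depends only on the current iterate $(\bbp^t,\bbq^t)$ up to an $\calO(\ep^3)$ remainder, the determinant appearing in the discrete Liouville formula \eqref{eq:Liouville-discrete} equals $1 - C_{(\bbA,-\bbA)}(\bbp,\bbq)\,\ep^2 + \calO(\ep^3)$. Two technical points need checking, and both are prepared earlier: (i) for $\ep$ small the one-step map is injective, so \eqref{eq:Liouville-discrete} is legitimately applicable (the analogue, for OMWU, of Lemma~\ref{lem:Cxy}, Part~1); and (ii) the $\calO(\ep^3)$ remainder is uniform over $\tGinv(\calEd_{2,2})$ and genuinely absorbs the residual dependence of \eqref{eq:OptMD} on $(\bbp^{t-1},\bbq^{t-1})$ --- this is the content of the online-Euler-discretization identity \eqref{eq:online-Euler-error} and Proposition~\ref{prop:OptMD-discretization}, using that each dual coordinate moves by only $\calO(\ep)$ per step, so $(\bbp^{t-1},\bbq^{t-1})$ and $(\bbp^t,\bbq^t)$ differ by $\calO(\ep)$ while the primal images $\bbx,\bby$ range over the compact simplex.

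Next I would invoke the variance lower bound from Step~1 of the proof of Theorem~\ref{thm:extremism}, valid for every $\kappa>0$: $\inf_{(\bbx,\bby)\in\calE^\kappa_{2,2}} C_{(\bbA,-\bbA)}(\bbx,\bby)\ge \kappa^2(\alpha_1)^2/2$, which uses only assumption (A), i.e.\ $\alpha_1>0$. Taking $\kappa=\delta$ and pulling back through $\tG$ gives $\inf_{(\bbp,\bbq)\in\tGinv(\calEd_{2,2})} C_{(\bbA,-\bbA)}(\bbp,\bbq)\ge \delta^2(\alpha_1)^2/2$. Hence on $\tGinv(\calEd_{2,2})$ the volume integrand is at most $1 - \delta^2(\alpha_1)^2\ep^2/2 + \calO(\ep^3)$, so there is $\bar\ep>0$ depending only on $\delta$ (and the fixed data $n,m,\bbA$) such that for all $0<\ep\le\bar\ep$ the $\calO(\ep^3)$ term is dominated by $\delta^2(\alpha_1)^2\ep^2/4$ and the integrand is $\le 1-\delta^2(\alpha_1)^2\ep^2/4<1$ at every point of $\tGinv(\calEd_{2,2})$; I would fold the injectivity threshold for $\ep$ into the same $\bar\ep$.

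Finally, the iteration. By hypothesis $S(t)\subset\tGinv(\calEd_{2,2})$ for each $0\le t\le T-1$, so \eqref{eq:Liouville-discrete} with the integrand bound above gives $\vol(S(t+1))\le \bigl(1-\delta^2(\alpha_1)^2\ep^2/4\bigr)\vol(S(t))$ for every such $t$ (the bound being vacuous unless $\vol(S)<\infty$); multiplying these $T$ inequalities yields $\vol(S(T))\le \bigl(1-\delta^2(\alpha_1)^2\ep^2/4\bigr)^T\vol(S)$, as claimed. I expect the only genuinely delicate step to be point (ii) above --- arguing that the memory term of \eqref{eq:OptMD} feeds only into the $\calO(\ep^3)$ remainder of the volume integrand, uniformly over the region in question --- while everything else is bookkeeping on top of the already-established \eqref{eq:integrand-OptMD} and \eqref{eq:calE-uncontrol}.
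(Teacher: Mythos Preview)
Your proposal is correct and follows essentially the same route as the paper: invoke the OMWU volume integrand \eqref{eq:integrand-OptMD}, plug in the variance lower bound \eqref{eq:calE-uncontrol} to get the per-step integrand bound $1-\ep^2\delta^2(\alpha_1)^2/4$ on $\tGinv(\calEd_{2,2})$ for small $\ep$, and iterate via \eqref{eq:Liouville-discrete}. The paper's argument is terser and does not spell out the injectivity and uniform-$\calO(\ep^3)$ points you flag in (i)--(ii), but otherwise the two proofs coincide.
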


\begin{theorem}\label{thm:MWU-volume-contract-coordination-discrete}
Suppose the underlying game is a non-trivial coordination game $(\bbA,\bbA)$ and the parameter $\alpha_1$ as defined in Theorem~\ref{thm:extremism} is strictly positive.
For any $1/2 > \delta > 0$, for any sufficiently small $0<\ep\le \bar{\ep}$ where the upper bound depends on $\delta$,
and for any set $S=S(0)\subset \tGinv(\calEd_{2,2})$ in the dual space, 
if $S$ is evolved by the MWU update rule~\eqref{eq:MWU} and if its flow remains a subset of $\tGinv(\calEd_{2,2})$ for all $t\le T-1$, then
$\vol(S(T)) ~\le~ \left( 1 - \frac{\ep^2 \delta^2 (\alpha_1)^2}{4} \right)^T \cdot \vol(S).$
\end{theorem}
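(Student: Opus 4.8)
The plan is to mirror, on the contraction side, the volume-expansion analyses already carried out in Section~\ref{sect:unavoidability} and in~\cite{CP2019}: track an \emph{upper} bound on the MWU volume integrand rather than a lower bound. Only three ingredients are needed — the discrete Liouville formula~\eqref{eq:Liouville-discrete}, the explicit shape $1 + C_{(\bbA,\bbB)}(\bbp,\bbq)\ep^2 + \calO(\ep^4)$ of the MWU volume integrand established in~\cite{CP2019}, and the sign/magnitude control on $C_{(\bbA,\bbA)}$ inside $\tGinv(\calEd_{2,2})$ coming from Equation~\eqref{eq:coordination-negative-ZS} together with Step~1 of the proof of Theorem~\ref{thm:extremism}.

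First I would fix $\ep \le 1/4$, so that the MWU update rule~\eqref{eq:MWU} in the dual space is injective (Lemma~\ref{lem:Cxy} Part~1) and formula~\eqref{eq:Liouville-discrete} applies, giving $\vol(S(t+1)) = \int_{(\bbp,\bbq)\in S(t)} \det(\bbI + \ep\,\bbJ(\bbp,\bbq))\,\mathsf{d}V$. Next, I recall from~\cite{CP2019} that $\det(\bbI + \ep\,\bbJ(\bbp,\bbq)) = 1 + C_{(\bbA,\bbB)}(\bbp,\bbq)\ep^2 + \calO(\ep^4)$, where — crucially — the coefficient of the $\calO(\ep^4)$ remainder is a polynomial in the primal coordinates $x_j(\bbp), y_k(\bbq)\in[0,1]$ and the payoff entries in $[-1,1]$, hence \emph{uniformly} bounded by a constant depending only on $n,m$. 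This is exactly the upper-bound counterpart of Lemma~\ref{lem:Cxy} Part~3 and is proved the same way; it yields that for all sufficiently small $\ep$ (say $\ep \le 1/(32n^2m^2)$), $\det(\bbI + \ep\,\bbJ(\bbp,\bbq)) \le 1 + \bigl(C_{(\bbA,\bbB)}(\bbp,\bbq) + \ep\bigr)\ep^2$ at every dual point.

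Then I specialize to the coordination game $(\bbA,\bbA)$. By Equation~\eqref{eq:coordination-negative-ZS}, $C_{(\bbA,\bbA)}(\bbp,\bbq) = -C_{(\bbA,-\bbA)}(\bbp,\bbq)$, and by Step~1 of the proof of Theorem~\ref{thm:extremism} (Equation~\eqref{eq:calE-uncontrol} with $\kappa = \delta$, using that $C$ depends only on the primal point, which lies in $\calEd_{2,2}$), every $(\bbp,\bbq) \in \tGinv(\calEd_{2,2})$ satisfies $C_{(\bbA,-\bbA)}(\bbp,\bbq) \ge \delta^2(\alpha_1)^2/2$, hence $C_{(\bbA,\bbA)}(\bbp,\bbq) \le -\delta^2(\alpha_1)^2/2$ throughout $\tGinv(\calEd_{2,2})$. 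Plugging this into the integrand bound and setting $\bar{\ep} := \min\{1/(32n^2m^2),\, \delta^2(\alpha_1)^2/4\}$, I get that for all $0 < \ep \le \bar{\ep}$ and every $(\bbp,\bbq) \in \tGinv(\calEd_{2,2})$, $\det(\bbI + \ep\,\bbJ(\bbp,\bbq)) \le 1 - \delta^2(\alpha_1)^2\ep^2/4$.

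Finally I run the iteration: since the hypothesis gives $S(t) \subset \tGinv(\calEd_{2,2})$ for all $t \le T-1$, the integrand bound applies on $S(t)$ for each such $t$, so $\vol(S(t+1)) \le \bigl(1 - \delta^2(\alpha_1)^2\ep^2/4\bigr)\,\vol(S(t))$; composing these inequalities for $t = 0,1,\ldots,T-1$ gives $\vol(S(T)) \le \bigl(1 - \delta^2(\alpha_1)^2\ep^2/4\bigr)^T\,\vol(S(0))$, as claimed. The argument is essentially the contraction-side shadow of Lemma~\ref{lem:key-in-dual} and of Theorem~\ref{thm:OMD-volume-expand-coordination-discrete}; the only slightly delicate point, and the one I would be most careful about, is justifying the \emph{uniform} upper bound on the $\calO(\ep^4)$ remainder over the unbounded set $\tGinv(\calEd_{2,2})$ (equivalently, over all of $\Delta$, including near the simplex boundary), since that is precisely what lets us downgrade the $-\delta^2(\alpha_1)^2/2$ coefficient to $-\delta^2(\alpha_1)^2/4$ with a single choice of $\bar{\ep}$.
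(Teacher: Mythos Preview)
Your proposal is correct and follows essentially the same approach as the paper. The paper's own argument (the paragraph immediately preceding Theorems~\ref{thm:OMD-volume-contract-zerosum-discrete} and~\ref{thm:MWU-volume-contract-coordination-discrete} in Appendix~\ref{app:vol-analysis-OptMD}) is exactly the skeleton you describe: combine the MWU volume-integrand expansion $1 + C_{(\bbA,\bbB)}\ep^2 + \calO(\ep^4)$ from~\cite{CP2019}, the sign flip~\eqref{eq:coordination-negative-ZS}, and the lower bound~\eqref{eq:calE-uncontrol} on $C_{(\bbA,-\bbA)}$ over $\calEd_{2,2}$, then iterate via~\eqref{eq:Liouville-discrete}; your write-up is in fact more explicit than the paper's about injectivity, the uniform control of the $\calO(\ep^4)$ remainder, and the resulting choice of $\bar\ep$.
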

\end{document}